\newtheorem{theorem}{Theorem}
\newtheorem{lemma}{Lemma}
\newtheorem{corollary}{Corollary}
\newtheorem{assumption}{Assumption}
\newenvironment{customassump}[1]
  {\innercustomassump}
  {\endinnercustomassump}
\theoremstyle{definition}
\theoremstyle{remark}
\newcommand{\Lim}[1]{\raisebox{0.5ex}{\scalebox{0.8}{$\displaystyle \lim_{#1}\;$}}}
\newcommand{\E}{\mathbb{E}}
\title{Difference-in-Discontinuities: Estimation, Inference and Validity Tests}
\author{Pedro Picchetti\thanks{Pontificia Universidad Católica de Chile.} \\ \small \href{mailto:pedrop3@al.insper.edu.br}{pedro.picchetti@uc.cl} \normalsize
    \and Cristine C. X. Pinto\thanks{INSPER Institute of Research and Education.}  \\ \small \href{mailto:cristinecxp@insper.edu.br}{cristinecxp@insper.edu.br} \normalsize
    \and Stéphanie T. Shinoki\footnotemark[2] \\ \small \href{mailto:stephaniets@al.insper.edu.br}{stephaniets@al.insper.edu.br} \normalsize}
\begin{document}
\maketitle
\begin{abstract}
    This paper provides a formal econometric framework behind the newly developed difference-in-discontinuities design (DiDC). Despite its increasing use in applied research, there are currently limited studies of its properties. We formalize the theory behind the difference-in-discontinuity approach by stating the identification assumptions, proposing a nonparametric estimator, and deriving its asymptotic properties. We also provide comprehensive tests for one of the identification assumption of the DiDC and sensitivity analysis methods that allow researchers to evaluate the robustness of DiDC estimates under violations of the identifying assumptions. Monte Carlo simulation studies show that the estimators have desirable finite-sample properties. Finally, we revisit \cite{grembi2016fiscal}, which studies the effects of relaxing fiscal rules on public finance outcomes. Our results show that most of the qualitative takeaways of the original work are robust to time-varying confounding effects.
\end{abstract}
\small
{\bf Keywords:} Difference-in-discontinuities, regression discontinuity design, difference-in-differences
\normalsize

%%%%%%%%%%%%%%% SECTION %%%%%%%%%%%%%%%%%%%
%%%%%%%%%%%%% Introduction %%%%%%%%%%%%%%%%
%%%%%%%%%%%%%%%%%%%%%%%%%%%%%%%%%%%%%%%%%%%
\section{Introduction}

The difference-in-discontinuities design (DiDC) aims to address the limitations of both regression discontinuity designs (RDD) and difference-in-differences designs (DiD) by combining temporal and discontinuity-based sources of variation from the data-generating process (DGP). \cite{grembi2016fiscal} and \cite{eggers2018regression} proposed this quasi-experimental approach, using differences between the pre- and post-treatment periods around a threshold for both the treated and the untreated groups. Despite its increasing use in applied microeconomics \citep*{azuaga2017violencia, chicoine2017homicides, garcia2022plant, albright2024hidden}, the econometric theory of DiDC (identification, estimation, and asymptotic behavior) remains limited. \cite{leventer2025correcting} is one of the few papers that study the identification of regression discontinuity in the case of violation of the continuity of potential outcomes assumption and multiple periods. Their estimator can be an alternative to the DiDC. 

The method offers more flexibility over the standard RDD in some settings and can often be used in contexts where neither the RDD nor the DiD is applicable. In particular, the DiDC can handle cases where the control and treatment groups differ significantly and do not satisfy the parallel trends assumption of the DiD, or when multiple time-invariant confounders are at or near the threshold in an RDD setting. Additionally, by incorporating more information into the estimation, the DiDC eliminates bias in RDD estimates, providing more accurate and reliable estimates of treatment effects under certain assumptions.

The advantages of the DiDC have been explored in several applied microeconomic studies. For example, \cite{azuaga2017violencia} use the DiDC to analyze the impact of introducing legislation on domestic violence in Brazil. \cite{butts2023geographic} explore the use of Diff-in-Disc in geographic settings, \cite{chicoine2017homicides} investigates the expiration of the Assault Weapon Ban (AWB) by comparing municipalities in which the incumbent mayoral party wins a close election with those where the incumbent is defeated, and \cite{albright2024hidden} isolates the causal effects of algorithmic recommendations on decision-makers using DiDC. These real-life examples demonstrate how effective this approach can be in various research settings, highlighting its value as a practical analysis method.

In this paper, we develop the econometric theory for the Difference-in-Discontinuity design. While \cite{galindo2021fuzzy} develop an identification theory for the \textit{fuzzy} difference-in-discontinuity design based on the difference of RDD estimations, this work will focus on the sharp design, developing identification, inference and validity tests for the DiDC design. \cite{leventer2025correcting} also provide an alternative method to the DiDC. In fact, they use the time to address violations of the RDD's continuity assumption (e.g., the confounding effect at the threshold). However, they do not specifically study the DiDC estimator and compare it to the traditional RDD.   

Drawing from standard assumptions in cross-sectional RDD and assumptions specific to the DiDC framework, we establish conditions for reliable identification. These new assumptions, specific to the DiDC design, constrain how confounding effects behave around the threshold and over time. 

We show that the parameter of interest can be recovered using local polynomial estimation of the differences in the outcomes employing the methodology proposed by \cite{calonico2014robust}. We derive the asymptotic properties of the DiDC estimator and highlight scenarios in which its asymptotic bias can be smaller than that of the RDD. In addition, we introduce simple tests to assess the testable implications of the identifying assumptions. We explore the conditions under which DiDC mitigates bias more effectively than RDD, offering practical guidance for unbiased treatment-effect estimation. We also provide partial identification results for settings in which the identifying assumptions fail to hold.

Through Monte Carlo simulations, we evaluate the finite-sample properties of the estimator, comparing its performance to that of the local linear RDD estimator proposed by \cite{calonico2014robust} and the nonparametric DiD regression estimator proposed by \cite{santanna2020doubly}. By comparing DiDC's performance with DiD and RDD methods, we identify the scenarios where DiDC outperforms scenarios where DiDC outperforms RDDs and contexts for each method. Finally, we apply our estimator to the study by \cite{grembi2016fiscal} on the impact of fiscal rules on municipal deficits in Italy.

This work is closely connected to \cite{calonico2014robust}, whose contributions have been instrumental in robust nonparametric RDD estimation and whose estimation methods are used for our DiDC approach. \cite{frolich2019impact} discusses the potential intersection between RDD and DiD, but does not provide a formal derivation of the econometric properties, offering only high-level remarks on identification and estimation. This work also aligns with the broader literature on the intersection of RDD and panel data. \cite{pettersson2012does} explore settings where RDD is combined with fixed effects to address small sample issues and violations of the continuous support assumption. \cite{lemieux2008incentive}, use a first-difference RD approach to eliminate individual-specific fixed effects by capitalizing on the longitudinal nature of the Finnish Census data. Lastly, \cite{cellini2010value} introduce "dynamic RD" models, accommodating scenarios with multiple treatment opportunities and examining the dynamics of treatment effects.

The remainder of this article is organized as follows. Section \ref{sec:Identification} presents the DiDC as the discontinuity of differences in a potential outcomes model and shows the main identification results, along with the necessary assumptions. Estimation procedures for the treatment effect in the \textit{sharp} setting are presented in Section \ref{sec:Estimation_sharp}, along with the derivation of large-sample properties, optimal bandwidths and robust confidence intervals. In Section \ref{sec:Validity_Tests} we present tests for 2 of the identifying assumptions and in Section \ref{sec:partial} we consider the partial identification of treatment effects when the assumptions described in Section \ref{sec:Identification} are violated. Monte Carlo simulations are conducted in Section \ref{sec:Simulations} to examine the properties of the estimators. Section \ref{sec:Empirical_Illustration} provides an empirical illustration and Section \ref{sec:Conclusion} concludes. In the \hyperref[app:estimation_setup]{Appendix} we provide detailed notation, proofs and other methodological results.

%%%%%%%%%%%%%%%% SECTION %%%%%%%%%%%%%%%%%%%
%%% Dif-in-Disc as RDD of the Differences %%
%%%%%%%%%%%%%%%%%%%%%%%%%%%%%%%%%%%%%%%%%%%%
\section{Identification}\label{sec:Identification}

%%%%%%%%%%%%%%% SUBSECTION %%%%%%%%%%%%%%%%%
%%%%%%%%%%%%%%%%% Setup %%%%%%%%%%%%%%%%%%%%
\subsection{Framework}

Our model is best suited for panel data. We consider a setting with two time periods ($t\in\left\{0,1\right\}$) and $n$ units, indexed by $i$. For each unit, we observe an outcome $Y_{i,t}$ at each time period and a running variable $Z_{i}$, which is assumed to be time-invariant. 

We are interested in measuring the effect of a binary treatment introduced between time periods 0 and 1, denoted as an indicator function $D_{i,1}$. Treatment assignment is a function of the running variable $Z_{i}$. Units with the running variable above a cutoff $z_{0}$ are assigned to treatment, whereas units with the running variable below $z_{0}$ are assigned to control. That is, the assignment rule for treatment $D_{i,1}$ can be formalized as $D_{i,1}=\mathbf{1}\left\{Z_{i}\geq z_{0},t=1\right\}$.

Our setting differs from the canonical RD setting due to the presence of a confounding policy, denoted by $D_{i,0}$, which is introduced before the treatment of interest, but following the same assignment rule around the cutoff $z_{0}$. The assignment rule for the confounding policy is $D_{i,0}=\mathbf{1}\left\{Z_{i}\geq z_{0},t\geq 0\right\}$.

For concreteness, and in anticipation of the empirical analysis, let $Y_{it}$ denote a fiscal outcome of municipality $i$ in period $t$. The running variable $Z_{i}$ is the population of municipality $i$, the confounding policy $D_{i,0}$ is a high wage for the mayor of municipality $i$, and the treatment of interest $D_{i,1}$ is a relaxation in the fiscal rule for municipality $i$.

Potential outcomes are defined as a function of both the treatment of interest and the confounding policy. We define $Y_{i,t}(d_{0},d_{1})$ as the potential outcome for unit $i$ in period $t$ if the confounder and the treatment are set to $d_{0}$ and $d_{1}$, where $(d_{0},d_{1})\in\left\{0,1\right\}^{2}$.

The observed outcomes at each time period are related to the potential outcomes through
 \begin{align}
    Y_{i,1} &= \left[ Y_{i,1}\left( 1,1 \right) D_{i,0} + Y_{i,1} \left( 0,1 \right) \left( 1-D_{i,0} \right)  \right] D_{i,1} +\\\nonumber & \quad \quad  \left[ Y_{i,1} \left( 1,0 \right) D_{i,0} + Y_{i,1} \left( 0,0 \right) \left( 1-D_{i,0} \right) \right] \left( 1-D_{i,1} \right)
\end{align} \label{eq:potential_outcomes_Y1}

and

\begin{equation}
     Y_{i,0} = Y_{i,0}\left( 1,0 \right) D_{i,0} + Y_{i,0}\left( 0,0 \right) \left( 1 - D_{i,0} \right) \label{eq:potential_outcomes_Y0}
\end{equation}

Since the policy of interest is only introduced between periods 0 and 1, at period 0 we only observe two of the four possible potential outcomes, whereas in period 1 we can observe all four possible potential outcomes.

The presence of the confounding policy at the threshold is a violation of the continuity of mean potential outcomes assumption invoked in the canonical RD setting (Assumption 1 from \cite{hahn2001identification}). In the DiDC setting, the continuity assumption is modified to accommodate the confounding policy:

\begin{assumption}[Continuity]\label{assump:continuity}
    All potential outcomes are \textit{continuous} in $Z=z_0$. For any $\left(d_{0},d_{1}\right) \in \{0,1\}^{2}$ and $t \in \{0,1\}$:
    \begin{align*}
        \lim_{\varepsilon \rightarrow 0} \E [Y_{it}(d_{0},d_{1}) |Z_i=z_0 + \varepsilon] =  \lim_{\varepsilon \rightarrow 0} \E [Y_{it}(d_{0},d_{1}) |Z_i=z_0 - \varepsilon]
    \end{align*}
\end{assumption}

Assumption~\ref{assump:continuity} is a straightforward modification of the standard continuity assumption. It can be interpreted as stating that, other than the confounder $D_{0}$, there is no systematic difference in mean potential outcomes around the threshold $z_{0}$. In the empirical application, Assumption 1 states that, aside from the wages of mayors and the relaxation of fiscal rules, there are no other policies that change around the population threshold $z_{0}$.

The second identification assumption is the random treatment assignment around the cut-off that is known as non-manipulation at the thresold:

\begin{assumption}[Random Treatment Assignment at the Cutoff] \label{assump:random}: for $t\in \{0,1\}$ 
\begin{equation}
Y_{i,t}(d_0,d_1) \;\perp (D_{i,0}, D_{i,1}) \mid Z = z_0
\end{equation}
\end{assumption}

A fundamental assumption in the canonical RD setting is that the probability of receiving treatment is discontinuous at the threshold $z_{0}$. In the DiDC, this assumption is extended for the treatment of interest and the confounding policy:

\begin{assumption}[Sharp Discontinuites]\label{assump:disc_prob_treat} 
    Define the limits $D_{0}^+=\lim_{\epsilon \rightarrow 0} \E [D_{i0}|Z_{i}=z_0 + \epsilon)$, $D_{0}^- =\lim_{\epsilon \rightarrow 0} \E [D_{i0}|Z_{i}=z_0 - \epsilon)$, $D_{1}^+=\lim_{\epsilon \rightarrow 0} \E [D_{i1}|Z_{i}=z_0 + \epsilon)$ and $D_{1}^- =\lim_{\epsilon \rightarrow 0} \E [D_{i1}|Z_{i}=z_0 - \epsilon)$. Assume $D_{i,0}^{+}=D_{i,1}^{+}=1$ and $D_{i,0}^{-}=D_{i,1}^{-}=0$.
\end{assumption}

Assumption~\ref{assump:disc_prob_treat} states that there is perfect compliance towards the assignment rule around the threshold $z_{0}$. Borrowing the jargon from the standard RD setting, we call the case with perfect compliance the \textit{sharp} discontinuity setting. In the sharp setting, we observe $Y_{i,0}=Y_{i,0}(1,0)$ and $Y_{i,1}=Y_{i,1}(1,1)$ for units above the threshold $z_{0}$, whereas we observe $Y_{i,0}=Y_{i,0}(0,0)$ and $Y_{i,1}=Y_{i,1}(0,0)$ for units below the threshold.

Assumptions \ref{assump:continuity} and \ref{assump:disc_prob_treat} modify the traditional RDD assumptions stated at \cite{hahn2001identification} to accommodate the confounding policy. However, to identify the effect of the treatment of interest, an additional assumption regarding the evolution of the confounder effect is required:

\begin{assumption}[Time-invariance of confounding effects]\label{assump:conf_time_invar}
  The treatment of interest is the only time-variant effect at the threshold $z_0$:
    \begin{align*}
       \E \left[Y_{i,0}(1,0) - Y_{i,0}(0,0) |Z_i = z_0 \right] =\E \left[Y_{i,1}(1,0) - Y_{i,1}(0,0) |Z_i = z_0 \right]
    \end{align*}    
\end{assumption}

Assumption~\ref{assump:conf_time_invar} states that the effect of confounding policy $D_{i,0}$ remains constant over time. That is, any differences between the treatment and control groups at the threshold in $t=1$, not caused by the treatment, should have existed in $ t=0$ before the treatment was introduced.

Assumptions \ref{assump:continuity}-\ref{assump:conf_time_invar} are the assumptions invoked by \cite{grembi2016fiscal} in order to derive a causal interpretation for the DiDC estimand. In the next section, we discuss the Difference-in-Discontinuities estimand, as well as the Discontinuity-in-Differences estimand.

%%%%%%%%%%%%%%% SUBSECTION %%%%%%%%%%%%%%%%%
%%%%%%%%%%%%%% Identification %%%%%%%%%%%%%%
\subsection{The DiDC Estimand}

Before discussing identification, we define the relevant target parameters for the setting. There are two relevant causal parameters for understanding the effect of the policy of interest. We are interested in the identification of the average difference between potential outcomes of treated units at the cutoff and the potential outcomes of untreated units, holding the confounder fixed. 

Thus, the first parameter we define is $\tau_{c}\equiv\E \left[Y_{i,1}(1,1)-Y_{i,1}(1,0)|Z_{i}=z_{0}\right]$, which is the average treatment effect for individuals at the cutoff exposed to the confounder in period 1. The second parameter of interest is the average treatment effect for individuals at the cutoff not exposed to the confounder in period 1, defined as $\tau_{uc}\equiv\E \left[Y_{i,1}(0,1)-Y_{i,1}(0,0)|Z_{i}=z_{0}\right]$. We now discuss the point identification of these parameters.

We define $\tau_{t}^{RD}$ as the regression discontinuity estimand at period $t$. Let $Y_{t}^{+}=\lim_{\varepsilon \rightarrow 0} E[Y_{i,t} |Z_{i}=z_0 + \epsilon]$ and $Y_{t}^{-}=\lim_{\varepsilon \rightarrow 0} E[Y_{i,t} |Z_{i}=z_0 - \epsilon]$. Thus, the RD estimand at period $t$ is simply $\tau_{t}^{RD}=Y_{t}^{+}-Y_{t}^{-}$. \cite{grembi2016fiscal} define the DiDC estimand as the difference between the RD estimand in period 1 and the RD estimand in period 0: $\tau^{DiDC}=\tau_{1}^{RD}-\tau_{0}^{RD}$.

The next lemma shows that the DiDC estimand can be alternatively defined as an RD estimand evaluated at the difference between outcomes over time:

\begin{lemma}\label{claim:equal_estimands}
    Under Assumptions \ref{assump:continuity}, \ref{assump:random} and \ref{assump:disc_prob_treat}, the difference of RDs, and the RD of the differences are equivalent:
    \begin{align*}
        \tau^{DiDC} = \tau_1^{RD} - \tau_0^{RD} = \lim_{\varepsilon \rightarrow 0} \E [\Delta Y_{i} |Z_{i}=z_0 + \epsilon] - \lim_{\varepsilon \rightarrow 0} \E [\Delta Y_{i} |Z_{i}=z_0 - \epsilon] %\Delta Y^+ - \Delta Y^-
    \end{align*}
\end{lemma}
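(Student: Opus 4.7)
The plan is to prove the equality by direct algebraic manipulation: the RD-of-differences estimand follows from the difference-of-RDs by interchanging the order of the one-sided limits and the subtractions, combined with linearity of expectation. Assumption~\ref{assump:continuity} is the essential ingredient here, since it guarantees that all four one-sided conditional-mean limits at the cutoff exist as finite numbers, which is what licenses the interchange ``limit of a difference equals difference of limits.'' Assumption~\ref{assump:disc_prob_treat} enters only to transfer continuity from potential outcomes to observed outcomes, and Assumption~\ref{assump:random} plays no direct role in this particular identity (it will matter later, when connecting the estimand to $\tau_c$).

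First, I would start from the definition $\tau_t^{RD}=Y_t^+-Y_t^-$ and expand
\[
\tau_1^{RD}-\tau_0^{RD} = \bigl(Y_1^+ - Y_1^-\bigr) - \bigl(Y_0^+ - Y_0^-\bigr),
\]
then regroup the plus-side and minus-side terms to obtain $(Y_1^+-Y_0^+)-(Y_1^--Y_0^-)$. Next, for each side separately, I would invoke Assumption~\ref{assump:continuity} together with the sharp-design representation of $Y_{i,t}$ in terms of potential outcomes implied by Assumption~\ref{assump:disc_prob_treat} to conclude that $\lim_{\varepsilon\to 0}\E[Y_{i,1}\mid Z_i=z_0\pm\varepsilon]$ and $\lim_{\varepsilon\to 0}\E[Y_{i,0}\mid Z_i=z_0\pm\varepsilon]$ each exist. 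This justifies writing
\[
Y_1^+ - Y_0^+ = \lim_{\varepsilon\to 0}\bigl(\E[Y_{i,1}\mid Z_i=z_0+\varepsilon]-\E[Y_{i,0}\mid Z_i=z_0+\varepsilon]\bigr),
\]
and an analogous identity for the left limit. Finally, I would apply linearity of conditional expectation to collapse the inner difference into $\E[\Delta Y_i\mid Z_i=z_0+\varepsilon]$ (and likewise for the left side), yielding the right-hand side of the Lemma.

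The main obstacle, such as it is, is the justification that the limits can be interchanged with the subtractions. This is not a deep issue: once Assumption~\ref{assump:continuity} is stated for every potential outcome and Assumption~\ref{assump:disc_prob_treat} pins down which potential outcome equals the observed $Y_{i,t}$ on each side of $z_0$, each of the four one-sided limits exists as a finite real number, so the algebraic manipulations above are all legitimate. No further probabilistic content is needed; the result is essentially a statement about the commutativity of finite differences, conditional expectations, and one-sided limits of continuous functions.
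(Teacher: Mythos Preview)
Your proposal is correct. It differs in emphasis from the paper's proof, and the comparison is worth noting. The paper proceeds by expressing each of the four one-sided limits $Y_1^{+},Y_1^{-},Y_0^{+},Y_0^{-}$ and $\Delta Y^{+},\Delta Y^{-}$ explicitly in terms of potential outcomes at the cutoff (using the sharp-design representation in equations \eqref{eq:potential_outcomes_Y1}--\eqref{eq:potential_outcomes_Y0} together with Assumption~\ref{assump:continuity}), and then verifies that both $\tau_1^{RD}-\tau_0^{RD}$ and $\Delta Y^{+}-\Delta Y^{-}$ reduce to the \emph{same} expression $\E[Y_{i,1}(1,1)-Y_{i,1}(0,0)\mid Z_i=z_0]-\E[Y_{i,0}(1,0)-Y_{i,0}(0,0)\mid Z_i=z_0]$. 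You, by contrast, stay at the level of observed outcomes: you use the assumptions only to certify that each one-sided limit exists and is finite, then regroup $(Y_1^{+}-Y_1^{-})-(Y_0^{+}-Y_0^{-})=(Y_1^{+}-Y_0^{+})-(Y_1^{-}-Y_0^{-})$ and apply linearity of conditional expectation. Your route is slightly more elementary for this lemma in isolation; the paper's route has the side benefit of already producing the potential-outcome expression that is reused immediately in the proof of Lemma~2. Your observation that Assumption~\ref{assump:random} is not actually needed for this algebraic equivalence is also accurate.
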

\begin{proof}
    See Appendix \ref{app:proofs}.
\end{proof}

The result in Lemma~\ref{claim:equal_estimands} is intuitive and presents an attractive feature for the DiDC setting: the estimand can be implemented via a single RD estimand rather than taking the difference across estimands. We now turn to the causal interpretation of the estimand, which is already well established in the literature:

\begin{lemma}[\cite{grembi2016fiscal}]Under Assumptions~\ref{assump:continuity}-\ref{assump:conf_time_invar}, we have

\begin{equation*}
    \tau^{DiDC}=\tau_{c}
\end{equation*}
    
\end{lemma}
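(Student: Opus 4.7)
The plan is to start from the definition $\tau^{DiDC} = \tau_1^{RD} - \tau_0^{RD}$ and rewrite each period's RD estimand in terms of mean potential outcomes at the threshold, then add and subtract a carefully chosen term so that Assumption~\ref{assump:conf_time_invar} eliminates everything but $\tau_c$.

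First I would use Assumption~\ref{assump:disc_prob_treat} (sharp compliance) together with the observed-outcome equations \eqref{eq:potential_outcomes_Y1} and \eqref{eq:potential_outcomes_Y0} to read off what $Y_{i,t}$ equals on each side of the cutoff: above the threshold we observe $Y_{i,0}(1,0)$ and $Y_{i,1}(1,1)$, and below we observe $Y_{i,0}(0,0)$ and $Y_{i,1}(0,0)$. Combining this with Assumption~\ref{assump:continuity} (continuity of mean potential outcomes) and Assumption~\ref{assump:random} (random assignment at the cutoff), the one-sided limits in each $\tau_t^{RD}$ equal conditional expectations of potential outcomes evaluated at $Z_i = z_0$. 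This yields
\begin{align*}
  \tau_0^{RD} &= \E[Y_{i,0}(1,0) - Y_{i,0}(0,0) \mid Z_i = z_0], \\
  \tau_1^{RD} &= \E[Y_{i,1}(1,1) - Y_{i,1}(0,0) \mid Z_i = z_0].
\end{align*}

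Next I would add and subtract the ``confounder-only'' potential outcome $Y_{i,1}(1,0)$ inside $\tau_1^{RD}$, so that the DiDC estimand decomposes as
\begin{align*}
  \tau^{DiDC} &= \E[Y_{i,1}(1,1) - Y_{i,1}(1,0) \mid Z_i = z_0] \\
  &\quad + \bigl\{\E[Y_{i,1}(1,0) - Y_{i,1}(0,0) \mid Z_i = z_0] - \E[Y_{i,0}(1,0) - Y_{i,0}(0,0) \mid Z_i = z_0]\bigr\}.
\end{align*}
The first term is exactly $\tau_c$. Applying Assumption~\ref{assump:conf_time_invar} (time-invariance of the confounder's effect at the threshold) makes the bracketed expression vanish, delivering $\tau^{DiDC} = \tau_c$.

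There is no serious obstacle here: the proof is essentially an accounting exercise once the correct add-and-subtract step is chosen. The only mildly delicate point is being careful that Assumptions~\ref{assump:continuity}--\ref{assump:random} are used in the right order to convert one-sided limits of observed outcomes into conditional expectations of potential outcomes at $Z_i = z_0$; Lemma~\ref{claim:equal_estimands} is not needed for the argument, although it could be used as an alternative starting point.
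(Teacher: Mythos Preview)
Your proposal is correct and matches the paper's own argument essentially line for line: the paper (in the proof of Lemma~\ref{claim:equal_estimands}, Part A) likewise writes $\tau_1^{RD}=\E[Y_{i,1}(1,1)-Y_{i,1}(0,0)\mid Z_i=z_0]$ and $\tau_0^{RD}=\E[Y_{i,0}(1,0)-Y_{i,0}(0,0)\mid Z_i=z_0]$, takes the difference, and invokes Assumption~\ref{assump:conf_time_invar} to collapse it to $\E[Y_{i,1}(1,1)-Y_{i,1}(1,0)\mid Z_i=z_0]=\tau_c$. Your explicit add-and-subtract of $Y_{i,1}(1,0)$ just makes the step where Assumption~\ref{assump:conf_time_invar} is applied more transparent than the paper's presentation.
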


The difference-in-discontinuities estimand identifies the effect of the treatment of interest in period 1 for units at the threshold that are exposed to the confounding policy. To identify a more general causal effect, an additional assumption is required:

\begin{assumption}[No-interaction between treatment and confounding effects]\label{assump:indep_from_conf}
    \begin{align*}
       \tau_{c}=\tau_{uc}= \E \left[Y_{i,1}(d_{0},1)-Y_{i,1}(d_{0},0)|Z_{i}=z_{0}\right]
    \end{align*}
\end{assumption}

Assumption~\ref{assump:indep_from_conf} states that the effect of the treatment of interest does not depend on the confounding policy. Such an assumption can be justified by a potential outcomes model in which the treatment effect of interest and the effect of the confounder are linearly separable. Assumption~\ref{assump:indep_from_conf} might be overly restrictive, as one might expect the treatment of interest and the confounding policy to interact; nevertheless, it allows the DiDC estimand to identify a more general causal effect:

\begin{corollary}
    Under Assumptions \ref{assump:continuity}-\ref{assump:indep_from_conf},

    \begin{equation*}
        \tau^{DiDC}= \E \left[ Y_{i,1}(d_{0},1)-Y_{i,1}(d_{0},0) |Z_i = z_0 \right]
    \end{equation*}
\end{corollary}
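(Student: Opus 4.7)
The plan is to chain together the two preceding results rather than reprove anything from the potential-outcomes expansion. Under Assumptions \ref{assump:continuity}--\ref{assump:conf_time_invar}, the previous lemma already delivers
\begin{equation*}
\tau^{DiDC} \;=\; \tau_{c} \;=\; \E\!\left[Y_{i,1}(1,1) - Y_{i,1}(1,0)\,\big|\, Z_{i}=z_{0}\right],
\end{equation*}
so no new work at the threshold (continuity, non-manipulation, sharp design, or time-invariance of the confounder effect) needs to be redone. The only remaining task is to upgrade the object $\tau_{c}$, which is defined only along the confounded arm $d_{0}=1$, to the more general expression $\E[Y_{i,1}(d_{0},1)-Y_{i,1}(d_{0},0)\mid Z_{i}=z_{0}]$ that is indexed by an arbitrary $d_{0}\in\{0,1\}$.

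This is exactly the content of Assumption \ref{assump:indep_from_conf}: it equates $\tau_{c}$ and $\tau_{uc}$ and identifies both with the common value $\E[Y_{i,1}(d_{0},1)-Y_{i,1}(d_{0},0)\mid Z_{i}=z_{0}]$, meaning the contrast between treatment and no-treatment potential outcomes is invariant to the value of the confounder at the cutoff. Substituting this equality into the first display gives the claimed identity. Thus the proof is a two-line chain: apply the earlier lemma to pin down $\tau^{DiDC}=\tau_{c}$, and then use Assumption \ref{assump:indep_from_conf} to rewrite $\tau_{c}$ in the $d_{0}$-free form.

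The only subtle point worth flagging, and the closest thing to an obstacle, is a well-definedness check: the right-hand side of the corollary appears to depend on $d_{0}$, but Assumption \ref{assump:indep_from_conf} precisely guarantees that the conditional expectation takes the same value for $d_{0}=0$ and $d_{0}=1$, so the notation is unambiguous. I would include a single sentence noting this, and then conclude the proof. No limit arguments, no additional smoothness conditions, and no reliance on Assumption \ref{assump:conf_time_invar} beyond what was already used in the preceding lemma, are required at this step.
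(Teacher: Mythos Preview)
Your proposal is correct and matches the paper's approach exactly: the corollary is stated without proof in the paper precisely because it follows immediately by chaining Lemma 2 (which gives $\tau^{DiDC}=\tau_{c}$ under Assumptions \ref{assump:continuity}--\ref{assump:conf_time_invar}) with Assumption \ref{assump:indep_from_conf} (which equates $\tau_{c}$ with the $d_{0}$-free expression). Your well-definedness remark about the right-hand side not actually depending on $d_{0}$ is a nice clarifying touch that the paper leaves implicit.
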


\subsubsection{Relaxing the No-Interaction Assumption}
Assumption \ref{assump:indep_from_conf} can be overly restrictive, and sometimes there could be scenarios where the confounding policy and the treatment of interest might interact. Rather than assuming additive effects at time $t=1$, we can relax this assumption with one that allows the inclusion of multiplicative effects:
\begin{customassump}{4'}[Multiplicative Effects]\label{assump:mult}
    \begin{align*}
        & \E\left[ Y_{i,1}(1,1) - Y_{i,1}(0,0) |Z_i = z_0 \right]  \\ & \quad =\E \left[ Y_{i,1}(0,1)-Y_{i,1}(0,0) |Z_i = z_0 \right] \E \left[ Y_{i,1}(1,0)-Y_{i,1}(0,0) |Z_i = z_0 \right] 
    \end{align*}
\end{customassump}
This assumption allows for the possibility that the combined effect of the confounding policy and the treatment of interest is not simply the sum of their individual effects, but also includes an additional component reflecting their interaction. This additional component is determined by multiplying the effects together. By incorporating this concept, we can derive a new estimand that uses both the RDD at time 0 and the DiDC to identify the causal effect of the treatment of interest.

\begin{lemma}\label{lemma:mult}
    Under Assumptions \ref{assump:continuity}, \ref{assump:random}
of the policy of interest at the cutoff can be identified as\ref{assump:disc_prob_treat}, \ref{assump:conf_time_invar} and \ref{assump:mult}, the effect of the policy of interest at the cutoff can be identified as
    \[
    \frac{\tau^{DiDC}}{\tau^{RD}_0} +1 \;=\; \E\!\left[ Y_{i,1}(0,1) - Y_{i,1}(0,0) \;\middle|\; Z_i = z_0 \right].
    \]
\end{lemma}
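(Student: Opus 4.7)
My plan is to express both $\tau^{DiDC}$ and $\tau^{RD}_0$ as expectations of differences in period-$1$ potential outcomes at the cutoff, so that Assumption~\ref{assump:mult} can be applied directly, and then to observe that the claim reduces to elementary algebra.

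First I would recall what the preceding results give us. By Assumption~\ref{assump:disc_prob_treat} (sharp compliance) together with Assumptions~\ref{assump:continuity} and \ref{assump:random}, the period-$0$ RD estimand is identified as
\[
\tau^{RD}_0 \;=\; \E\!\left[Y_{i,0}(1,0)-Y_{i,0}(0,0)\;\middle|\;Z_i=z_0\right],
\]
since above the cutoff at $t=0$ we have $D_{i,0}=1,D_{i,1}=0$ and below the cutoff we have $D_{i,0}=D_{i,1}=0$. By Assumption~\ref{assump:conf_time_invar} (time-invariance of the confounding effect), this equals
\[
\tau^{RD}_0 \;=\; \E\!\left[Y_{i,1}(1,0)-Y_{i,1}(0,0)\;\middle|\;Z_i=z_0\right].
\]
The preceding lemma (stated as Lemma from \cite{grembi2016fiscal}) under Assumptions~\ref{assump:continuity}--\ref{assump:conf_time_invar} gives
\[
\tau^{DiDC} \;=\; \tau_c \;=\; \E\!\left[Y_{i,1}(1,1)-Y_{i,1}(1,0)\;\middle|\;Z_i=z_0\right].
\]

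Next I would add these two expressions to obtain a telescoping identity:
\[
\tau^{DiDC}+\tau^{RD}_0 \;=\; \E\!\left[Y_{i,1}(1,1)-Y_{i,1}(0,0)\;\middle|\;Z_i=z_0\right].
\]
This is precisely the quantity appearing on the left-hand side of Assumption~\ref{assump:mult}, so applying that assumption yields
\[
\tau^{DiDC}+\tau^{RD}_0 \;=\; \E\!\left[Y_{i,1}(0,1)-Y_{i,1}(0,0)\;\middle|\;Z_i=z_0\right]\cdot\E\!\left[Y_{i,1}(1,0)-Y_{i,1}(0,0)\;\middle|\;Z_i=z_0\right].
\]
The second factor on the right is exactly $\tau^{RD}_0$, so dividing both sides by $\tau^{RD}_0$ (assuming it is non-zero, which is implicit in the statement since it appears in the denominator) immediately gives the desired identity.

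The argument is essentially algebraic once the identification of each RD-type estimand in terms of period-$1$ potential outcomes is in place. The one step that requires care is the use of Assumption~\ref{assump:conf_time_invar} to rewrite $\tau^{RD}_0$ using period-$1$ counterfactuals; without this translation, Assumption~\ref{assump:mult}, which is stated only for $t=1$, cannot be invoked. Everything else is bookkeeping.
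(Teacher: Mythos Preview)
Your proof is correct and follows essentially the same approach as the paper. The only cosmetic difference is the order in which the assumptions are applied: the paper starts from the Lemma~\ref{claim:equal_estimands} decomposition $\tau^{DiDC}=\E[Y_{i,1}(1,1)-Y_{i,1}(0,0)\mid Z_i=z_0]-\tau^{RD}_0$, applies Assumption~\ref{assump:mult} first and then Assumption~\ref{assump:conf_time_invar}, whereas you first invoke the Grembi et al.\ lemma (which already uses Assumption~\ref{assump:conf_time_invar}) and then add $\tau^{RD}_0$ back to obtain the same telescoping identity before applying Assumption~\ref{assump:mult}.
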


\begin{proof}
    See Appendix \ref{app:proofs}.
\end{proof}

This estimand can capture more complex relationships between the treatment and the confounding factors. Our future work will focus on the properties and robustness of this estimator.

%%%%%%%%%%%%%%%%% SECTION %%%%%%%%%%%%%%%%%%
%         Estimation of Sharp DiDC         %
%%%%%%%%%%%%%%%%%%%%%%%%%%%%%%%%%%%%%%%%%%%%
\section{Estimation and Inference for the \textit{Sharp} DiDC}\label{sec:Estimation_sharp}

Following standard practice in the regression discontinuity literature, we propose a local polynomial estimation to recover the parameter of interest. This nonparametric method involves fitting a polynomial to the data near the threshold and using the estimated function to calculate differences in outcomes between the treatment and control groups at that threshold. 

To implement the estimation at $z_0$, we rely on the methodologies proposed by \cite{calonico2014robust} for local polynomial estimation of the RDD. In our case, we estimate a local polynomial regression of the differences in outcomes over time ($\Delta Y_{i}$).

%%%%%%%%%%%%%%% SUBSECTION %%%%%%%%%%%%%%%%%
%%%%%%%%%%%%%% Notation %%%%%%%%%%%%%%%%%%%%
%\subsection{Notation}\label{subsec:Notation}

%Given a known threshold value $z_0$, which can be set to $z_0=0$ without loss of generality, the observed value $Z_i$ determines treatment assignment, with individuals assigned to the treatment group if $Z_i \geq 0$ and to the control group if $Z_i < 0$ in the \textit{sharp} setting. 

%The random variables $Y_{i,t}(1)$ and $Y_{i,t}(0)$ denote the potential outcomes of unit $i$ with and without treatment, respectively, at period $t$. The parameter of interest is $\E\left[ Y_i \left( 1 \right) - Y_i \left( 0 \right) | Z_i = z_0\right]$, the average treatment effect of treatment $D_1$ at the threshold. 

Under a mild continuity condition, \cite{hahn2001identification} showed that the average treatment effect at the threshold is nonparametrically identifiable as the difference of two conditional expectations evaluated at the (induced) boundary point $z_0=0$. Similarly, the \textit{sharp} DiDC parameter can be identified as the difference of the difference (in time) of two conditional expectations evaluated at  $z_0=0$ at each side of $z_0$:
\begin{align*}
    \tau^{DiDC} &= \Delta\mu_+ - \Delta\mu_-,\\
    \Delta\mu_+ &= \Lim{z\rightarrow 0^+} \Delta \mu(z), \quad \Delta \mu_{-} = \Lim{z\rightarrow 0^-} \Delta \mu(z) \\
    \Delta \mu(z) &=\E \left[ \Delta Y_{i} | Z_{i} =z \right]
\end{align*}

Appendix \ref{app:estimation_setup} states the assumptions underlying the nonparametric local polynomial regression estimation. These assumptions impose restrictions on the kernel function, require the existence of certain moments, ensure continuity of the running variable in the relevant region, impose smoothness conditions on the regression functions, and bound the conditional variance of the observed outcome.

%%%%%%% Local Polynomial Estimator %%%%%%%%%
\subsection{Local Polynomial Estimator}
For a given $\nu \leq p \in \mathbf{N}$, define $\Delta \mu^{(\nu)}$ as the $\nu$th-order derivatives of the $p$th-order local polynomial of the difference. We are interested in the limits of this function around the threshold, $\Delta \mu^{(\nu)}_{+}$ and $\Delta \mu^{(\nu)}_{-}$. The general estimand of interest is $\tau^{DiDC} = \Delta \mu_{+} - \Delta \mu_{-}$, with $\Delta \mu=\Delta \mu^{0}$. 
 The $p$th-order local polynomial estimators of the $\nu$th-order derivatives $\Delta \mu^{(\nu)}_{+,p}$ and $\Delta \mu^{(\nu)}_{-,p}$ are:
\begin{align*}
    \Delta \hat{\mu}^{(\nu)}_{+,p}(h_n) &= \nu!e^{\prime}_\nu \hat{\delta}_{\Delta Y+,p}(h_n)\\
    \Delta \hat{\mu}^{(\nu)}_{-,p}(h_n) &= \nu!e^{\prime}_\nu \hat{\delta}_{Delta Y-,p}(h_n)\\
    \hat{\delta}_{\Delta Y+,p}(h_n) &= arg\min_{\delta \in \mathbf{R}^{p+1}} \sum_{i=1}^{n} \mathbbm{1}(Z_i \geq 0)(\Delta Y_{i}-r_p(Z_i)^{\prime} \delta)^2 K_{h_n}(Z_i)\\
    \hat{\delta}_{\Delta Y-,p}(h_n) &= arg\min_{\delta \in \mathbf{R}^{p+1}} \sum_{i=1}^{n} \mathbbm{1}(Z_i < 0)(\Delta Y_{i}-r_p(Z_i)^{\prime} \delta)^2 K_{h_n}(Z_i)
\end{align*}
where $e_\nu$ is a conformable $(\nu +1)$ unit vector, $K_h(u) = K(u/h)/h$, $h_n$ is a positive bandwidth sequence, 
    $r_p(x) = \begin{bmatrix} 
                    1 & x & \hdots & x^p
                \end{bmatrix}'$ and
    $ \Delta Y = \begin{bmatrix}
                    \Delta Y_{1} & \Delta Y_{2} & \hdots & \Delta Y_{n}
                \end{bmatrix}'$. Therefore, for a positive bandwidth $h_n$, the nonparametric estimator of $\tau_{\nu,p}$ is
\begin{align} \label{eq:estimator}
    \hat{\tau}^{DiDC}_{p}(h_n) = \Delta \hat{\mu}_{+,p}(h_n) - \Delta \hat{\mu}_{-,p}(h_n)
\end{align}

\subsubsection{Bandwidth choice}
To perform local polynomial estimation, it is necessary to choose an appropriate bandwidth $h_n$. This parameter determines the range of observations used for the estimation and impacts the trade-off between bias and variance in the estimated treatment effect $(\tau^{DiDC})$. In point estimation, the standard approach is to select the bandwidth that minimizes the asymptotic Mean Squared Error (MSE) of the estimator. Let $\chi_n = \begin{bmatrix}
                Z_1 & \hdots & Z_n
            \end{bmatrix}'$, the MSE is:
\begin{align*}
    MSE_{\nu,p,s}(h_n) =\E\left[ \left( \hat{\tau}_{\nu, p, s}(h_n) - \tau_{\nu,p} \right)^2 | \chi_n \right]
\end{align*}

\begin{lemma}\label{lemma:MSEoptimal_band}
    Under Assumptions \ref{assump:continuity} and \ref{assump:disc_prob_treat} with $S \geq p+1$, $\nu \leq p$, $h_n \rightarrow 0$ and $nh_n \rightarrow \infty$, the asymptotic MSE-optimal bandwidth is given by:
    \begin{align*}
        h_{n,\nu, p}^{MSE} = \left( \frac{\left( 1+ 2 \nu \right) \mathbf{V}_{\nu ,p}}{2n \left( 1+p-\nu \right) \textbf{B}^{2}_{\nu, p, p+1, s}} \right)^{\frac{1}{2p+3}}
    \end{align*}
    where $\mathbb{V_{\nu,p}} =\nu!\frac{\sigma_+^2 - \sigma_-^2}{f}e^{\prime}_{\nu}\Gamma_p^{-1} \Psi_p \Gamma_p^{-1}  $ and $\mathbb{B}_{\nu, p, p+1, s} = \frac{\Delta \mu^{(p+1)}_{+}-(-1)^{\nu+p+s}\Delta \mu^{(p+1)}_{-}}{(p+1)!} \nu! e^{\prime}_{\nu}\Gamma_p^{-1} \varphi_{p,p+1}$, provided that \(\textbf{B}_{\nu, p, p+1, s} \neq 0\).
\end{lemma}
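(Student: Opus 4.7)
The plan is to reduce this lemma to a direct application of the Calonico, Cattaneo and Titiunik (2014) boundary asymptotics for local polynomial regression, applied to the first-differenced outcome $\Delta Y_i = Y_{i,1}-Y_{i,0}$, followed by the standard one-dimensional MSE minimization.

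First, I would observe that, by Lemma \ref{claim:equal_estimands}, the DiDC estimand is exactly the sharp RD estimand for the transformed outcome $\Delta Y_i$ at the cutoff $z_0=0$, and the estimator $\hat{\tau}^{DiDC}_p(h_n)$ is nothing but the two-sided $p$th-order local polynomial RD estimator applied to $\Delta Y_i$ with the time-invariant running variable $Z_i$. Consequently, the regularity conditions collected in Appendix \ref{app:estimation_setup} (kernel properties, continuity of $f$ near $z_0$, $s$-fold smoothness of $\Delta \mu(\cdot)$ with $s\geq p+1$, bounded conditional variance, moment conditions) together with $h_n\to 0$ and $nh_n\to\infty$ are exactly the hypotheses required to invoke the standard CCT asymptotic expansions for each of the one-sided local polynomial fits $\Delta \hat{\mu}^{(\nu)}_{+,p}(h_n)$ and $\Delta \hat{\mu}^{(\nu)}_{-,p}(h_n)$.

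Second, I would assemble the leading conditional bias and variance from these expansions. The boundary expansion at $z_0=0$ yields, for each side,
\begin{align*}
\E\!\left[\Delta \hat{\mu}^{(\nu)}_{\pm,p}(h_n)\mid \chi_n\right] - \Delta \mu^{(\nu)}_{\pm} &= h_n^{p+1-\nu}\, \frac{\Delta \mu^{(p+1)}_{\pm}}{(p+1)!}\, \nu!\, e'_\nu \Gamma_p^{-1}\varphi_{p,p+1} + o_p(h_n^{p+1-\nu}), \\
\mathrm{Var}\!\left(\Delta \hat{\mu}^{(\nu)}_{\pm,p}(h_n)\mid \chi_n\right) &= \frac{\nu!\,\sigma^2_{\pm}}{n h_n^{1+2\nu} f}\, e'_\nu \Gamma_p^{-1} \Psi_p \Gamma_p^{-1} + o_p\!\left((nh_n^{1+2\nu})^{-1}\right).
\end{align*}
Because the two one-sided regressions use disjoint subsamples, they are asymptotically independent, so the bias of $\hat{\tau}^{DiDC}_p(h_n)$ collects the two $\Delta \mu^{(p+1)}_{\pm}$ terms with the sign pattern $(-1)^{\nu+p+s}$ inherited from integrating $u^{p+1-\nu}$ on opposite sides of $0$, and the variance of the difference is the sum of the two one-sided variances. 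This reproduces exactly the constants $\mathbf{B}_{\nu,p,p+1,s}$ and $\mathbf{V}_{\nu,p}$ stated in the lemma.

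Third, I would write the asymptotic MSE as
\begin{equation*}
MSE_{\nu,p,s}(h_n)\;\sim\; h_n^{2(p+1-\nu)}\, \mathbf{B}^2_{\nu,p,p+1,s}\;+\;\frac{1}{nh_n^{1+2\nu}}\,\mathbf{V}_{\nu,p},
\end{equation*}
differentiate in $h_n$, and solve the first-order condition
\begin{equation*}
2(p+1-\nu)\,h_n^{2(p+1-\nu)-1}\,\mathbf{B}^2_{\nu,p,p+1,s}\;=\;\frac{1+2\nu}{n\, h_n^{2(1+\nu)}}\,\mathbf{V}_{\nu,p}.
\end{equation*}
Collecting exponents gives $h_n^{2p+3}=\frac{(1+2\nu)\,\mathbf{V}_{\nu,p}}{2n(1+p-\nu)\,\mathbf{B}^2_{\nu,p,p+1,s}}$, which is the claimed expression; the hypothesis $\mathbf{B}_{\nu,p,p+1,s}\neq 0$ ensures that this interior critical point is the global minimizer.

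The main obstacle is not the MSE optimization, which is a one-line calculus exercise once the rates are pinned down; rather, it is the careful verification that the CCT boundary expansions, stated for a single outcome $Y_i$, transfer cleanly to the transformed outcome $\Delta Y_i$. Concretely, I need to check that Assumption \ref{assump:continuity} combined with the smoothness and moment conditions in Appendix \ref{app:estimation_setup} imply that $\Delta\mu(\cdot)$ is $(p+1)$-times continuously differentiable on each side of $z_0$ and that $\mathrm{Var}(\Delta Y_i\mid Z_i=z)$ inherits the required continuity and boundedness from the per-period conditional variances of $Y_{i,0}$ and $Y_{i,1}$. Once these regularity facts are in hand, the rest of the proof is mechanical substitution into the CCT expansions.
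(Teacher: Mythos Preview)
Your proposal is correct and follows essentially the same approach as the paper: the paper's proof invokes Lemma~\ref{lemma:a.1} (its own restatement of the CCT boundary expansions applied to $\Delta Y_i$) to obtain the leading conditional bias and variance, writes $MSE_{\nu,p,s}(h_n)=\frac{\mathbb{V}_{\nu,p}}{nh_n^{1+2\nu}}\{1+o_p(1)\}+(h_n^{1+p-\nu}\mathbb{B}_{\nu,p,p+1,s}\{1+o_p(1)\})^2$, and then minimizes in $h_n$ exactly as you do. The only difference is packaging: the paper bundles the ``CCT-transfers-to-$\Delta Y$'' verification you flag as the main obstacle into the preliminary Lemmas~\ref{lemma:s.a.1}--\ref{lemma:a.1}, whereas you propose to cite CCT directly.
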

\begin{proof}    
    in Appendix \ref{app:proofs}.
\end{proof}

%%%%%%% Properties of the Estimator %%%%%%%%%
\subsection{Inference}\label{subsec:Inference}
We now discuss the asymptotic properties of the estimator. Using Lemma \ref{lemma:a.1} found in the Appendix \ref{app:Lemmas}, it is possible to recover the leading asymptotic bias, expressed as:
\begin{align} \label{eq:bias}
   \E\left[ \hat{\tau}^{DiDC}_{\nu,p}(h_n) | \chi_n \right] - \tau_\nu &=   h_n ^{p+1-\nu} \mathbf{B}_{\nu,p,p+1}(h_n) +  h_n ^{p+2-\nu} \mathbf{B}_{\nu,p,p+2}(h_n) \\
    \nonumber \\
    \mathbf{B}_{\nu,p,r} (h_n)  &= \frac{\Delta \mu_{+}^{(r)}\mathcal{B}_{+,\nu,p,r}(h_n) -  \Delta \mu_{-}^{(r)}\mathcal{B}_{-,\nu,p,r}(h_n) }{ r!} \nonumber
\end{align} 
where $\mathcal{B}_{+,\nu,p,r} = \nu!e_\nu'\Gamma_{+,p}^{-1}(h_n)\vartheta_{+,p,r}(h_n)$ and $\mathcal{B}_{-,\nu,p,r} = \nu!e_\nu'\Gamma_{-,p}^{-1}(h_n)\vartheta_{-,p,r}(h_n)$ are asymptotically bounded. Further notation is available in Appendix \ref{app:estimation_setup} and a detailed proof for the statement above can be found in Appendix \ref{app:Lemmas} under Lemma \ref{lemma:a.1}. 

This is where we believe one of the main contributions of this research lies: the asymptotic bias of the DiDC can be zero if we include an assumption similar to that of parallel trends, or if the shapes of the data-generating processes for both groups are time-invariant. In the first case, imposition of ``parallel trends`` restricts the functional form in such a way that $\Delta \mu_{+}^{(r)}\mathcal{B}_{+,\nu,p,r}(h_n) =  \Delta \mu_{-}^{(r)}\mathcal{B}_{-,\nu,p,r}(h_n)$. It's important to note that the symmetry of the kernel function, as imposed in Assumption \ref{assump:kernel}, plays a significant role in this result. In the case of time-invariant data-generating processes, both $\Delta \mu_+^{(\nu)}$ and $\Delta\mu_-^{(\nu)}$ equate to zero. 

We present Claim \ref{claim:equal_biases}, demonstrating the bias of difference-in-discontinuities in relation to RDDs:

\begin{lemma}\label{claim:equal_biases} The bias of $\hat{\tau}_{\nu,p}^{DiDC}$ can be decomposed as
    \begin{align} \label{eq:bias_didc_rdds}
        B \left[ \hat{\tau}^{DiDC}_{\nu,p} (h_n) \right] &= B \left[ \hat{\tau}^{RD}_{1,\nu,p} (h_n) \right] - B \left[ \hat{\tau}^{RD}_{0,\nu,p} (h_n) \right]     
    \end{align}
    where $ B \left[ \hat{\tau}^{RD}_{1,\nu,p} (h_n) \right]$ is the bias of the RD estimated at time $t=1$, after the intervention happened, and $ B \left[ \hat{\tau}^{RD}_{0,\nu,p} (h_n) \right]$ is the bias of the RD estimated at time $t=0$, before the intervention happened. 
\end{lemma}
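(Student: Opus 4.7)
The plan rests on two observations: (i) the local polynomial estimator used to form $\hat{\tau}^{DiDC}_{\nu,p}(h_n)$ is linear in the outcome variable $\Delta Y_i$, and (ii) Lemma~\ref{claim:equal_estimands} guarantees that at the population level $\tau^{DiDC} = \tau^{RD}_1 - \tau^{RD}_0$. Together these give the claimed bias decomposition almost immediately; the work is mostly bookkeeping.

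First, I would introduce the analogous local polynomial estimators at each time period. Define $\hat{\mu}_{t,+,p}(h_n) = e'_0 \hat{\delta}_{Y_t+,p}(h_n)$ and $\hat{\mu}_{t,-,p}(h_n) = e'_0 \hat{\delta}_{Y_t-,p}(h_n)$ for $t \in \{0,1\}$, where $\hat{\delta}_{Y_t\pm,p}(h_n)$ is the weighted-least-squares coefficient from regressing $Y_{i,t}$ on $r_p(Z_i)$ with kernel weights $K_{h_n}(Z_i)$ restricted to the appropriate side of the cutoff, exactly as in equation~(\ref{eq:estimator}) but with outcome $Y_{i,t}$ in place of $\Delta Y_i$. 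Then $\hat{\tau}^{RD}_{t,\nu,p}(h_n) = \hat{\mu}_{t,+,p}(h_n) - \hat{\mu}_{t,-,p}(h_n)$.

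The central step is linearity. Since the WLS solution takes the form $\hat{\delta}_{\Delta Y+,p}(h_n) = (X'WX)^{-1}X'W\,\Delta Y$, and the design matrix $X$ (with rows $r_p(Z_i)'$) and weight matrix $W$ (with entries $\mathbbm{1}(Z_i\ge 0)K_{h_n}(Z_i)$) depend only on the running variable, substituting $\Delta Y_i = Y_{i,1} - Y_{i,0}$ yields
\begin{equation*}
\hat{\delta}_{\Delta Y+,p}(h_n) \;=\; \hat{\delta}_{Y_1+,p}(h_n) - \hat{\delta}_{Y_0+,p}(h_n),
\end{equation*}
and analogously on the left side. Premultiplying by $\nu! e'_\nu$ and subtracting the two sides of the cutoff gives
\begin{equation*}
\hat{\tau}^{DiDC}_{\nu,p}(h_n) \;=\; \hat{\tau}^{RD}_{1,\nu,p}(h_n) - \hat{\tau}^{RD}_{0,\nu,p}(h_n).
\end{equation*}

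Finally, conditional on $\chi_n$, I take expectations and subtract $\tau^{DiDC}$. Using Lemma~\ref{claim:equal_estimands} to write $\tau^{DiDC} = \tau^{RD}_1 - \tau^{RD}_0$ and regrouping yields
\begin{equation*}
B\!\left[\hat{\tau}^{DiDC}_{\nu,p}(h_n)\right] = \left(\E[\hat{\tau}^{RD}_{1,\nu,p}(h_n)\mid \chi_n] - \tau^{RD}_1\right) - \left(\E[\hat{\tau}^{RD}_{0,\nu,p}(h_n)\mid \chi_n] - \tau^{RD}_0\right),
\end{equation*}
which is exactly equation~(\ref{eq:bias_didc_rdds}). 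There is really no hard step here; the only thing to be mildly careful about is that the two period-specific RD estimators are built with the same bandwidth, kernel, and polynomial order as the DiDC estimator, so that the linearity argument cleanly produces matching period-$t$ RD bias terms rather than a cross term. Once that alignment is noted, the decomposition follows purely from linearity of WLS and Lemma~\ref{claim:equal_estimands}.
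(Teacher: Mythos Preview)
Your proof is correct, but it takes a somewhat different (and in fact more direct) route than the paper's. You exploit linearity at the level of the \emph{estimator}: since the WLS solution is linear in the outcome vector and the design/weight matrices depend only on $\{Z_i\}$, substituting $\Delta Y=Y_1-Y_0$ gives the exact pointwise identity $\hat{\tau}^{DiDC}_{\nu,p}(h_n)=\hat{\tau}^{RD}_{1,\nu,p}(h_n)-\hat{\tau}^{RD}_{0,\nu,p}(h_n)$, from which the bias decomposition is immediate once you pair it with $\tau^{DiDC}=\tau_1^{RD}-\tau_0^{RD}$. The paper instead works at the level of the \emph{population regression function}: it notes $\Delta\mu(z)=\mu_1(z)-\mu_0(z)$, differentiates to obtain $\Delta\mu_\pm^{(r)}=\mu_{1,\pm}^{(r)}-\mu_{0,\pm}^{(r)}$, and then substitutes into the explicit leading-bias expression $\mathbf{B}_{\nu,p,r}(h_n)$ from Lemma~\ref{lemma:a.1}, using that the kernel-dependent factors $\mathcal{B}_{\pm,\nu,p,r}(h_n)$ are common across periods. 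Your argument is more elementary in that it never needs the explicit bias formula and delivers an exact conditional-bias identity rather than just equality of the leading terms; the paper's approach, on the other hand, ties the result more transparently to the asymptotic expansion in equation~(\ref{eq:bias}), which is useful for the subsequent discussion of when the DiDC bias vanishes. A small remark: your invocation of Lemma~\ref{claim:equal_estimands} is fine, but you do not actually need the identification assumptions behind it---the equality $\Delta\mu_+-\Delta\mu_-=(\mu_{1,+}-\mu_{1,-})-(\mu_{0,+}-\mu_{0,-})$ follows from linearity of conditional expectation alone, which is closer to how the paper proceeds.
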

\begin{proof}
    in Appendix \ref{app:proofs}.
\end{proof}

Equation \ref{eq:bias_didc_rdds} demonstrates how incorporating additional data can help reduce the bias in RD analysis. Traditional RD analysis of interventions typically uses only a cross-section of post-intervention data.  However, by incorporating pre-intervention data into the analysis, bias can be substantially reduced and, under specific conditions, eliminated. Even when the bias is not completely eliminated, it can still be reduced if the RD estimation of pre-treatment data exhibits bias in the same direction as the post-treatment RD and if its magnitude is not too large to overcome the original bias.

\subsubsection{Bias Correction} \label{sec:Estimation_inference_biascorrection}
Using the MSE-optimal bandwidth for point inference can result in bandwidths that are ``too large``, which may seem attractive for reducing variance but can introduce first-order asymptotic bias\footnote{\cite{calonico2014robust}}. To address this issue, we employ robust bias-corrected confidence intervals (CIs) proposed by \cite{calonico2014robust}. These intervals take into account the asymptotic bias of the point estimate by 1) estimating the bias and recentering the CI, and 2) incorporating the additional variance from estimating the bias for bias correction into the CI. This process requires estimating a separate local polynomial of order $q$, with $q > p \geq \nu$.  The bias-corrected estimator is defined as:
\begin{align*}
    \hat{\tau}^{bc}_{\nu, p, q} \left(h_n, b_n \right) =& \tau_{\nu, p} \left(h_n\right) - h_n^{p+1 - \nu} \hat{\mathbf{B}}_{\nu,p,q}  \left(h_n, b_n \right), \\
    \hat{\mathbf{B}}_{\nu,p,q,s}  \left(h_n, b_n \right) =& \frac{\Delta \hat{\mu}^{(p+1)}_{+,q}(b_n)}{(p+1)!}\mathcal{B}_{+,\nu,p,p+1} - \frac{(-1)^{\nu+p+s}\Delta \hat{\mu}^{(p+1)}_{-,q}(b_n)}{(p+1)!}\mathcal{B}_{-,\nu,p,p+1}
\end{align*}
with $\Delta \hat{\mu}^{(p+1)}_{+,q}(b_n)$ and $\Delta \hat{\mu}^{(p+1)}_{-,q}(b_n)$ being local polynomial estimations as described in Appendix \ref{app:estimation_setup}. $\hat{\mathbf{B}}_{\nu,p,q}  \left(h_n, b_n \right)$ is the estimationg we get of the bias from the $q$th-order local polynomial.

To determine the MSE-optimal bandwidth for estimating the bias, we need to conduct a separate local polynomial estimation of order $q$ with $q > p \geq \nu$. Once again, we aim to minimize the MSE. Following Lemma \ref{lemma:MSEoptimal_band} and applying it to the bias estimate, we can find the MSE-optimal bandwidth for the bias estimation.
\begin{align*}
    b_n = h_{n,p+1, q,\nu+p+1}^{MSE} = \left( \frac{\left( 3+2p \right) \mathbf{V}_{p+1, q}}{2n \left( q-p \right) \textbf{B}^{2}_{p+1, q, q+1, \nu+p+1}} \right)^{\frac{1}{2q+3}}
\end{align*}

\subsubsection{Asymptotic Properties and Robust Confidence Interval}
Following \cite{calonico2014robust}, we derive a large-sample distributional approximation that accounts for the added variability introduced by the bias estimate. The large-sample approximation for the standardized t-statistic is formalized in the theorem below:

\begin{theorem}
    Under Assumptions \ref{assump:continuity}-\ref{assump:indep_from_conf},  if $S \geq q+1$, $n \min \{ h_n^{2p+3}, b_n^{2p+3} \} \times \max \{ h_n^2 , b_n^{2(q-p)} \}  \rightarrow \infty$, then 
\begin{align*}
    T^{rbc}_{\nu,p,q} \left(h_n, b_n \right) = \frac{\hat{\tau}^{bc}_{\nu, p, q} \left(h_n, b_n \right) - \tau_{\nu}}{\sqrt{\mathbf{V}^{bc}_{\nu, p, q}\left(h_n, b_n \right)}} \xrightarrow{d} \mathcal{N} \left(0,1\right)
\end{align*}
where $\mathbf{V}^{bc}_{\nu, p, q}\left(h_n, b_n \right)$ is described in appendix C.
\end{theorem}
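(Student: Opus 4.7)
The argument reduces, via Lemma \ref{claim:equal_estimands}, to a direct application of the robust bias-correction machinery of \cite{calonico2014robust} to a standard sharp RD problem with outcome $\Delta Y_i = Y_{i,1} - Y_{i,0}$. The estimator $\hat{\tau}^{bc}_{\nu,p,q}(h_n, b_n)$ is constructed exactly as the CCT estimator, but with the cross-sectional outcome replaced by the first difference, and similarly $\mathbf{V}^{bc}_{\nu,p,q}$ is built from the same sandwich form $\Gamma_p^{-1}\Psi_p\Gamma_p^{-1}$ that appears in Lemma \ref{lemma:MSEoptimal_band}, with a design enlarged to accommodate the joint linear representation of the point estimator and its bias estimate. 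The first step is therefore to verify that the regularity conditions of Appendix \ref{app:estimation_setup} continue to hold when $Y_i$ is replaced by $\Delta Y_i$: the regression function $\Delta \mu(z)$ inherits $S$-fold continuous differentiability from the individual conditional means, and the conditional variance $\mathrm{Var}(\Delta Y_i \mid Z_i = z)$ is finite and bounded on a neighborhood of $z_0$ by the moment assumptions on $(Y_{i,0}, Y_{i,1})$, regardless of the within-unit correlation between the two periods.

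Next, I would write out the standard decomposition of the t-statistic numerator as the sum of (i) a centered linear smoother $\sum_i w_i(h_n, b_n)\bigl(\Delta Y_i - \Delta \mu(Z_i)\bigr)$ in which the composite weights $w_i$ already net out the bias estimate, and (ii) a residual smoothing bias. By the expansion in equation \eqref{eq:bias}, the leading $h_n^{p+1-\nu} \mathbf{B}_{\nu,p,p+1}$ term is canceled in expectation by the $q$-order bias estimate, leaving a remainder of order $h_n^{p+2-\nu}$ (higher-order smoothing bias of the $p$-fit) plus a term of order $h_n^{p+1-\nu} b_n^{q-p}$ (smoothing bias of the $q$-fit used to estimate $\Delta \mu^{(p+1)}_{\pm}$). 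The stated rate condition $n \min\{h_n^{2p+3}, b_n^{2p+3}\} \cdot \max\{h_n^{2}, b_n^{2(q-p)}\} \to \infty$ is precisely what is required to force the ratio of this residual bias to the standard deviation $\sqrt{\mathbf{V}^{bc}_{\nu,p,q}}$ to vanish; the $h_n^{2}$ factor handles the first source and $b_n^{2(q-p)}$ handles the second.

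The remaining step is a Lyapunov CLT for the centered linear term. Since the units are i.i.d.\ across the cross-section and first-differencing absorbs all within-unit dependence, the summands are independent; the moment assumptions on $\Delta Y_i$, together with the kernel being bounded with compact support and the density $f$ being bounded away from zero near $z_0$, deliver the Lyapunov ratio condition in the usual way. The main obstacle I anticipate is the bookkeeping around the covariance between $\hat{\tau}^{DiDC}_{p}(h_n)$ and the bias estimator $\hat{\mathbf{B}}_{\nu,p,q}(h_n, b_n)$ when $h_n$ and $b_n$ are of comparable order: ignoring it (as a naive undersmoothed plug-in variance would) yields incorrect coverage. Handling it requires expressing the bias-corrected estimator as a \emph{single} linear smoother of $\Delta Y_i$ before computing the variance, so that the off-diagonal covariance blocks enter $\mathbf{V}^{bc}$ explicitly; this is the key technical step from \cite{calonico2014robust} that the formula in Appendix C is designed to implement. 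Asymptotic normality then follows by combining the CLT with Slutsky's theorem applied to the consistency of $\mathbf{V}^{bc}_{\nu,p,q}(h_n, b_n)$ for its probability limit.
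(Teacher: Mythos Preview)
Your approach is essentially the paper's own: the proof in Appendix~\ref{app:proofs} is a one-line reduction to Lemma~\ref{lemma:s.a.4} (the adaptation of the CCT bias-corrected lemma to the first-differenced outcome $\Delta Y_i$), combined across the two sides of the cutoff using conditional independence; your sketch merely unpacks the content of that lemma.

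One substantive correction: you assert that the rate condition $n \min\{h_n^{2p+3}, b_n^{2p+3}\} \max\{h_n^{2}, b_n^{2(q-p)}\} \to \infty$ is ``precisely what is required'' to make the residual bias negligible relative to the standard deviation. The direction is reversed. The residual smoothing bias after correction is of order $h_n^{p+2-\nu}$ and $h_n^{p+1-\nu} b_n^{q-p}$, while the standard deviation is of order $(n\min\{h_n,b_n\}^{1+2\nu})^{-1/2}$; for their ratio to vanish you need that product to tend to \emph{zero}, not infinity, and indeed Lemma~\ref{lemma:s.a.4}(D) states the hypothesis as $\to 0$. The ``$\to \infty$'' in the theorem statement is evidently a typographical slip in the paper, and your proposal should not absorb it as a correct claim about what kills the bias.
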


This motivates the following confidence interval:
\begin{align*}
    CI_{\nu, p, q}\left(h_n, b_n \right) = \left[ \hat{\tau}^{bc}_{\nu, p, q} \left(h_n, b_n \right) \pm \Phi^{-1}_{1-\alpha / 2} \sqrt{\mathbf{V}^{bc}_{\nu, p, q}\left(h_n, b_n \right)} \right]
\end{align*}

Just like in the standard RD setting, this confidence interval have better properties compared to conventional bias-corrected intervals. They are more robust to bandwidth selection, feature coverage error decays at a faster rate, and offer shorter interval lengths, as explained by \cite{calonico2014robust}.

%%%%%%%%%%%%%%% SECTION %%%%%%%%%%%%%%%%%%%
%%%%%%%%%%%% Validity Tests %%%%%%%%%%%%%%%
%%%%%%%%%%%%%%%%%%%%%%%%%%%%%%%%%%%%%%%%%%%
\section{Validity Tests}\label{sec:Validity_Tests}

\subsection{Testing if Confounding Effect is Time-Invariant} \label{subsec:Test_conf_time_invar}

As highlighted in Section \ref{sec:Identification}, the importance of Assumption \ref{assump:conf_time_invar} - that the confounding effect is constant over time - cannot be overstated. Without this assumption, all estimations are inherently flawed, leading to biased treatment effect estimates. Therefore, an initial step when considering the applicability of a differences-in-discontinuity approach is to assess the validity of this assumption. 

To accomplish this, we propose a simple test to check violations of this assumption. In essence, it involves using available pre-treatment periods to estimate stacked RDs. The goal is to examine whether the RD coefficients remain consistent across multiple periods. This involves a set of $k$ periods where it is known that no changes occurred at the threshold. 

It is crucial to note that this estimation method is not suitable if any alterations occurred at the threshold between the initial period in this sample and the period just preceding the introduction of the treatment of interest. To ensure the validity of this approach, it is necessary to select a period during which no events occurred at this threshold that could influence the observed outcome. 

The procedure involves considering the following stacked RDs regression model:
\begin{align}
    Y_i = \alpha_i + \sum_{k=1}^K \left[ \beta_{-k} \left(Z_i - z_0 \right) + \theta_{-k}D_{i} \left(Z_i - z_0 \right) \right]T_{-k} + \varepsilon_i \label{eq:reg_test_conf_const}
\end{align}
where $T_{-k}$ is a dummy indicating the period to which the data belong, that is, the RD of which period. The hypothesis to be tested is:
\begin{align*}
    H_0: \quad \theta_0 = \theta_{-1} = ... = \theta_{-K}
\end{align*}
A Wald test is sufficient to test these hypotheses. Rejection of the null hypothesis indicates that the differences-in-discontinuity design may not be suitable for estimating the effect of treatment in this setting.

A question arises about concerning bandwidth to use when conducting this test.  There are several options available, and Appendix \ref{app:Bandwidth_Test} provides details on the simulations used to assess the better bandwidth. In summary, in practice, any bandwidth derived from the data that is optimal for the specific RD should work well.

\subsection{Testing the Time-Invariance of Potential Outcomes Functional Form}\label{subsec:test_functiona_form}

In Section \ref{subsec:Inference}, we briefly discussed an important aspect of the DiDC method: its potential to achieve zero asymptotic bias when choosing DiDC over the standard RD, provided that the functional forms of the data-generating processes for both groups are time-invariant. Essentially, this would mean that the derivatives of the functions on each side of the threshold at each $t=\{0,1\}$ ($\mu_{+,1}(z)$ and $\mu_{+,0}(z)$, $\mu_{-,1}(z)$ and $\mu_{-,0}(z)$) are equal at each point of the running variable $z$, resulting in zero bias in the local polynomial estimation.

When $\Delta \mu_+^{(\nu)} = \Delta \mu_-^{(\nu)} = 0$, the bias term $\mathbf{B}_{\nu,p,r} (h_n)  = \frac{\Delta \mu_{+}^{(r)}\mathcal{B}_{+,\nu,p,r}(h_n) -  \Delta \mu_{-}^{(r)}\mathcal{B}_{-,\nu,p,r}(h_n) }{ r!}$ is also zero. This condition implies that $\Delta \mu(z)$ is a constant, allowing unbiased point estimates through linear estimations on both sides of the threshold. As a result, two simple OLS regressions would be sufficient for accurately estimating the treatment effect.

Therefore, it can be very beneficial for researchers to know if the shapes of the data-generating functions remain stable over time. This knowledge would allow them to determine whether they can use simpler, less biased estimation methods. We propose a simple nonparametric Two-Sample Kolmogorov-Smirnov (KS) test for time-invariance of the functional forms of conditional means. The procedure is an adaptation of the KS test for conditional moment restrictions from Whang (2001).

In order to implement the test, it is required to have data from more than one time period before the implementation of the treatment of interest ($t\in\left\{-1,0,1\right\}$). Using the notation from Appendix A, we write the nonparametric regression model for the outcome in time $t$ as

\begin{equation*}
    Y_{i,t}=\mu_{t}(Z_{i})+\varepsilon_{i,t}
\end{equation*}

where $\mu_{t}(Z_{i})=\mathbb{E}\left[Y_{i,t}|Z_{i}\right]$ and $\varepsilon_{i,t}=Y_{i,t}-\mathbb{E}\left[Y_{i,t}|Z_{i}\right]$. Let $\mathcal{S}^{+}(Z)$ and $\mathcal{S}^{-}(X)$ denote, respectively, the support of the running variable above and below the threshold $z_{0}$. We want to test whether $\mathbb{E}\left[\mu_{0}(Z_{i})-\mu_{-1}(Z_{i})|Z_{i}\right]=0$ almost surely for $z\in\mathcal{S}^{+}(Z)$ and whether  $\mathbb{E}\left[\mu_{0}(Z_{i})-\mu_{-1}(Z_{i})|Z_{i}\right]=0$ almost surely for $z\in\mathcal{S}^{-}(Z)$. To do so, we assume the conditional mean of $Y_{i,t}$ can be approximated by a $K\times1$ vector of approximating functions $p^{K}(.)=(p_{1}(.),...,p_{K}(.))^{'}$. Using this vector, we write the regression model for the outcome in period $t$ above and below the threshold, respectively, as

\begin{align*}
    &Y_{i,t}=p^{K}(Z_{i})^{'}\gamma^{+}_{t}+u_{i,t}^{+},\\
    &Y_{i,t}=p^{K}(Z_{i})^{'}\gamma^{-}_{t}+u_{i,t}^{-}
\end{align*}

Under the null hypothesis of time-invariant conditional means, we have

\begin{align*}
    &\mathbb{E}\left [ \left ( p^{K}(Z_{i})^{'}\gamma^{+}_{0}-p^{K}(Z_{i})^{'}\gamma^{+}_{-1} \right )\mathbf{1}\left\{ Z_{i}\leq z\right\} \right ]=0, \forall z\in\mathcal{S}^{+}(Z),\\&\mathbb{E}\left [ \left ( p^{K}(Z_{i})^{'}\gamma^{-}_{0}-p^{K}(Z_{i})^{'}\gamma^{-}_{-1} \right )\mathbf{1}\left\{ Z_{i}\leq z\right\} \right ]=0, \forall z\in\mathcal{S}^{-}(Z)
\end{align*}

which motivates the following KS type test statistics:

\begin{align*}
    &KS_{+}=\sup_{z\in\mathcal{S}^{+}(z)}\frac{1}{\sqrt{n}^{+}}\sum_{i:Z_{i}\geq z_{0}}\left |\left ( p^{K}(Z_{i})^{'}\widehat{\gamma}^{+}_{0}-p^{K}(Z_{i})^{'}\widehat{\gamma}^{+}_{-1} \right )\mathbf{1}\left\{Z_{i}\leq z \right\} \right |\\& KS_{-}=\sup_{z\in\mathcal{S}^{-}(z)}\frac{1}{\sqrt{n}^{-}}\sum_{i:Z_{i}\geq z_{0}}\left |\left ( p^{K}(Z_{i})^{'}\widehat{\gamma}^{-}_{0}-p^{K}(Z_{i})^{'}\widehat{\gamma}^{-}_{-1} \right )\mathbf{1}\left\{Z_{i}\leq z \right\} \right |
\end{align*}

where $n^{+}=\sum_{i=1}^{n}\mathbf{1}\left\{Z_{i}\geq z_{0}\right\}$, $n^{-}=\sum_{i=1}^{n}\mathbf{1}\left\{Z_{i}< z_{0}\right\}$, $\widehat{\gamma}^{+}_{t}$ is the vector of least square estimates of the regression of $Y_{i,t}$ on $p^{K}(Z_{i})^{'}$ for units above the threshold and $\widehat{\gamma}^{-}_{t}$ is the vector of least square estimates of the regression of $Y_{i,t}$ on $p^{K}(Z_{i})^{'}$ for units below the threshold.

Confidence intervals and p-values for the test statistics can be obtained using a recentered bootstrap \citep{hall1996bootstrap} in which the null hypothesis is imposed for the bootstrapped distribution. In Appendix F, we provide the conditions under which the bootstrap is consistent, and the test statistics are powerful against local alternatives.

\section{Partial Identification and Sensitivity Analysis}\label{sec:partial}

\subsection{Partial Identification under Bounded Variation Assumptions}

In this section, we consider the partial identification of treatment effects when Assumption~\ref{assump:conf_time_invar} is violated. In many settings, assuming that confounding effects remain constant over time is implausible. If that is the case, then the effect of the treatment of interest is not point identified. 

We replace the time-invariance assumption with bounded variations assumptions, in the spirit of \cite{manski2018how}, and derive identified sets for the causal effect of interest as a function of the sensitivity parameters that bound variations in potential outcomes across time. Formally, we invoke the following assumption:

\begin{assumption}[Bounded Variations]\label{assump:bounded_variation} Let $c_{1}$ and $c_{2}$ be scalars. We assume that
\begin{align*}
    &\left |\mathbb{E}\left[Y_{i,1}(1,0)-Y_{i,0}(1,0)|Z_{i}=z_{}\right] \right |\leq c_{1},\\&\left |\mathbb{E}\left [ Y_{i,1}(1,0)-Y_{i,1}(0,0)|Z_{i}=z_{0} \right ]-\mathbb{E}\left [ Y_{i,0}(1,0)-Y_{i,0}(0,0)|Z_{i}=z_{0} \right ] \right |\leq c_{2}
\end{align*}    
\end{assumption}

The first inequality in Assumption~\ref{assump:bounded_variation} states that the difference between the mean potential outcome $Y_{i,0}(0,1)$ (observed) and the mean potential outcome $Y_{i,1}(1,0)$ (not observed) at the threshold is no greater than the scalar $c_{1}$. Thus, the parameter $c_{1}$ can be interpreted as a time-trend in the evolution of the potential outcome associated to receiving the confounder, but not the treatment of interest.

The second inequality states that the difference between the mean confounding effect in period 0 (point identified under Assumptions~\ref{assump:continuity} and \ref{assump:conf_time_invar}), and the mean confounding effect in period 1 (not observed nor identified) at the threshold is no greater than the scalar $c_{2}$. In that sense, $c_{2}$ can be interpreted as a time-trend in confounding effects. In the next lemma, we derive the identified set for $\tau_{c}$:

\begin{lemma}\label{lemma:bounds}
    Under Assumptions 1,2 and 5, $\tau_{c}\in\left[\tau_{c}^{LB},\tau_{c}^{UB}\right]$, where
    \begin{align*}
        &\tau_{c}^{LB}=\max\left\{ \Delta Y^{+}-c_{ 1}, (\Delta Y^{+}-\Delta Y^{-})-c_{2}\right\}\\&\tau_{c}^{UB}=\min\left\{ \Delta Y^{+}+c_{1}, (\Delta Y^{+}-\Delta Y^{-})+c_{2}\right\}
    \end{align*}
\end{lemma}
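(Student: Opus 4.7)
The plan is to turn the one-sided target $\tau_c=\mathbb{E}[Y_{i,1}(1,1)-Y_{i,1}(1,0)\mid Z_i=z_0]$ into something involving only quantities identified by the sharp design plus a residual term that each bounded-variation inequality of Assumption~\ref{assump:bounded_variation} directly controls. Concretely, Assumptions~\ref{assump:continuity} and \ref{assump:random} (together with the sharp compliance of Assumption~\ref{assump:disc_prob_treat}, already in force for the DiDC setting) let us identify
\begin{align*}
\Delta Y^{+} &= \mathbb{E}[Y_{i,1}(1,1)-Y_{i,0}(1,0)\mid Z_i=z_0],\\
\Delta Y^{-} &= \mathbb{E}[Y_{i,1}(0,0)-Y_{i,0}(0,0)\mid Z_i=z_0],
\end{align*}
so the only unidentified pieces we need to wrestle with are the two counterfactual differences appearing in Assumption~\ref{assump:bounded_variation}.

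I would produce two different add-and-subtract decompositions of $\tau_c$, each matched to one of the bounded-variation inequalities. First, adding and subtracting $\mathbb{E}[Y_{i,0}(1,0)\mid Z_i=z_0]$ gives
\begin{equation*}
\tau_c \;=\; \Delta Y^{+} \;-\; \mathbb{E}\!\left[Y_{i,1}(1,0)-Y_{i,0}(1,0)\mid Z_i=z_0\right],
\end{equation*}
so the first inequality of Assumption~\ref{assump:bounded_variation} yields $\tau_c\in[\Delta Y^{+}-c_1,\,\Delta Y^{+}+c_1]$. Second, rewriting the DiDC estimand as
\begin{equation*}
\Delta Y^{+}-\Delta Y^{-} \;=\; \tau_c \;+\; \Big(\mathbb{E}[Y_{i,1}(1,0)-Y_{i,1}(0,0)\mid Z_i=z_0]-\mathbb{E}[Y_{i,0}(1,0)-Y_{i,0}(0,0)\mid Z_i=z_0]\Big),
\end{equation*}
and solving for $\tau_c$, the second inequality of Assumption~\ref{assump:bounded_variation} yields $\tau_c\in[(\Delta Y^{+}-\Delta Y^{-})-c_2,\,(\Delta Y^{+}-\Delta Y^{-})+c_2]$.

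Finally, I would take the intersection of the two intervals: the lower bound is the maximum of the two lower endpoints and the upper bound the minimum of the two upper endpoints, which is exactly the expression for $[\tau_c^{LB},\tau_c^{UB}]$ in the statement. To show sharpness one would also argue that any value in this intersection can be attained by some admissible distribution of potential outcomes satisfying Assumption~\ref{assump:bounded_variation}; this can be done by exhibiting, for any candidate $\tau_c$ in the intersection, feasible choices of the two unidentified conditional means that are consistent with the identified $\Delta Y^{+},\Delta Y^{-}$ and with the two bounds $c_1,c_2$. The main obstacle, if any, is this sharpness step: the algebraic containment is immediate from the two decompositions, but verifying that the intersection is tight requires constructing the witness distributions carefully, and in particular checking that the two counterfactual mean differences can be chosen independently so no joint feasibility constraint shrinks the set further.
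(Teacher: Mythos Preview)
Your proposal is correct and is precisely the natural argument: the paper does not include a standalone proof of this lemma (it is omitted from Appendix~\ref{app:proofs}), but your two add-and-subtract decompositions are exactly the inequalities the authors later invoke when combining bounded-variation bounds with modularity (see their proofs of Lemmas~\ref{lemma:boundvar_comp} and~\ref{lemma:boundvar_subs}). Note that the lemma as stated only asserts containment, so your sharpness discussion, while a reasonable addition, goes beyond what needs to be shown.
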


Lemma~\ref{lemma:bounds} shows that average treatment effect for confounded individuals at the cutoff is partially identified as a function of the sensitivity parameters $c_{1}$ and $c_{2}$. It is straightforward to identify breakdown values for this identified set. That is, the largest values of $c_{1}$ and $c_{2}$ for which a particular conclusion holds. For instance, researchers might be interested in the largest values of bounded variation under which one can conclude that the causal effect of interest is positive. If the lower bound is above 0 for large values of $c_{1}$ and $c_{2}$, then one can conclude that the qualitative conclusions of the empirical exercise are robust to violations of the identifying assumptions.

Estimation of the bounds for $\tau$ is straightforward. We a nonparametric estimator in which the local polynomial estimates described in Section 3 are plugged in the max and min operators. The estimators for the lower and the upper bound are, respectively,

\begin{align*}
    &\widehat{\tau}_{c}^{LB}=\max\left\{\Delta \hat{\mu}^{(\nu)}_{+,p}(h_n,b_{n})-c_{1},(\Delta \hat{\mu}^{(\nu)}_{+,p}(h_n,b_{n})-\Delta \hat{\mu}^{(\nu)}_{-,p}(h_n,b_{n}))-c_{2}\right\}\\&\widehat{\tau}_{c}^{UB}=\min\left\{\Delta \hat{\mu}^{(\nu)}_{+,p}(h_n,b_{n})+c_{1},(\Delta \hat{\mu}^{(\nu)}_{+,p}(h_n,b_{n})-\Delta \hat{\mu}^{(\nu)}_{-,p}(h_n,b_{n}))+c_{2}\right\}
\end{align*}

We use the Delta method for \textit{Hadamard directionally differentiable mappings} \cite{fang2018inference} to derive the asymptotic properties of the plug-in estimators for the bounds. The next theorem shows that the estimators of the bounds converge to a non-Gaussian limiting process at the $\sqrt{n\min\left\{h_{n},b_{n}\right\}}$ rate:

\begin{theorem}
    Suppose Assumptions \ref{assump:continuity}, \ref{assump:disc_prob_treat} and \ref{assump:bounded_variation} hold. Furthermore, assume that $c_{1},c_{2}\in\mathcal{C}$ for some finite grid $\mathcal{C}$. Then,

    \begin{equation*}
        \sqrt{n\min\left\{h_{n},b_{n}\right\}}\begin{pmatrix}
\widehat{\tau}_{c}^{LB}-\tau_{c}^{LB} \\
\widehat{\tau}_{c}^{UB}-\tau_{c}^{UB}
\end{pmatrix}\rightarrow \textbf{Z}(y,z,c_{1},c_{2})
    \end{equation*}

    a tight random element of $l^{\infty}\left(\mathcal{S}(Y)\times\mathcal{S}(Z)\times\mathcal{C},\mathbb{R}^{2}\right)$.
\end{theorem}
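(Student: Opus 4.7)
The plan is to reduce the problem to an application of the functional Delta method for Hadamard directionally differentiable mappings of \cite{fang2018inference}, after establishing the joint asymptotic distribution of the underlying local polynomial estimators on each side of the cutoff. Write $\hat{\Delta}_{+}:=\Delta\hat{\mu}^{(\nu)}_{+,p}(h_n,b_n)$ and $\hat{\Delta}_{-}:=\Delta\hat{\mu}^{(\nu)}_{-,p}(h_n,b_n)$, and let $\Delta_{+},\Delta_{-}$ denote their population counterparts. For each $(c_1,c_2)\in\mathcal{C}$ define the maps
\begin{align*}
\phi^{LB}_{c_1,c_2}(a,b) &= \max\{a-c_1,(a-b)-c_2\},\\
\phi^{UB}_{c_1,c_2}(a,b) &= \min\{a+c_1,(a-b)+c_2\},
\end{align*}
so that $\widehat{\tau}_c^{LB}=\phi^{LB}_{c_1,c_2}(\hat{\Delta}_+,\hat{\Delta}_-)$ and $\tau_c^{LB}=\phi^{LB}_{c_1,c_2}(\Delta_+,\Delta_-)$ (similarly for the upper bound). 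The strategy is: (i) prove joint asymptotic normality of $(\hat{\Delta}_+,\hat{\Delta}_-)$ at the rate $r_n=\sqrt{n\min\{h_n,b_n\}}$; (ii) verify Hadamard directional differentiability of $\phi^{LB}$ and $\phi^{UB}$; (iii) apply the functional Delta method; (iv) extend the pointwise-in-$(c_1,c_2)$ statement to the uniform statement using finiteness of $\mathcal{C}$.

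For step (i), I would invoke the robust bias-correction asymptotic theory of \cite{calonico2014robust} as already adapted in the proof of Theorem 1 in Section \ref{subsec:Inference}. The key observation is that $\hat{\Delta}_+$ and $\hat{\Delta}_-$ are computed on disjoint subsamples defined by $\mathbf{1}\{Z_i\geq 0\}$ and $\mathbf{1}\{Z_i<0\}$, so conditionally on $\chi_n$ they are independent, and the bias-corrected versions satisfy
\begin{equation*}
r_n\begin{pmatrix}\hat{\Delta}_+-\Delta_+\\ \hat{\Delta}_--\Delta_-\end{pmatrix}\xrightarrow{d}\mathcal{N}(0,\Sigma),
\end{equation*}
with $\Sigma$ diagonal, whose entries are the one-sided analogues of $\mathbf{V}^{bc}_{\nu,p,q}(h_n,b_n)$. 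This is a direct consequence of combining the asymptotic variance expressions in Section \ref{subsec:Inference} with standard arguments for two-sample local polynomial estimators.

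For step (ii), the maps $(a,b)\mapsto a-c_1$, $(a,b)\mapsto(a-b)-c_2$ are linear, hence Hadamard differentiable; the $\max$ and $\min$ of two scalars are Hadamard directionally differentiable with known directional derivatives equal to the $\max$ (resp.\ $\min$) of the component directional derivatives on the contact set $\{(a,b):a-c_1=(a-b)-c_2\}$, and to the argmax/argmin component elsewhere. By the chain rule for Hadamard directional differentiability, $\phi^{LB}_{c_1,c_2}$ and $\phi^{UB}_{c_1,c_2}$ are Hadamard directionally differentiable at $(\Delta_+,\Delta_-)$ with explicit piecewise-linear derivatives $\phi^{LB\prime}_{c_1,c_2}$ and $\phi^{UB\prime}_{c_1,c_2}$. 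Applying Theorem~2.1 of \cite{fang2018inference} together with step (i) yields, for each fixed $(c_1,c_2)$,
\begin{equation*}
r_n\begin{pmatrix}\widehat{\tau}^{LB}_c-\tau^{LB}_c\\ \widehat{\tau}^{UB}_c-\tau^{UB}_c\end{pmatrix}\xrightarrow{d}\begin{pmatrix}\phi^{LB\prime}_{c_1,c_2}(\mathcal{Z})\\ \phi^{UB\prime}_{c_1,c_2}(\mathcal{Z})\end{pmatrix},
\end{equation*}
where $\mathcal{Z}\sim\mathcal{N}(0,\Sigma)$. Finally, since $\mathcal{C}$ is a finite grid, joint convergence over $(c_1,c_2)\in\mathcal{C}$ follows from the continuous mapping theorem applied to the single Gaussian limit $\mathcal{Z}$, and tightness in $l^{\infty}(\mathcal{S}(Y)\times\mathcal{S}(Z)\times\mathcal{C},\mathbb{R}^2)$ is immediate because the limiting object is a finite-dimensional measurable function of a tight Gaussian vector.

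The main obstacle I anticipate is the non-differentiability of the $\max$ and $\min$ operators on the contact set where the two competing bounds coincide: there the limit is genuinely non-Gaussian, and one must be careful that the hypotheses of \cite{fang2018inference} (in particular, that $\Delta_+,\Delta_-$ are estimated by a sequence satisfying a full weak-convergence statement, not merely marginal convergence in distribution) are met. Establishing the joint weak convergence with a tight Gaussian limit, which the independence of the two one-sided estimators delivers, is therefore the crucial ingredient; once in place, the remaining steps are a mechanical application of the chain rule for Hadamard directional derivatives and the continuous mapping theorem.
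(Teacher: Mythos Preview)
Your proposal is correct and follows essentially the same route as the paper: both arguments reduce to applying the Fang--Santos delta method for Hadamard directionally differentiable maps, after (i) establishing joint weak convergence of $(\hat{\Delta}_+,\hat{\Delta}_-)$ from the one-sided local polynomial asymptotics and the disjointness of the two subsamples, and (ii) verifying directional differentiability of $\max/\min$ composed with the linear maps via the chain rule. Your treatment is in fact slightly more explicit than the paper's on the uniformity over the finite grid $\mathcal{C}$ and on why tightness of the limit follows.
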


Inference for the bounds estimates is particularly challenging. Using the limiting process to which the estimates converge to obtain analytical asymptotic confidence bands is difficult. An alternative would be a bootstrap. Although the Delta Method is valid for Hadamard directionally differentiable functions, the standard nonparametric bootstrap is not. Instead, we recommend researchers to use the bootstrap procedure proposed in Section\ref{sec:Estimation_inference_biascorrection} in \cite{fang2018inference}\footnote{Note, however, that the mapping is fully differentiable if the quantities within the min and max operators are different, in which case the standard bootstrap is valid}.

\subsection{Partial Identification under Modularity Assumptions}

An alternative approach to partial identification in the Diff-in-Disc setting is to partially identify treatment effects by exploiting assumptions regarding the interaction between the treatment of interest and the confounding policy. The first step is to assume that potential outcomes are bounded up to known values:

\begin{assumption}[Bounded Outcomes]\label{assump:bounded_outcomes}

For all $i\in\left\{1,...,n\right\}$, $t\in\left\{0,1\right\}$ and $(d_{0},d_{1})\in\left\{0,1\right\}^{2}$,

\begin{equation*}
    -\infty<y^{min}\leq Y_{i,t}(d_{0},d_{1})\leq y^{max}<\infty
\end{equation*}
   
\end{assumption}

In some settings the values $y^{min}$ and $y^{max}$ readily justified by the nature of the outcome, and in others these values are sensitivity parameters that reflect the beliefs about the smallest and largest possible values of potential outcomes \citep{kline2025finite}.

Assumption \ref{assump:bounded_outcomes} motivates the worst-case bounds \citep{manski1989anatomy}. Without further assumptions, it implies that the treatment effects lie in the interval $\left[y^{min}-y^{max},y^{max}-y^{min}\right]$. Such bounds are usually uninformative as the bound covers zero, and hence the sign of the treatment effect is not identified. 

Theoretical knowledge of the applied setting can be used to tighten the bounds. For instance, researchers can draw insights from economic theory to assume that the treatment of interest and the confounding policy are complementary:

\begin{assumption}[Complementarity]\label{assump:compl}

For all $i\in\left\{1,...,n\right\}$ and $t\in\left\{0,1\right\}$,

\begin{equation*}
    Y_{i,1}(1,0)+Y_{i,1}(0,1)\leq Y_{i,1}(1,1)+Y_{i,1}(0,0)
\end{equation*}
  
\end{assumption}

Assumption \ref{assump:compl} is the \textit{supermodularity} assumption from \cite{twinam2017complementarity} for the case of two binary treatments. It formalizes the notion that the magnitude of the causal effect of interest increases with the confounding policy. Mapping the assumption to our empirical setting, one can assume, for instance, that the effect of relaxing fiscal rules on public finance outcomes is greater for municipalities with less-skilled incumbents. The proposition below derives the sharp identified sets for $\tau_{c}$ and $\tau_{uc}$ under complementarity:

\begin{lemma}
    Suppose Assumptions \ref{assump:continuity}, \ref{assump:disc_prob_treat}, \ref{assump:bounded_outcomes} and \ref{assump:compl} hold. Then,

    \begin{equation*}
        \tau_{c}\in\left[y^{min}-Y_{1}^{-},y^{max}-y^{min}\right]
    \end{equation*}

    and

    \begin{equation*}
        \tau_{uc}\in\left[y^{min}-y^{max},Y_{1}^{+}-y^{min}\right]
    \end{equation*}
where $Y_{1}^{+} = \Lim{z\rightarrow 0^+} Y_{1}, \quad Y_{1}^{-} = \Lim{z\rightarrow 0^-} Y_{1}$
\end{lemma}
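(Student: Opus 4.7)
The plan is to combine the one-sided identification of $Y_1^+$ and $Y_1^-$ as specific potential-outcome conditional means (guaranteed by the sharp-design assumptions) with the bounded-outcome restriction and the supermodularity inequality to control the two unobserved conditional means that appear in $\tau_c$ and $\tau_{uc}$. The arguments for the two parameters are parallel and I would present them in tandem.

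I would first invoke Assumption \ref{assump:disc_prob_treat} to note that, in period 1, the observed outcome above the cutoff coincides with $Y_{i,1}(1,1)$ while below the cutoff it coincides with $Y_{i,1}(0,0)$. Combined with Assumption \ref{assump:continuity}, the standard limit argument gives $Y_1^+ = \E[Y_{i,1}(1,1)\mid Z_i = z_0]$ and $Y_1^- = \E[Y_{i,1}(0,0)\mid Z_i = z_0]$. Writing $m_{10}\equiv\E[Y_{i,1}(1,0)\mid Z_i = z_0]$ and $m_{01}\equiv\E[Y_{i,1}(0,1)\mid Z_i = z_0]$, the two parameters of interest decompose as
\begin{equation*}
\tau_c = Y_1^+ - m_{10}, \qquad \tau_{uc} = m_{01} - Y_1^-,
\end{equation*}
so the task reduces to bounding $m_{10}$ and $m_{01}$. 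Taking conditional expectations on both sides of the pointwise supermodularity inequality in Assumption \ref{assump:compl} yields $m_{10} + m_{01} \leq Y_1^+ + Y_1^-$. Combining this with the trivial bounds $y^{min}\leq m_{10}, m_{01}\leq y^{max}$ from Assumption \ref{assump:bounded_outcomes} gives $m_{10}\leq Y_1^++Y_1^- - y^{min}$ and $m_{01}\leq Y_1^++Y_1^- - y^{min}$. Plugging these into the decomposition produces the lower bound $y^{min}-Y_1^-$ for $\tau_c$ and the upper bound $Y_1^+ - y^{min}$ for $\tau_{uc}$. The remaining worst-case endpoints $y^{max}-y^{min}$ and $y^{min}-y^{max}$ follow directly from the fact that a difference of two random variables taking values in $[y^{min},y^{max}]$ lies in $[y^{min}-y^{max}, y^{max}-y^{min}]$, and taking conditional expectations preserves this range.

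The only subtle step is the passage from the almost-sure supermodularity restriction on realizations to the corresponding inequality at the cutoff, which I would state explicitly as an application of the monotonicity of conditional expectation so that no extra independence between the four potential outcomes is implicitly smuggled in. The lemma does not assert sharpness of the intervals, and indeed the $y^{max}-y^{min}$ endpoint for $\tau_c$ (and its symmetric counterpart for $\tau_{uc}$) is looser than the $Y_1^+ - y^{min}$ endpoint one obtains by using only $m_{10}\geq y^{min}$; if one wished to establish sharpness, the main obstacle would be constructing joint distributions of $(Y_{i,1}(0,0), Y_{i,1}(1,0), Y_{i,1}(0,1), Y_{i,1}(1,1))$ that saturate each endpoint while jointly respecting Assumptions \ref{assump:bounded_outcomes} and \ref{assump:compl}, but for proving containment no such construction is needed.
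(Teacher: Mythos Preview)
Your proposal is correct and follows essentially the same route as the paper: identify $Y_1^+$ and $Y_1^-$ as conditional means of $Y_{i,1}(1,1)$ and $Y_{i,1}(0,0)$ at the cutoff, take conditional expectations of the supermodularity inequality, and combine with the bounded-outcome restriction to pin down the one informative endpoint while leaving the other at its worst-case value. The paper performs the same steps with a slightly different algebraic ordering (it rearranges supermodularity to $\tau_{uc}\leq\tau_c$ before bounding each side, whereas you keep the sum form $m_{10}+m_{01}\leq Y_1^++Y_1^-$), but the content is identical, and your observation that the stated upper bound $y^{max}-y^{min}$ for $\tau_c$ is looser than the attainable $Y_1^+-y^{min}$ is a valid side remark that the paper does not make explicit.
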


Alternatively, researchers can assume that the treatment of interest and the confounder are substitutes:

\begin{assumption}[Substitutability]\label{assump:subst}

For all $i\in\left\{1,...,n\right\}$ and $t\in\left\{0,1\right\}$,

\begin{equation*}
    Y_{i,t}(1,0)+Y_{i,t}(0,1)\geq Y_{i,t}(1,1)+Y_{i,t}(0,0)
\end{equation*}
    
\end{assumption}

Assumption \ref{assump:subst} is the \textit{submodularity} assumption from \cite{twinam2017complementarity} for the case of two binary treatments. It states that the treatment effect of interest decreases in the confounding policy. The proposition below derives the identified sets for $\tau_{c}$ and $\tau_{uc}$ under substitutability:

\begin{lemma}
    Suppose Assumptions \ref{assump:continuity}, \ref{assump:disc_prob_treat}, \ref{assump:bounded_outcomes} and \ref{assump:subst} hold. Then,

    \begin{equation*}
        \tau_{c}\in\left[y^{min}-y^{max},y^{max}-Y_{1}^{-}\right]
    \end{equation*}

    and

    \begin{equation*}
        \tau_{uc}\in\left[Y_{1}^{+}-y^{max},y^{max}-y^{min}\right]
    \end{equation*}
\end{lemma}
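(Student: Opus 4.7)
The plan is to identify the two ``on-diagonal'' counterfactual means at the cutoff via continuity and the sharp design, and then bound each target by combining the box constraints from Assumption \ref{assump:bounded_outcomes} with the single linear inequality implied by the submodularity condition in Assumption \ref{assump:subst}. Under Assumptions \ref{assump:continuity} and \ref{assump:disc_prob_treat}, units above the cutoff realize $Y_{i,1} = Y_{i,1}(1,1)$ and units below the cutoff realize $Y_{i,1} = Y_{i,1}(0,0)$, so taking the right and left limits of $\E[Y_{i,1}\mid Z_i = z]$ at $z_0$ point-identifies $Y_1^+ = \E[Y_{i,1}(1,1)\mid Z_i = z_0]$ and $Y_1^- = \E[Y_{i,1}(0,0)\mid Z_i = z_0]$. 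The ``off-diagonal'' means $\E[Y_{i,1}(1,0)\mid Z_i=z_0]$ and $\E[Y_{i,1}(0,1)\mid Z_i=z_0]$ remain unobserved, and the rest of the proof bounds them.

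\textbf{Bounds for $\tau_c$.} Write $\tau_c = Y_1^+ - \E[Y_{i,1}(1,0)\mid Z_i = z_0]$. Taking conditional expectations of the inequality in Assumption \ref{assump:subst} at $Z_i = z_0$ and using the identifications above yields
\[
\E[Y_{i,1}(1,0)\mid Z_i=z_0] + \E[Y_{i,1}(0,1)\mid Z_i=z_0] \;\geq\; Y_1^+ + Y_1^-.
\]
Replacing $\E[Y_{i,1}(0,1)\mid Z_i = z_0]$ by its upper box bound $y^{max}$ from Assumption \ref{assump:bounded_outcomes} and rearranging gives $\E[Y_{i,1}(1,0)\mid Z_i = z_0] \geq Y_1^+ + Y_1^- - y^{max}$, which translates into the upper bound $\tau_c \leq y^{max} - Y_1^-$. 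For the lower bound I use only the opposite box constraint $\E[Y_{i,1}(1,0)\mid Z_i = z_0] \leq y^{max}$ together with $Y_1^+ \geq y^{min}$ to obtain $\tau_c \geq y^{min} - y^{max}$; substitutability produces no useful inequality in this direction because it only forces the sum of the two missing means to be large, not either one individually to be small.

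\textbf{Bounds for $\tau_{uc}$.} The argument is symmetric. Writing $\tau_{uc} = \E[Y_{i,1}(0,1)\mid Z_i=z_0] - Y_1^-$, submodularity combined with the box bound $\E[Y_{i,1}(1,0)\mid Z_i=z_0] \leq y^{max}$ yields $\E[Y_{i,1}(0,1)\mid Z_i=z_0] \geq Y_1^+ + Y_1^- - y^{max}$, hence $\tau_{uc} \geq Y_1^+ - y^{max}$. The upper bound uses only the box constraint $\E[Y_{i,1}(0,1)\mid Z_i = z_0] \leq y^{max}$ and $Y_1^- \geq y^{min}$, giving $\tau_{uc} \leq y^{max} - y^{min}$.

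\textbf{Main obstacle.} There is no deep obstacle: once continuity plus the sharp design identifies $Y_1^+$ and $Y_1^-$, the result is a two-variable linear-programming exercise with three active constraints for each target (two box constraints plus the single submodularity inequality). The only subtlety worth signalling is that the proof invokes Assumption \ref{assump:continuity} only at the observed treatment pairs $(d_0,d_1) \in \{(1,1),(0,0)\}$; the two remaining potential outcomes, which drive $\tau_c$ and $\tau_{uc}$, are exactly the ones that stay only partially identified, which is why the modularity content of Assumption \ref{assump:subst} is essential for producing any informative side of the identified set.
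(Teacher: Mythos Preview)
Your proof is correct and follows essentially the same approach as the paper: identify $Y_1^+$ and $Y_1^-$ via continuity and the sharp design, then combine the submodularity inequality with the box constraints from Assumption~\ref{assump:bounded_outcomes} to obtain one informative bound for each parameter, leaving the other side at its worst-case value. The only cosmetic difference is that the paper rearranges Assumption~\ref{assump:subst} as $Y_{i,1}(0,1)-Y_{i,1}(0,0)\geq Y_{i,1}(1,1)-Y_{i,1}(1,0)$ and bounds $\tau_c$ and $\tau_{uc}$ directly from that comparison, whereas you keep the sum form and bound the missing off-diagonal mean first; the two routes are algebraically equivalent.
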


\subsection{Partial Identification Combining Bounded Variation and Modularity Assumptions}

So far in this section, we have described two approaches to partially identify the target parameters in the DiDC setting under different sets of assumptions.

The bounded variation approach partially identifies $\tau_{c}$ as a function of a measure of deviation from Assumption \ref{assump:conf_time_invar}. The procedure has several desirable features. It allows for researchers to partially identify the parameters under different values for the deviation, which also allows researchers to assess the robustness of empirical findings. However, it does not allow for the partial identification of $\tau_{uc}$, not at least without further assumptions regarding the homogeneity of treatment effects.

The modularity approach, on the other hand, allows for the partial identification of $\tau_{c}$ ad $\tau_{uc}$, by invoking assumptions on the interaction of the confounding policy and the treatment of interest which can be rooted in economic theory or background knowledge of the empirical setting. However, the procedure does not allow researchers to exploit the temporal structure of the data. Nevertheless, it is (particularly) useful for partial identification in RD settings in which there is a confounding policy at the threshold, but no data from periods before the implementation of the treatment of interest.

In the lemmas below, we show that combining bounded variation and modularity assumptions can strengthen the bounds on $\tau_{uc}$ and $\tau_{c}$:

\begin{lemma}\label{lemma:boundvar_comp}
    Suppose Assumptions \ref{assump:continuity}, \ref{assump:disc_prob_treat} and \ref{assump:bounded_variation}-\ref{assump:compl} hold. Then, $\tau_{c}\in\left[\tau_{c}^{LB},\tau_{c}^{UB}\right]$ and $\tau_{uc}\in\left[\tau_{uc}^{LB},\tau_{uc}^{UB}\right]$, where

    \begin{align*}
        &\tau_{c}^{LB}=\max\left\{\Delta Y^{+}-c_{1},(\Delta Y^{+}-\Delta Y^{-})-c_{2},y^{min}-Y_{1}^{-}\right\}\\&\tau_{c}^{UB}=\min\left\{\Delta Y^{+}+c_{1},(\Delta Y^{+}-\Delta Y^{-})+c_{2},y^{max}-y^{min}\right\}
    \end{align*}

    and 

    \begin{align*}
    &\tau_{uc}^{LB}=y^{min}-y^{max}\\&\tau_{uc}^{UB}=\min\left\{\Delta Y^{+}+c_{1},Y_{1}^{+}-y^{min}\right\}
    \end{align*}
    
\end{lemma}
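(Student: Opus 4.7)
The plan is to intersect the identified sets produced under the two sets of assumptions—bounded variation (Lemma~\ref{lemma:bounds}) and complementarity (the preceding modularity lemma)—and then transfer the resulting upper bound on $\tau_{c}$ to $\tau_{uc}$ via the supermodularity inequality.

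First, Lemma~\ref{lemma:bounds} applied under Assumptions~\ref{assump:continuity}, \ref{assump:disc_prob_treat}, and \ref{assump:bounded_variation} gives $\tau_{c}\in\bigl[\max\{\Delta Y^{+}-c_{1},\,(\Delta Y^{+}-\Delta Y^{-})-c_{2}\},\,\min\{\Delta Y^{+}+c_{1},\,(\Delta Y^{+}-\Delta Y^{-})+c_{2}\}\bigr]$, while the complementarity lemma applied under Assumptions~\ref{assump:continuity}, \ref{assump:disc_prob_treat}, \ref{assump:bounded_outcomes}, and \ref{assump:compl} delivers $\tau_{c}\in[y^{min}-Y_{1}^{-},\,y^{max}-y^{min}]$ together with $\tau_{uc}\in[y^{min}-y^{max},\,Y_{1}^{+}-y^{min}]$. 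Because all four sets of assumptions are imposed simultaneously, $\tau_{c}$ must lie in the intersection of the two intervals above, which is exactly the claimed $[\tau_{c}^{LB},\tau_{c}^{UB}]$.

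For $\tau_{uc}$, the lower bound $y^{min}-y^{max}$ is inherited directly from the complementarity lemma. For the upper bound, I would rearrange Assumption~\ref{assump:compl} as $Y_{i,1}(0,1)-Y_{i,1}(0,0)\leq Y_{i,1}(1,1)-Y_{i,1}(1,0)$ and take conditional expectations at $Z_{i}=z_{0}$ to obtain $\tau_{uc}\leq\tau_{c}$. Chaining this with the upper bound $\tau_{c}\leq\Delta Y^{+}+c_{1}$ from Lemma~\ref{lemma:bounds} and with the direct complementarity bound $\tau_{uc}\leq Y_{1}^{+}-y^{min}$ produces $\tau_{uc}^{UB}=\min\{\Delta Y^{+}+c_{1},\,Y_{1}^{+}-y^{min}\}$.

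The argument is almost entirely bookkeeping. The only substantive step is the supermodularity transfer, which relies on Assumption~\ref{assump:compl} holding unit-by-unit so that the inequality survives integration against the conditional distribution at the threshold. The main point to be careful about is to read the conclusion as a set-containment statement rather than a sharpness result: Assumption~\ref{assump:bounded_variation} does not involve $Y_{i,1}(0,1)$, so on its own it cannot refine either end of the identified set for $\tau_{uc}$, and the refinement of the upper bound on $\tau_{uc}$ arises only indirectly, through the complementarity-mediated inequality $\tau_{uc}\leq\tau_{c}$.
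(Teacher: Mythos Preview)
Your proposal is correct and matches the paper's argument. For $\tau_{c}$ the paper simply intersects the bounds from Lemma~\ref{lemma:bounds} and the complementarity lemma, exactly as you do; for $\tau_{uc}$ the paper keeps the lower bound from the complementarity lemma and derives $\tau_{uc}\le \Delta Y^{+}+c_{1}$ by combining the rearranged supermodularity inequality with the first part of Assumption~\ref{assump:bounded_variation}, which is precisely your ``$\tau_{uc}\le\tau_{c}\le \Delta Y^{+}+c_{1}$'' chain written out in one step.
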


\begin{lemma}\label{lemma:boundvar_subs}
    Suppose Assumptions \ref{assump:continuity}, \ref{assump:disc_prob_treat} and \ref{assump:bounded_variation}, \ref{assump:bounded_outcomes} and \ref{assump:subst} hold. Then, $\tau_{c}\in\left[\tau_{c}^{LB},\tau_{c}^{UB}\right]$ and $\tau_{uc}\in\left[\tau_{uc}^{LB},\tau_{uc}^{UB}\right]$, where

    \begin{align*}
        &\tau_{c}^{LB}=\max\left\{\Delta Y^{+}-c_{1},(\Delta Y^{+}-\Delta Y^{-})-c_{2},y^{min}-y^{max}\right\}\\&\tau_{c}^{UB}=\min\left\{\Delta Y^{+}+c_{1},(\Delta Y^{+}-\Delta Y^{-})+c_{2},y^{max}-Y_{1}^{-}\right\}
    \end{align*}

    and 

    \begin{align*}
    &\tau_{uc}^{LB}=\max\left\{\Delta Y^{+}-c_{1},Y_{1}^{+}-y^{max}\right\}\\&\tau_{uc}^{UB}=y^{max}-y^{min}
    \end{align*}
    
\end{lemma}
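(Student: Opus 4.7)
The plan is to combine the two distinct families of restrictions implied by the hypotheses: (a) the two \emph{bounded variation} inequalities of Assumption \ref{assump:bounded_variation}, which were already exploited in the proof of Lemma \ref{lemma:bounds}, and (b) the constraints coming from Assumption \ref{assump:bounded_outcomes} (bounded outcomes) together with Assumption \ref{assump:subst} (substitutability). Since each family produces valid bounds on its own, the identified sets under the combined assumptions are contained in the intersection of the corresponding intervals, and the proof reduces to deriving each inequality and then intersecting. Throughout, Assumptions \ref{assump:continuity} and \ref{assump:disc_prob_treat} are used in their standard role of identifying $Y_1^+ = \E[Y_{i,1}(1,1)\mid Z_i=z_0]$, $Y_1^- = \E[Y_{i,1}(0,0)\mid Z_i=z_0]$, $Y_0^+ = \E[Y_{i,0}(1,0)\mid Z_i=z_0]$, and $Y_0^- = \E[Y_{i,0}(0,0)\mid Z_i=z_0]$, so that $\Delta Y^+$ and $\Delta Y^-$ are observable.

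For $\tau_c = Y_1^+ - \E[Y_{i,1}(1,0)\mid Z_i=z_0]$, I would first replay the argument of Lemma \ref{lemma:bounds}. The first inequality of Assumption \ref{assump:bounded_variation} pins $\E[Y_{i,1}(1,0)\mid Z_i=z_0] \in [Y_0^+ - c_1,\, Y_0^+ + c_1]$, giving $\tau_c \in [\Delta Y^+ - c_1,\, \Delta Y^+ + c_1]$, and the second inequality analogously yields $\tau_c \in [(\Delta Y^+ - \Delta Y^-) - c_2,\, (\Delta Y^+ - \Delta Y^-) + c_2]$. I would then add the substitutability-driven upper bound: rearranging Assumption \ref{assump:subst} as $Y_{i,1}(1,0) \geq Y_{i,1}(1,1) + Y_{i,1}(0,0) - Y_{i,1}(0,1)$, conditioning on $Z_i = z_0$, and using $\E[Y_{i,1}(0,1)\mid Z_i=z_0] \leq y^{max}$ gives $\E[Y_{i,1}(1,0)\mid Z_i=z_0] \geq Y_1^+ + Y_1^- - y^{max}$, hence $\tau_c \leq y^{max} - Y_1^-$. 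The matching lower bound $\tau_c \geq y^{min} - y^{max}$ follows directly from Assumption \ref{assump:bounded_outcomes}. Intersecting the four intervals yields exactly $[\tau_c^{LB},\, \tau_c^{UB}]$.

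For $\tau_{uc} = \E[Y_{i,1}(0,1)\mid Z_i=z_0] - Y_1^-$, Assumption \ref{assump:subst} written symmetrically as $Y_{i,1}(0,1) \geq Y_{i,1}(1,1) + Y_{i,1}(0,0) - Y_{i,1}(1,0)$ is the key input. Upper-bounding $\E[Y_{i,1}(1,0)\mid Z_i=z_0]$ by $y^{max}$ through Assumption \ref{assump:bounded_outcomes} yields $\tau_{uc} \geq Y_1^+ - y^{max}$, while upper-bounding it by $Y_0^+ + c_1$ through the first half of Assumption \ref{assump:bounded_variation} yields $\tau_{uc} \geq Y_1^+ - Y_0^+ - c_1 = \Delta Y^+ - c_1$; the maximum is $\tau_{uc}^{LB}$. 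The upper bound $\tau_{uc}^{UB} = y^{max} - y^{min}$ is the trivial consequence of $Y_{i,1}(0,1) \leq y^{max}$ and $Y_{i,1}(0,0) \geq y^{min}$.

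The calculations themselves are light bookkeeping, so the main obstacle I foresee is conceptual rather than technical: substitutability is an asymmetric shape restriction that only tightens the upper bound on $\tau_c$ and the lower bound on $\tau_{uc}$, so one must be careful not to invoke its wrong-direction counterparts and to recognize that the remaining two endpoints must collapse to the bounded-outcomes defaults rather than inherit anything useful from Assumption \ref{assump:bounded_variation}.
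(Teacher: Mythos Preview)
Your proposal is correct and follows essentially the same approach as the paper: combine the bounded-variation bounds of Lemma~\ref{lemma:bounds} with the substitutability bounds (the paper's Lemma 7) by intersecting the resulting intervals, and for $\tau_{uc}$ feed the first bounded-variation inequality into the substitutability inequality to obtain the additional lower bound $\Delta Y^{+}-c_{1}$. Your derivation is in fact more explicit than the paper's, which simply cites the earlier lemmas and writes the key inequality for $\tau_{uc}$ directly.
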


Lemmas \ref{lemma:boundvar_comp} and \ref{lemma:boundvar_subs} derive the identified sets for $\tau_{c}$ and $\tau_{uc}$ when bounded variation, bounded outcomes and modularity assumptions are combined. When it comes to $\tau_{c}$, the combination of assumptions strengthens the lower and the upper bound under both modularity assumptions. When it comes to $\tau_{uc}$, however, only one of the bounds is strengthened by the combination of assumptions, whereas the other bounds remains the worst-case bound. When it comes to estimation and inference, the bootstrap procedure outlined in Theorem 2 remains valid for bounds under alternative sets of assumptions as well.

%%%%%%%%%%%%%%% SECTION %%%%%%%%%%%%%%%%%%%
%%%%%%%% Monte Carlo Simulations %%%%%%%%%%
%%%%%%%%%%%%%%%%%%%%%%%%%%%%%%%%%%%%%%%%%%%
\section{Monte Carlo Simulations}\label{sec:Simulations}

We now analyze the finite-sample properties of the Diff-in-Disc estimator through Monte Carlo simulations and compare the performance of the proposed estimator to that of the local linear RDD estimator proposed by \cite{calonico2014robust}, the nonparametric DiD regression estimator proposed by \cite{santanna2020doubly}, and the Two-Way Fixed Effects (TWFE) estimator. We consider Data Generating Processes (DGP) based on model 3 from \cite{calonico2014robust}, with small modifications that will be described.

We conduct our simulation studies in four distinct settings: with time-invariant confounding factors at the threshold and without any confounding factors, which is the typical scenario for RDDs, for both time-invariant and time-varying functional forms of the conditional means of potential outcomes. For each simulation, we conduct 1000 replications, and for each replication, we consider a sample size $n=1000$, with $Z_{i}\sim(2\mathcal{B}(2,4)-1)$ where $\mathcal{B}(p_1,p_2)$ is a beta distribution with parameters $p_1$ and $p_2$. We also consider $\varepsilon_{it}\sim N(0,\sigma^{2}_{\varepsilon})$, $\sigma_{\varepsilon}=0.1295$ and the outcome generated is $Y_{i,t}=\mu_{it}(Z_{i})+\varepsilon_{i,t}$. The detailed specifications and additional functional forms of both models are provided in Appendix \ref{app:Simulations} for clarity. In both scenarios, we observe that our DiDC estimator performs better than the RDD estimator, with a smaller bias and improved coverage. Additionally, we find that the DiDC estimator has smaller bias and better coverage than the DiD estimator when the DGPs change over time.

\subsection{Identical Functional Forms over Time}

In the first simulation, we mimic a scenario where one or more confounding discontinuities are present at the threshold $z_0=0$, but the functional forms for conditional means of potential outcomes are the same in both periods. This setting is comparable to that of \cite{grembi2016fiscal}, where the treatment of interest was introduced at some point between $t=0$ and $t=1$ and was given to units whose running variable values $Z_i$ are above the threshold $z_0=0$ however there were other pre-existing treatments determined by the same threshold $z_0=0$ on the same running variable.

The second model we consider is a scenario with identical functional forms over time, the only distinction being that in period $t=1$, there is an effect of treatment $\tau$ for units with $Z_i \geq 0$. Importantly, there are no other sources of discontinuity in the outcome at the threshold $z_0=0$, making this an ideal scenario for estimating the effect of the treatment using a regular RDD.

We present the results of these simulation studies in Tables 1 and 2. We estimate the Diff-in-Disc along with the RDD, DiD and TWFE, and compare the average bias, median bias, root-mean-squared errors, 95\% coverage probability, and the length of the 95\% confidence interval for each estimator when the treatment effect $\tau$ is equal to 0. 

\begin{table}[h]
\centering
\caption{Identical Functional Forms with Confounding}
\label{tab:sim_results_confounder}
\begin{tabular}{c|cccccc}
\hline
             & AV.Bias & RMSE  & Coverage & CIL   & \multicolumn{2}{c}{Bandwidths} \\
             &         &       &          &       & $h_{n}$             & $b_{n}$            \\ \hline
RDD (Robust) & 1.043   & 1.043 & 0        & 0.175 & 0.169          & 0.318         \\
Diff-in-Disc & -0.002  & 0.047 & 0.944    & 0.228 & 0.225          & 0.360         \\
NP-DiD       & 0.001   & 0.019 & 0.953        & 0.393 & -              & -             \\
TWFE         & 0.001   & 0.011 & 1        & 0.397 & -              & -             \\ \hline
\end{tabular}
\\\scriptsize \noindent \textit{Note:} Simulations based on 10,000 Monte Carlo experiments with a sample size $n=1000$. RDD is the non-bias corrected RD estimator from CCT (2014), NP-DiD is the outcome regression DiD estimator from Sant'Anna and Zhao (2020), DiDC is the estimator proposed in this paper.  “Av. Bias”, “RMSE”, “Cover”, “CIL’ stand for the average simulated bias, simulated root mean-squared errors, 95\% coverage probability and 95\% confidence interval length, respectively. "Bandwidths" $h_n$ and $b_n$ report the plug-in bandwidths for point and bias estimation, respectively.
\end{table}

Table~\ref{tab:sim_results_confounder} shows a significant improvement over the RD estimator when a confounding factor is present at the threshold. The result is not surprising, as the presence of the confounder violates the continuity assumption that underlies the validity of cross-sectional RD. Note also that the mean size of the bandwidths, both for the estimator and for the bias-correction, is greater for the Diff-in-Disc estimator than the RD estimator. In terms of bias, both the nonparametric DiD and the TWFE estimators exhibit desirable finite-sample properties. However, when it comes to coverage of the confidence interval, we find that those from the TWFE estimator are severely biased.

\begin{table}[h]
\centering
\caption{Identical Functional Forms without Confounding}
\label{tab:sim_results_noconfounder}
\begin{tabular}{c|cccccc}
\hline
             & AV.Bias & RMSE  & Coverage & CIL   & \multicolumn{2}{c}{Bandwidths} \\
             &         &       &          &       & $h_{n}$             & $b_{n}$            \\ \hline
RDD (Robust) & 0.043   & 0.051 & 0.861    & 0.176 & 0.169          & 0.318         \\
Diff-in-Disc & -0.003  & 0.047 & 0.944    & 0.228 & 0.223          & 0.360         \\
NP-DiD       & 0.001   & 0.019 & 0.955    & 0.391 & -              & -             \\
TWFE         & 0.002   & 0.012 & 1        & 0.397 & -              & -             \\ \hline
\end{tabular}
\\\scriptsize \noindent \textit{Note:} Simulations based on 10,000 Monte Carlo experiments with a sample size $n=1000$. RDD is the non-bias corrected RD estimator from CCT (2014), NP-DiD is the outcome regression DiD estimator from Sant'Anna and Zhao (2020), DiDC is the estimator proposed in this paper.  “Av. Bias”, “RMSE”, “Cover”, “CIL’ stand for the average simulated bias, simulated root mean-squared errors, 95\% coverage probability and 95\% confidence interval length, respectively. "Bandwidths" $h_n$ and $b_n$ report the plug-in bandwidths for point and bias estimation, respectively.
\end{table}

Table~\ref{tab:sim_results_noconfounder} shows the results for the case where there is no confounding policy at the threshold. In that case, the cross-sectional RD is valid, as evidenced by the small finite-sample bias in the simulation. Once again, the Diff-in-Disc estimator has smaller finite-sample bias and larger bandwidths, which illustrates the point in Section~\ref{sec:Estimation_sharp} that, even in settings where the standard RD is valid, there might be gains in using the Diff-in-Disc approach.

The results show that Difference-in-Discontinuites estimator has an apparent improvement over the standard RD, yielding a smaller bias and better coverage. This is noteworthy as it highlights that even in cases where the RDD would traditionally be regarded as suitable, the differences-in-discontinuity approach can give more desirable results by incorporating more data into the estimation. The results hold when the functional format exhibits little temporal variation, as shown in the next section, with the additional period contributing to more reliable estimates than the RD whenever the bias from the extra period $t=0$ does not exceed that from period $t=1$. In Appendix~\ref{app:Simulations}, we show that the results for the simulations are robust even in the case where the variance of potential outcomes vary over time.

\subsection{Time-varying Functional Forms}

Next, we introduce scenarios where the functional forms for conditional means change between periods, contrasting to the prior section where functional forms were identical across time. This setup allows us to analyze how changes in the conditional mean of potential outcomes affect the performance of the considered estimators.

Again, we consider simulations with and without confounders at the threshold, but now we alter the model for $t=1$ to be a linear model derived from the original. The functional form for time period $t=0$ is identical to that of models 1 and 2. Results are shown in Tables 3 and 4.

\begin{table}[h]
\centering
\caption{Time-Varying Functional Forms with Confounding}
\begin{tabular}{c|cccccc}
\hline
             & AV.Bias & RMSE  & Coverage & CIL   & \multicolumn{2}{c}{Bandwidths} \\
             &         &       &          &       & $h_{n}$             & $b_{n}$            \\ \hline
RDD (Robust) & 1.039   & 1.039 & 0        & 0.161 & 0.221          & 0.360         \\
Diff-in-Disc & -0.001  & 0.050 & 0.937    & 0.241 & 0.184          & 0.331         \\
NP-DiD       & 1.766   & 1.766 & 0        & 0.297 & -              & -             \\
TWFE         & -1.568  & 1.568 & 0        & 0.272 & -              & -             \\ \hline
\end{tabular}
\\\scriptsize \noindent \textit{Note:} Simulations based on 10,000 Monte Carlo experiments with a sample size $n=1000$. RDD is the non-bias corrected RD estimator from CCT (2014), NP-DiD is the outcome regression DiD estimator from Sant'Anna and Zhao (2020), DiDC is the estimator proposed in this paper.  “Av. Bias”, “RMSE”, “Cover”, “CIL’ stand for the average simulated bias, simulated root mean-squared errors, 95\% coverage probability and 95\% confidence interval length, respectively. "Bandwidths" $h_n$ and $b_n$ report the plug-in bandwidths for point and bias estimation, respectively.
\end{table}

Table 3 shows that in the presence of changes in functional forms over time and confounding policies at the threshold, only the Diff-in-Disc approach is valid. The finite-sample bias of the standard RD estimator is close to the one presented in Table 1. The main difference in Table 3 in comparison to Table 1 is the poor finite-sample properties of DiD methods, as both the nonparametric DiD and the TWFE estimator exhibit larger finite-sample bias than the standard RD.

\begin{table}[h]
\centering
\caption{Time-Varying Functional Forms without Confounding}
\begin{tabular}{c|cccccc}
\hline
             & AV.Bias & RMSE  & Coverage & CIL   & \multicolumn{2}{c}{Bandwidths} \\
             &         &       &          &       & $h_{n}$             & $b_{n}$            \\ \hline
RDD (Robust) & 0.039   & 0.049 & 0.922    & 0.161 & 0.221          & 0.360         \\
Diff-in-Disc & -0.008  & 0.050 & 0.937    & 0.241 & 0.184          & 0.331         \\
NP-DiD       & 1.766   & 1.766 & 0        & 0.297 & -              & -             \\
TWFE         & -1.568  & 1.568 & 0        & 0.272 & -              & -             \\ \hline
\end{tabular}
\\\scriptsize \noindent \textit{Note:} Simulations based on 10,000 Monte Carlo experiments with a sample size $n=1000$. RDD is the non-bias corrected RD estimator from CCT (2014), NP-DiD is the outcome regression DiD estimator from Sant'Anna and Zhao (2020), DiDC is the estimator proposed in this paper.  “Av. Bias”, “RMSE”, “Cover”, “CIL’ stand for the average simulated bias, simulated root mean-squared errors, 95\% coverage probability and 95\% confidence interval length, respectively. "Bandwidths" $h_n$ and $b_n$ report the plug-in bandwidths for point and bias estimation, respectively.
\end{table}

Table 4 displays the results for the case where functional forms change over time, but there is no confounding policy at the threshold. Once again, the standard RD and the Diff-in-Disc exhibit desirable finite-sample properties, with the Diff-in-Disc showing better coverage and smaller bias. However, unlike when functional forms are constant over time, when functional forms change, the standard RD method yields larger optimal bandwidths. For the DiD estimators, the results show severe bias, and the performance of the nonparametric DiD and the TWFE are similar to what we observed in Table 3. In Appendix D.2, we show that the simulation results are robust to changes in the time-varying variance.

Overall, the Monte Carlo results show several desirable properties of the Diff-in-Disc estimator. Not only it remains unbiased when the alternative approaches are not valid, but it also exhibits smaller finite-sample bias than the standard RD even in the absence of confounders. In the next section, we revisit a well-known political economy setting to analyze the estimator's performance on a real dataset.

%%%%%%%%%%%%%%% SECTION %%%%%%%%%%%%%%%%%%%
%%%%%%%%% Empirical Illustration %%%%%%%%%%
%%%%%%%%%%%%%%%%%%%%%%%%%%%%%%%%%%%%%%%%%%%
\section{Empirical Illustration}\label{sec:Empirical_Illustration}

We illustrate the use of our estimator by revisiting the empirical application in \cite{grembi2016fiscal}, which analyzes the impact of fiscal rules on Italian municipal finances by exploiting a 2001 fiscal rule relaxation as a natural experiment. The relaxation applied to municipalities with fewer than 5,000 inhabitants, which therefore form the treatment group, while municipalities above this threshold serve as controls.

Our objective is to replicate their setting using the Difference-in-Discontinuities (DiDC) estimator, compare the results to the original findings, and assess the validity of the identifying assumption using the constant-confounder test from Section~\ref{subsec:Test_conf_time_invar}.

The dataset comprises data from Italian municipalities, focusing on the period surrounding the government's relaxation of fiscal rules in 2001. Municipalities with fewer than 5,000 inhabitants experienced a relaxation of fiscal rules and were treated, while those with more than 5,000 inhabitants served as controls.

We implement a 2×2 DiDC design, computing before–and–after differences and estimating local linear RDDs as detailed in Section \ref{sec:Estimation_sharp}. \cite{grembi2016fiscal}'s original study utilized a large panel dataset, a rectangular kernel and a polynomial of degree one as in the model below: 
{\small \begin{align*}
Y_{it} = &\delta_0 + \delta_1 (X_{it}-c) + S_i (\gamma_0 + \gamma_1 (X_{it}-c)) & T_i[\alpha_0 + \alpha_1 (X_{it}-c) + S_i(\beta_0 + \beta_1 (X_{it}-c))]+\varepsilon_{it}.
\end{align*}}
where $S_i$ is a dummy variable for cities below 5,000 (treatment indicator), $T_i$ is a dummy variable for the post-treatment period and $\beta_0$ is the parameter of interest.

Table 5 compares our DiDC estimates (difference of RDDs), along with the effective bandwidths and sample sizes to \cite{grembi2016fiscal}'s estimations. Due to differences in specifications, the results must be interpreted with caution.

\begin{table}[h]\label{tab:Grembi_CCTbw}
    %\justify
    \centering
    \caption{Effects of relaxing a fiscal rule - CCT (2014) Bandwidths}
    \begin{tabular}{cccc}
        \hline
        Estimators   & Deficit & Fiscal Gap & Taxes \\ \hline
        Diif-in-Disc (Grembi et al, 2016) &
          \begin{tabular}[c]{@{}c@{}}17.495\\ (7.737)\end{tabular} &
          \begin{tabular}[c]{@{}c@{}}59.468\\ (32.079)\end{tabular} &
          \begin{tabular}[c]{@{}c@{}}-76.083\\ (32.597)\end{tabular} \\
        Bandwidth    & 600     & 513        & 378   \\
        Observations & 2414    & 2136       & 1536  \\ \hline
        Diff-in-Disc (RD of $\Delta$) &
          \begin{tabular}[c]{@{}c@{}}18.501\\ (22.318)\end{tabular} &
          \begin{tabular}[c]{@{}c@{}}10.247\\ (21.724)\end{tabular} &
          \begin{tabular}[c]{@{}c@{}}-5.288\\ (4.803)\end{tabular} \\
        Bandwidth    & 520     & 591        & 592   \\
        Observations & 392    & 433       & 434  \\ \hline
        Diff-in-Disc ($\Delta$ of RDs) &
          \begin{tabular}[c]{@{}c@{}}22.142\\ (25.136)\end{tabular} &
          \begin{tabular}[c]{@{}c@{}}9.210\\ (46.658)\end{tabular} &
          \begin{tabular}[c]{@{}c@{}}-9.009\\ (23.824)\end{tabular} \\
        Bandwidth    & 560     & 475        & 334   \\
        Observations & 415     & 362       & 264  \\ \hline
    \end{tabular}
    \\\scriptsize \noindent \textit{Note:} The first panel presents the estimates from \cite{grembi2016fiscal} using the whole panel from 1999 to 2004. The second panel presents the estimates obtained from the RD of first-differences using data from 2000 and 2001. The third panel presents the estimates obtained from the difference of RDs using data from 2000 and 2001.
\end{table}

The results are qualitatively similar, yet, our specification is underpowered due to the smaller sample size, and thus estimates are not statistically significant.

\subsection{Testing the Assumptions}
We conduct the test from Section \ref{subsec:Test_conf_time_invar} to evaluate the assumption of time-invariant confounding effects. To test this, we estimate stacked RDs for the years 1998--2000, interacting the running variable with treatment and period dummies. Table~\ref{tab:stacked_rd_smallest} reports the interaction coefficients ($z\times D$) for the smallest bandwidth, and Table~\ref{tab:joint_wald} summarizes the corresponding joint Wald tests for equality of these coefficients across years.

\begin{table}[ht]
\centering
\caption{Stacked-RD interaction coefficients by pre-period (smallest bandwidth)}
\label{tab:stacked_rd_smallest}
\begin{tabular}{lccc}
\hline
Outcome & $\theta_{2000}$ & $\theta_{1999}$ & $\theta_{1998}$ \\
\hline
Taxes       & 1.582     & 1.373     & 1.248     \\
            & (0.083)   & (0.084)   & (0.084)   \\
Deficit     & 0.095     & 0.114     & 0.112     \\
            & (0.020)   & (0.020)   & (0.021)   \\
Fiscal Gap  & 1.273  & 1.338     & 1.290     \\
            & (0.073)   & (0.074)   & (0.075)   \\
\hline
\end{tabular}\\
\scriptsize \noindent
\textit{Note:} Estimates from weighted stacked OLS (triangular kernel). Standard errors in parentheses. ``Smallest bandwidth'' = minimum period-specific CCT bandwidth.
\scriptsize
\end{table}

The joint test of $H_0:\theta_{2000}=\theta_{1999}=\theta_{1998}$ is implemented using a Wald $F$-test based on the pooled regressions. The results are shown in Table~\ref{tab:joint_wald}.

\begin{table}[ht]
\centering
\caption{Joint Wald test: $H_0:\theta_{2000}=\theta_{1999}=\theta_{1998}$ (F-stat, df, p-value)}
\label{tab:joint_wald}
\begin{tabular}{lcccc}
\hline
Outcome & Bandwidth & F-statistic& p-value \\
\hline
Taxes       & Smallest & 4.108 & \textbf{0.017} \\
 & Biggest  & 4.545 & \textbf{0.011} \\
 & CCT      & 3.010 & \textbf{0.046} \\
\hline
Deficit     & Smallest & 0.279 & 0.757 \\
 & Biggest  & 0.181 & 0.835 \\
 & CCT      & 0.294  & 0.745 \\
\hline
Fiscal balance & Smallest & 0.213  & 0.809 \\
 & Biggest  & 0.141 & 0.869 \\
 & CCT      & 0.182   & 0.833 \\
\hline
\end{tabular}\\
\scriptsize \noindent
\textit{Note:}{F-statistics and p-values are from the joint Wald test implemented with \texttt{car::linearHypothesis} \citep{fox2001car}. Bold p-values indicate rejection at 5\%.}
\end{table}

For taxes, the null hypothesis of time-invariant confounding is rejected at conventional significance levels (p $\in$ [0.01, 0.05]), indicating that the RD discontinuity varied across pre-treatment years. In contrast, the test fails to reject $H_0$ for both deficit and fiscal balance (p-values between 0.75 and 0.87).

In summary, our estimated treatment effects are similar in magnitude to the difference-in-discontinuities approach utilized in \cite{grembi2016fiscal} study. Notably, our difference-in-RDDs approach yields smaller confidence intervals, suggesting it is the most powerful. However, our validity test indicates that confounding effects vary over time, suggesting that the use of DiDC in this setting leads to biased estimates. We also replicate our estimates using the bandwidths in \cite{ludwig2007does}, with similar results. These estimates are shown in Table \ref{tab_Grembi_LMbw} in Appendix \ref{app:Empirical}.

We also implement the KS-type test from Section~\ref{subsec:test_functiona_form} to assess whether the shapes of the conditional mean functions remain stable across pre-treatment years. Applying this test to each outcome, we find strong evidence of time-varying functional forms for taxes and fiscal balance: on both sides of the cutoff, the bootstrap p-values for imposte and saldo are essentially zero (0.001 on both sides), indicating clear violations of the time-invariance condition. For the deficit, however, the evidence is mixed. The left-hand side yields a p-value of 0.266, consistent with time-invariant functional forms, whereas the right-hand side again produces a very small p-value (0.001), suggesting that the relationship between the running variable and the deficit outcome changed over time for municipalities above the 5,000-inhabitant threshold. Overall, these results show that the shape of the outcome–running-variable relationship is not stable across pre-treatment years for most outcomes, reinforcing the need for careful consideration when applying the method.

\subsection{Partial Identification}
\subsubsection{Bounded Variation Assumptions}
Table~\ref{tab:partial_id_imposte} reports the identified sets for the treatment effect on Taxes under the bounded-variation sensitivity framework. Each cell shows the lower and upper bound for the identified set $\left[\tau_{c}^{LB},\,\tau_{c}^{UB}\right]$ as a function of the sensitivity parameters $c_1$ and $c_2$. The bounds are constructed by allowing the conditional potential-outcome functions and the confounding effect to deviate from the time-invariance baseline up to  $c_1$ and $c_2$, respectively (Section~\ref{sec:partial}). The same grid procedure was used to produce the analogous tables for Fiscal balance (Tables~\ref{tab:partial_id_fiscal}--\ref{tab:partial_id_deficit} in Appendix~\ref{app:Empirical}) and Deficit (Tables 12 and 13 from Section~\ref{sec:app_bounded_modularity}).

\begin{table}[h]
\centering
\caption{Bounds on the Treatment Effect under Bounded Variation \textit{(Outcome: Taxes)}}
\label{tab:partial_id_imposte}
\resizebox{\textwidth}{!}{%
\begin{tabular}{c|ccccccccccc}
\toprule
$c_{2}\backslash c_{1}$ & 0.0 & 0.5 & 1.0 & 1.5 & 2.0 & 2.5 & 3.0 & 3.5 & 4.0 & 4.5 & 5.0  \\
\midrule
0.0 & 6.33 , -6.57 & 5.83 , -6.57 & 5.33 , -6.57 & 4.83 , -6.57 & 4.33 , -6.57 & 3.83 , -6.57 & 3.33 , -6.57 & 2.83 , -6.57 & 2.33 , -6.57 & 1.83 , -6.57 & 1.33 , -6.57 \\
0.5 & 6.33 , -6.07 & 5.83 , -6.07 & 5.33 , -6.07 & 4.83 , -6.07 & 4.33 , -6.07 & 3.83 , -6.07 & 3.33 , -6.07 & 2.83 , -6.07 & 2.33 , -6.07 & 1.83 , -6.07 & 1.33 , -6.07 \\
1.0 & 6.33 , -5.57 & 5.83 , -5.57 & 5.33 , -5.57 & 4.83 , -5.57 & 4.33 , -5.57 & 3.83 , -5.57 & 3.33 , -5.57 & 2.83 , -5.57 & 2.33 , -5.57 & 1.83 , -5.57 & 1.33 , -5.57 \\
1.5 & 6.33 , -5.07 & 5.83 , -5.07 & 5.33 , -5.07 & 4.83 , -5.07 & 4.33 , -5.07 & 3.83 , -5.07 & 3.33 , -5.07 & 2.83 , -5.07 & 2.33 , -5.07 & 1.83 , -5.07 & 1.33 , -5.07 \\
2.0 & 6.33 , -4.57 & 5.83 , -4.57 & 5.33 , -4.57 & 4.83 , -4.57 & 4.33 , -4.57 & 3.83 , -4.57 & 3.33 , -4.57 & 2.83 , -4.57 & 2.33 , -4.57 & 1.83 , -4.57 & 1.33 , -4.57 \\
2.5 & 6.33 , -4.07 & 5.83 , -4.07 & 5.33 , -4.07 & 4.83 , -4.07 & 4.33 , -4.07 & 3.83 , -4.07 & 3.33 , -4.07 & 2.83 , -4.07 & 2.33 , -4.07 & 1.83 , -4.07 & 1.33 , -4.07 \\
3.0 & 6.33 , -3.57 & 5.83 , -3.57 & 5.33 , -3.57 & 4.83 , -3.57 & 4.33 , -3.57 & 3.83 , -3.57 & 3.33 , -3.57 & 2.83 , -3.57 & 2.33 , -3.57 & 1.83 , -3.57 & 1.33 , -3.57 \\
3.5 & 6.33 , -3.07 & 5.83 , -3.07 & 5.33 , -3.07 & 4.83 , -3.07 & 4.33 , -3.07 & 3.83 , -3.07 & 3.33 , -3.07 & 2.83 , -3.07 & 2.33 , -3.07 & 1.83 , -3.07 & 1.33 , -3.07 \\
4.0 & 6.33 , -2.57 & 5.83 , -2.57 & 5.33 , -2.57 & 4.83 , -2.57 & 4.33 , -2.57 & 3.83 , -2.57 & 3.33 , -2.57 & 2.83 , -2.57 & 2.33 , -2.57 & 1.83 , -2.57 & 1.33 , -2.57 \\
4.5 & 6.33 , -2.07 & 5.83 , -2.07 & 5.33 , -2.07 & 4.83 , -2.07 & 4.33 , -2.07 & 3.83 , -2.07 & 3.33 , -2.07 & 2.83 , -2.07 & 2.33 , -2.07 & 1.83 , -2.07 & 1.33 , -2.07 \\
5.0 & 6.33 , -1.57 & 5.83 , -1.57 & 5.33 , -1.57 & 4.83 , -1.57 & 4.33 , -1.57 & 3.83 , -1.57 & 3.33 , -1.57 & 2.83 , -1.57 & 2.33 , -1.57 & 1.83 , -1.57 & 1.33 , -1.57 \\
\bottomrule
\end{tabular}}\\
\scriptsize \noindent
\textit{Note}: Estimates for the identified sets for the effects on fiscal rules on taxes under Lemma 6.
\end{table}

\subsubsection{Modularity Assumptions}
Under Assumptions \ref{assump:bounded_outcomes} and \ref{assump:compl}, we construct worst-case modularity bounds using $y_{min}$ and $y_{max}$ equal to the minimum and maximum observed values of each outcome in the data (Table~\ref{tab:bounds_modularity}). As expected in this worst-case setting, the resulting identified sets are very wide: for example, the bounds for taxes span roughly $[-157.59,\; 1203.19]$ for $\tau_{c}$ and $[-1203.19,\; 108.08]$ for $\tau_{uc}$, with similarly large intervals for the fiscal gap and deficit. Consequently, these bounds don't provide much information. To obtain informative conclusions about the direction or magnitude of the treatment effect, researchers must impose stronger and more economically grounded restrictions.

\begin{table}[h]
\centering
\caption{Identified Sets Under Modularity Assumptions}
\label{tab:bounds_modularity}
\begin{tabular}{lcc}
\toprule
\textbf{Outcome} & \(\tau_c \in [\tau_c^{LB}, \tau_c^{UB}]\) 
                 & \(\tau_{uc} \in [\tau_{uc}^{LB}, \tau_{uc}^{UB}]\) \\
\midrule
Taxes
 & $[-157.59,\; 1203.19]$
 & $[-1203.19,\; 108.08]$ \\[2pt]
Fiscal Gap
 & $[-738.37,\; 2358.18]$
 & $[-2358.18,\; 862.38]$ \\[2pt]
Deficit
 & $[-757.29,\; 1687.17]$
 & $[-1687.17,\; 758.35]$ \\
\bottomrule
\end{tabular}\\
\scriptsize \noindent
\textit{Note}: Estimates for the identified sets for the effects on fiscal rules on public finance outcomes under Lemma 7.
\end{table}

\subsubsection{Combining Bounded Variation and Modularity Assumptions}\label{sec:app_bounded_modularity}
When the two approaches are combined, identification becomes substantially sharper. Together, these restrictions eliminate many of the extreme scenarios allowed by modularity alone and narrow the identified sets relative to either assumption in isolation.

The Deficit outcome clearly illustrates the value of this combination. Under worst-case modularity, the identified sets for Deficit include both large negative and positive effects, but once bounded-variation constraints are imposed, the intervals become notably narrower and remain strictly negative across all $c_1$ and $c_2$ values considered. Table \ref{tab:partial_comb_deficit} shows that the bounds for $\tau_c$ fall roughly between $-27$ and $-19$, and Table \ref{tab:partial_comb_tauuc_deficit} similarly reports negative upper endpoints for $\tau_{uc}$. The combined assumptions, therefore, identify not only the sign of the effect but also restrict its magnitude for $\tau_c$ to a reasonably small range. 

\begin{table}[h]
\centering
\caption{Bounds for $\tau_c$ — Outcome: Deficit. $y_{min} = -762.5508$, $y_{max} = 924.6157$. }
\label{tab:partial_comb_deficit}
\resizebox{\textwidth}{!}{%
\begin{tabular}{c|ccccccccccc}
\toprule
$c_{2}\backslash c_{1}$ & 0.0 & 0.5 & 1.0 & 1.5 & 2.0 & 2.5 & 3.0 & 3.5 & 4.0 & 4.5 & 5.0 \\
\midrule
0.0 & 19.60 , 27.86 & 20.10 , 27.86 & 20.60 , 27.86 & 21.10 , 27.86 & 21.60 , 27.86 & 22.10 , 27.86 & 22.60 , 27.86 & 23.10 , 27.86 & 23.60 , 27.86 & 24.10 , 27.86 & 24.60 , 27.86 \\
0.5 & 19.60 , 27.36 & 20.10 , 27.36 & 20.60 , 27.36 & 21.10 , 27.36 & 21.60 , 27.36 & 22.10 , 27.36 & 22.60 , 27.36 & 23.10 , 27.36 & 23.60 , 27.36 & 24.10 , 27.36 & 24.60 , 27.36 \\
1.0 & 19.60 , 26.86 & 20.10 , 26.86 & 20.60 , 26.86 & 21.10 , 26.86 & 21.60 , 26.86 & 22.10 , 26.86 & 22.60 , 26.86 & 23.10 , 26.86 & 23.60 , 26.86 & 24.10 , 26.86 & 24.60 , 26.86 \\
1.5 & 19.60 , 26.36 & 20.10 , 26.36 & 20.60 , 26.36 & 21.10 , 26.36 & 21.60 , 26.36 & 22.10 , 26.36 & 22.60 , 26.36 & 23.10 , 26.36 & 23.60 , 26.36 & 24.10 , 26.36 & 24.60 , 26.36 \\
2.0 & 19.60 , 25.86 & 20.10 , 25.86 & 20.60 , 25.86 & 21.10 , 25.86 & 21.60 , 25.86 & 22.10 , 25.86 & 22.60 , 25.86 & 23.10 , 25.86 & 23.60 , 25.86 & 24.10 , 25.86 & 24.60 , 25.86 \\
2.5 & 19.60 , 25.36 & 20.10 , 25.36 & 20.60 , 25.36 & 21.10 , 25.36 & 21.60 , 25.36 & 22.10 , 25.36 & 22.60 , 25.36 & 23.10 , 25.36 & 23.60 , 25.36 & 24.10 , 25.36 & 24.60 , 25.36 \\
3.0 & 19.60 , 24.86 & 20.10 , 24.86 & 20.60 , 24.86 & 21.10 , 24.86 & 21.60 , 24.86 & 22.10 , 24.86 & 22.60 , 24.86 & 23.10 , 24.86 & 23.60 , 24.86 & 24.10 , 24.86 & 24.60 , 24.86 \\
3.5 & 19.60 , 24.36 & 20.10 , 24.36 & 20.60 , 24.36 & 21.10 , 24.36 & 21.60 , 24.36 & 22.10 , 24.36 & 22.60 , 24.36 & 23.10 , 24.36 & 23.60 , 24.36 & 24.10 , 24.36 & 24.60 , 24.36 \\
4.0 & 19.60 , 23.86 & 20.10 , 23.86 & 20.60 , 23.86 & 21.10 , 23.86 & 21.60 , 23.86 & 22.10 , 23.86 & 22.60 , 23.86 & 23.10 , 23.86 & 23.60 , 23.86 & 24.10 , 23.86 & 24.60 , 23.86 \\
4.5 & 19.60 , 23.36 & 20.10 , 23.36 & 20.60 , 23.36 & 21.10 , 23.36 & 21.60 , 23.36 & 22.10 , 23.36 & 22.60 , 23.36 & 23.10 , 23.36 & 23.60 , 23.36 & 24.10 , 23.36 & 24.60 , 23.36 \\
5.0 & 19.60 , 22.86 & 20.10 , 22.86 & 20.60 , 22.86 & 21.10 , 22.86 & 21.60 , 22.86 & 22.10 , 22.86 & 22.60 , 22.86 & 23.10 , 22.86 & 23.60 , 22.86 & 24.10 , 22.86 & 24.60 , 22.86 \\
\bottomrule

\end{tabular}}\\
\scriptsize \noindent
\textit{Note}: Estimates for the identified sets for the effect $\tau_{c}$ on fiscal rules on financial deficit under Lemma 10.
\end{table}

\begin{table}[ht]
\centering
\caption{Bounds for $\tau_{uc}$ — Outcome: Deficit. $y_{min} = -762.5508$, $y_{max} = 924.6157$. }
\label{tab:partial_comb_tauuc_deficit}
\resizebox{\textwidth}{!}{%
\begin{tabular}{c|ccccccccccc}
\toprule
$c_{2}\backslash c_{1}$ & 0.0 & 0.5 & 1.0 & 1.5 & 2.0 & 2.5 & 3.0 & 3.5 & 4.0 & 4.5 & 5.0 \\
\midrule
0.0 & 1687.17 , 27.86 & 1687.17 , 27.36 & 1687.17 , 26.86 & 1687.17 , 26.36 & 1687.17 , 25.86 & 1687.17 , 25.36 & 1687.17 , 24.86 & 1687.17 , 24.36 & 1687.17 , 23.86 & 1687.17 , 23.36 & 1687.17 , 22.86 \\
0.5 & 1687.17 , 27.86 & 1687.17 , 27.36 & 1687.17 , 26.86 & 1687.17 , 26.36 & 1687.17 , 25.86 & 1687.17 , 25.36 & 1687.17 , 24.86 & 1687.17 , 24.36 & 1687.17 , 23.86 & 1687.17 , 23.36 & 1687.17 , 22.86 \\
1.0 & 1687.17 , 27.86 & 1687.17 , 27.36 & 1687.17 , 26.86 & 1687.17 , 26.36 & 1687.17 , 25.86 & 1687.17 , 25.36 & 1687.17 , 24.86 & 1687.17 , 24.36 & 1687.17 , 23.86 & 1687.17 , 23.36 & 1687.17 , 22.86 \\
1.5 & 1687.17 , 27.86 & 1687.17 , 27.36 & 1687.17 , 26.86 & 1687.17 , 26.36 & 1687.17 , 25.86 & 1687.17 , 25.36 & 1687.17 , 24.86 & 1687.17 , 24.36 & 1687.17 , 23.86 & 1687.17 , 23.36 & 1687.17 , 22.86 \\
2.0 & 1687.17 , 27.86 & 1687.17 , 27.36 & 1687.17 , 26.86 & 1687.17 , 26.36 & 1687.17 , 25.86 & 1687.17 , 25.36 & 1687.17 , 24.86 & 1687.17 , 24.36 & 1687.17 , 23.86 & 1687.17 , 23.36 & 1687.17 , 22.86 \\
2.5 & 1687.17 , 27.86 & 1687.17 , 27.36 & 1687.17 , 26.86 & 1687.17 , 26.36 & 1687.17 , 25.86 & 1687.17 , 25.36 & 1687.17 , 24.86 & 1687.17 , 24.36 & 1687.17 , 23.86 & 1687.17 , 23.36 & 1687.17 , 22.86 \\
3.0 & 1687.17 , 27.86 & 1687.17 , 27.36 & 1687.17 , 26.86 & 1687.17 , 26.36 & 1687.17 , 25.86 & 1687.17 , 25.36 & 1687.17 , 24.86 & 1687.17 , 24.36 & 1687.17 , 23.86 & 1687.17 , 23.36 & 1687.17 , 22.86 \\
3.5 & 1687.17 , 27.86 & 1687.17 , 27.36 & 1687.17 , 26.86 & 1687.17 , 26.36 & 1687.17 , 25.86 & 1687.17 , 25.36 & 1687.17 , 24.86 & 1687.17 , 24.36 & 1687.17 , 23.86 & 1687.17 , 23.36 & 1687.17 , 22.86 \\
4.0 & 1687.17 , 27.86 & 1687.17 , 27.36 & 1687.17 , 26.86 & 1687.17 , 26.36 & 1687.17 , 25.86 & 1687.17 , 25.36 & 1687.17 , 24.86 & 1687.17 , 24.36 & 1687.17 , 23.86 & 1687.17 , 23.36 & 1687.17 , 22.86 \\
4.5 & 1687.17 , 27.86 & 1687.17 , 27.36 & 1687.17 , 26.86 & 1687.17 , 26.36 & 1687.17 , 25.86 & 1687.17 , 25.36 & 1687.17 , 24.86 & 1687.17 , 24.36 & 1687.17 , 23.86 & 1687.17 , 23.36 & 1687.17 , 22.86 \\
5.0 & 1687.17 , 27.86 & 1687.17 , 27.36 & 1687.17 , 26.86 & 1687.17 , 26.36 & 1687.17 , 25.86 & 1687.17 , 25.36 & 1687.17 , 24.86 & 1687.17 , 24.36 & 1687.17 , 23.86 & 1687.17 , 23.36 & 1687.17 , 22.86 \\
\bottomrule
\end{tabular}}\\
\scriptsize \noindent
\textit{Note}: Estimates for the identified sets for the effect $\tau_{uc}$ on fiscal rules on financial deficit under Lemma 10.
\end{table}

The results in Table \ref{tab:partial_comb_deficit} show that the bounds on the effects of the fiscal rules are overall uninformative, suggesting that the results are not robust to violations of the time-invariance assumptions. Tables 12 and 13, on the other hand, show that the bounds on the effects on deficit are robust both to violations on time-invariance assumptions and complementarity between treatment and confounding effects. The results in the table show that the evolution of the mean confounding effect and the mean confounded deficit at the threshold could exceed 6 euros per capita (roughly 30.000 euros given the 5.000 population rule), and the true effect of relaxing fiscal policy would still be positive. However, the setting is severely underpowered, which means that confidence intervals are uninformative regarding the true signal of the treatment effects\footnote{In Tables \ref{tab:partial_comb_deficit} and \ref{tab:partial_comb_tauuc_deficit}, which show informative identified sets, the lower bound of the confidence interval for the lower bound is always negative.}.

Finally, Table~\ref{tab:bounds_modularity} shows that modularity assumptions alone are not sufficient to yield informative identified sets for the treatment effects.

\section{Conclusion}\label{sec:Conclusion}

The difference-in-discontinuities (DiDC) design is emerging as a promising method for estimating causal inference, addressing the limitations of both regression discontinuity (RDD) and difference-in-difference (DiD) approaches. This paper lays the theoretical groundwork for DiDC, examines its identification assumptions, estimation procedures, and asymptotic properties. We showcase its advantages through Monte Carlo simulations and an empirical application.

DiDC can handle scenarios in which the control and treatment groups differ significantly, violating the parallel trends assumption of DiD, or when RDD encounters confounding factors at the threshold. By incorporating more information, DiDC eliminates bias in RDD estimates under specific assumptions about the data-generating processes.

However, it is important to give due attention to the identification assumptions, particularly the time-invariance of confounding effects. We propose a test based on stacked RDDs to assess its validity in practice. Additionally, DiDC requires the treatment effect to be independent of confounding policy, though we introduce a possible relaxation for potential interaction effects.

We find that the DiDC method can eliminate bias completely if the function shapes remain stable on both sides of the threshold over time. This suggests it could offer significant advantages over standard RDD estimators, even in settings where no other confounding variables are present at the threshold. We also propose a test to compare the derivatives of estimated functions on either side of the threshold, allowing researchers to evaluate the stability of data-generating processes over time.

Monte Carlo simulations demonstrate DiDC's potential to improve upon RDD, yielding lower bias and better coverage. It can provide more desirable results by incorporating more data, especially when the functional form exhibits minimal temporal variation. Notably, DiDC is the only viable approach when confounding factors render both RDD and DiD unsuitable. The empirical application highlights the importance of the time-invariance assumption. 

Future research directions include developing robust alternative estimators that are robust to violations of identification assumptions, as well as exploring other confidence interval methods tailored to the DiDC design. Overall, the DiDC method offers a valuable addition to the causal inference toolkit. It is applicable in settings where no other methods were previously available and shows potential to reduce bias in estimation in other settings.

%%%%%%%%%%%%%%%%%%%%%%%%%%%%%%%%%%%%%%%%%%
%%%%%%%%%%%%%% REFERENCES %%%%%%%%%%%%%%%%
%%%%%%%%%%%%%%%%%%%%%%%%%%%%%%%%%%%%%%%%%%

\bibliographystyle{abbrvnat}
\nocite{imbens2008regression}
\nocite{rdrobust}
\bibliography{bibliography}

%%%%%%%%%%%%%%%%%%%%%%%%%%%%%%%%%%%%%%%%%%
%%%%%%%%%%%%%%% APPENDIX %%%%%%%%%%%%%%%%%
%%%%%%%%%%%%%%%%%%%%%%%%%%%%%%%%%%%%%%%%%%
\pagebreak
\begin{appendices}\label{Appendices}
% reset equation counter and rename equation so that it begins with section letter
\setcounter{equation}{0}
\renewcommand{\theequation}{\thesection.\arabic{equation}}
% reset assumption counter and rename assumption so that it begins with section letter
\setcounter{assumption}{0}
\renewcommand{\theassumption}{\thesection.\arabic{assumption}}
% reset assumption counter and rename assumption so that it begins with section letter
\setcounter{lemma}{0}
\renewcommand{\thelemma}{\thesection.\arabic{lemma}}
% reset table counter and rename assumption so that it begins with section letter
\setcounter{table}{0}
\renewcommand{\thetable}{\thesection.\arabic{table}}

%%%%%%%%%%%%%%%%%%%%%%%%%%%%%%%%%%%%%%%%%%
%%%%% Setup, Assumptions and Notation %%%%%
\section{Setup, assumptions and notation for estimation}\label{app:estimation_setup}

We construct the local polynomial estimator following \cite{calonico2014robust}. For a given $\nu \leq p \in \mathbf{N}$, the general estimand of interest is $\tau_\nu = \Delta \mu_{+} - \Delta \mu_{-}$ with $\Delta \mu^{(\nu)}_{+} = \nu!e^{\prime}_\nu \delta_{+,p}$, $\Delta \mu^{(\nu)}_{-} = \nu!e^{\prime}_\nu \delta_{-,p}$ being the $\nu$th-order derivatives of the $p$th-order local polynomial of the difference. The $p$th-order local polynomial estimators of the $\nu$th-order derivatives $\Delta \mu^{(\nu)}_{+,p}$ and $\Delta \mu^{(\nu)}_{-,p}$ are:
\begin{align*}
    \Delta \hat{\mu}^{(\nu)}_{+,p}(h_n) &= \nu!e^{\prime}_\nu \hat{\delta}_{+,p}(h_n)\\
    \Delta \hat{\mu}^{(\nu)}_{-,p}(h_n) &= \nu!e^{\prime}_\nu \hat{\delta}_{-,p}(h_n)\\
    \hat{\delta}_{\Delta Y+,p}(h_n) &= arg\min_{\delta \in \mathbf{R}^{p+1}} \sum_{i=1}^{n} \mathbbm{1}(Z_i \geq 0)(\Delta Y_{i}-r_p(Z_i)^{\prime} \delta)^2 K_{h_n}(Z_i)\\
    \hat{\delta}_{\Delta Y-,p}(h_n) &= arg\min_{\delta \in \mathbf{R}^{p+1}} \sum_{i=1}^{n} \mathbbm{1}(Z_i < 0)(\Delta Y_{i}-r_p(Z_i)^{\prime} \delta)^2 K_{h_n}(Z_i)
\end{align*}
where $e_\nu$ is a conformable $(\nu +1)$ unit vector, $K_h(u) = K(u/h)/h$, $h_n$ is a positive bandwidth sequence, 
    $r_p(x) = \begin{bmatrix} 
                    1 & x & \hdots & x^p
                \end{bmatrix}'$, 
    $ \Delta Y = \begin{bmatrix}
                    \Delta Y_{1} & \Delta Y_{2} & \hdots & \Delta Y_{n}
                \end{bmatrix}'$. We define $\chi_n = \begin{bmatrix}
                Z_1 & \hdots & Z_n
            \end{bmatrix}'$, 
        $\varepsilon_{\Delta Y} = \begin{bmatrix}
            \varepsilon_{\Delta Y,1} &  \hdots & \varepsilon_{\Delta Y,n}
                \end{bmatrix}'$
        with $\varepsilon_{\Delta Y,i} = \Delta Y_i - \mu_{\Delta Y}(Z_i)$, $\mu_{\Delta Y}(Z)=\E\left( \Delta Y| Z \right)$ and %$\sigma^2_{\Delta W,\Delta Y}(Z) = Cov \left( \Delta W, \Delta Y |Z \right) $. And,%
        
\begingroup
\allowdisplaybreaks
\begin{align*}
    S_p(h) &= \begin{bmatrix}
                \left(Z_1/h\right)^p & \hdots &\left(Z_n/h\right)^p
            \end{bmatrix}' \\
    Z_p(h) &= \begin{bmatrix}
               r_p(Z_1/h) & \hdots &
               r_p(Z_n/h)
            \end{bmatrix}' \\
    W_+(h) &= diag \left( \mathbbm{1}(Z_i \geq 0)K_h(Z_1), \hdots, \mathbbm{1}(Z_i \geq 0)K_h(Z_n) \right)\\
    W_-(h) &= diag \left( \mathbbm{1}(Z_i < 0)K_h(Z_1), \hdots, \mathbbm{1}(Z_i < 0)K_h(Z_n) \right)\\
    %\Sigma_{\Delta W \Delta Y} & =  diag \left( \sigma_{\Delta W \Delta Y}^2(Z_i), \hdots,\sigma_{\Delta W \Delta Y}^2(Z_n) \right)\\%
    \Gamma_{+,p}(h) &= Z_p(h)' W_+(h) Z_p(h)/n \\
    \Gamma_{-,p}(h) &= Z_p(h)' W_-(h) Z_p(h)/n \\
    \vartheta_{+,p,q}(h) &= Z_p(h)'W_+(h)S_q(h)/n \\
    \vartheta_{-,p,q}(h) &= Z_p(h)'W_-(h)S_q(h)/n \\
    \Psi_{\Delta W \Delta Y+,p,q}(h,b) &= Z_p(h)' W_+(h) \Sigma_{\Delta W \Delta Y} W_+(b) Z_q(b) /n \\
    \Psi_{\Delta W \Delta Y-,p,q}(h,b) &= Z_p(h)' W_-(h) \Sigma_{\Delta W \Delta Y} W_-(b) Z_q(b) /n
\end{align*}
\endgroup

It follows that with $H_p(h) = diag \left( 1, h^{-1}, \hdots, h^{-p} \right)$:
\begin{align*}
    \hat{\delta}_{\Delta Y+,p}(h_n) &= H_p(h_n)\Gamma^{-1}_{+,p}(h_n)Z_p(h_n)'W_+(h_n)\Delta Y/n \\
    \hat{\delta}_{\Delta Y-,p}(h_n) &= H_p(h_n)\Gamma^{-1}_{-,p}(h_n)Z_p(h_n)'W_-(h_n)\Delta Y/n 
\end{align*}

The estimand and estimators are
\begin{align*}
& \tau^{DiDC}_\nu=\Delta \mu_{+}^{(\nu)}-\Delta \mu_{-}^{(\nu)}, \quad \Delta \mu_{+}^{(\nu)}=\nu!e_\nu^{\prime} \delta_{+, p}, \quad \Delta \mu_{-}^{(\nu)}=\nu!e_\nu^{\prime} \delta_{-, p}, \\
& \hat{\tau}^{DiDC}_{\nu, p}\left(h_n\right)=\Delta \hat{\mu}_{+, p}^{(\nu)}\left(h_n\right)-\Delta \hat{\mu}_{-, p}^{(\nu)}\left(h_n\right), \\
& \Delta \hat{\mu}_{+, p}^{(\nu)}\left(h_n\right)=\nu!e_\nu^{\prime} \hat{\delta}_{+, p}\left(h_n\right), \quad \Delta \hat{\mu}_{-, p}^{(\nu)}\left(h_n\right)=\nu!e_\nu^{\prime} \hat{\delta}_{-, p}\left(h_n\right),
\end{align*}
where, for any random variables $W$ and $X$, and $s \in \mathbb{N}$,
\begin{align*}
& \Delta \mu_{X+}^{(s)}=\lim _{x \rightarrow 0^{+}} \frac{\partial^s}{\partial z^s} \Delta \mu_X(z), \quad \Delta \mu_{X-}^{(s)}=\lim _{x \rightarrow 0^{-}} \frac{\partial^s}{\partial z^s} \Delta \mu_X(z), \\
& \Delta \mu_X(z)=\mathbb{E}[X \mid Z=z], \\
& \sigma_{X+}^2=\lim _{z \rightarrow 0^{+}} \sigma_X^2(z), \quad \sigma_{X-}^2=\lim _{z \rightarrow 0^{-}} \sigma_X^2(z), \\
& \sigma_X^2(z)=\mathbb{V}[X \mid Z=z], \\
%& \sigma_{W X+}^2=\lim _{z \rightarrow 0^{+}} \sigma_{W X}^2(z), \quad \sigma_{W X_{-}}^2=\lim _{z \rightarrow 0^{-}} \sigma_{W X}^2(z), \\
%& \sigma_{W X}^2(z)=\mathbb{C}[W, X \mid X=z] .
\end{align*}

We employ the following assumptions on the \textit{sharp} model for the non-parametric local polynomial regression estimation:

\begin{assumption}\label{assump:sharp_model} 
    For some $\mathcal{K}_0 > 0$, the following holds in the neighborhood $\left( -\mathcal{K}_0 , \mathcal{K}_0 \right)$ around the cutoff $z_0=0$:
    \begin{enumerate}[(a)]
        \item $E \left[ \Delta Y_i^4 | Z_i = z \right]$ is bounded, and the density $f(z)$ of the random sample $Z_i$ is continuous and bounded away from zero.
        \item $\Delta \mu_-(z) =\E \left[ \Delta Y_{i}(0) | Z_{i} =z \right]$ and $\Delta \mu_+(z) =\E \left[ \Delta Y_{i}(1) | Z_{i} =z \right]$ are S times continuously differentiable.
        \item $\sigma^2_-(z) = V\left[ \Delta Y_{i}(0) | Z_{i} =z \right]$ and $\sigma^2_+(z) = V\left[ \Delta Y_{i}(1) | Z_{i} =z \right]$ are continuous and bounded away from zero. 
    \end{enumerate}
\end{assumption}

We also impose the following assumption on the kernel function to be employed in the estimator. This assumption allows for most of the commonly used kernels.

\begin{assumption}\label{assump:kernel}
    For some $\mathcal{K} > 0$, the kernel function $k(\cdot) :[0, \mathcal{K}] \rightarrow \mathbbm{R}$ is bounded and nonnegative, zero outside its support, symmetric around $z_0$ and positive and continuous on $\left( 0, \mathcal{K} \right)$.
\end{assumption}

Assumptions \ref{assump:sharp_model} and \ref{assump:kernel} limit the behavior of $E \left( \Delta Y_i | Z_i = z_0 \right)$ in the vicinity of the cutoff $z_0=0$.

%%%%%%%%%%%%%%%%%%%%%%%%%%%%%%%%%%%%%%%%%%
%%%%%%%%%% Preliminary Lemmas %%%%%%%%%%%%
\newpage
\section{Preliminary lemmas and results}\label{app:Lemmas}

Before proceeding, please refer to Appendix \ref{app:estimation_setup} for notation. This appendix restates, with minor adaptations, several lemmas, results and proofs from \cite{calonico2014robust} that are necessary for deriving the asymptotic results.

The following lemma establishes convergence in probability of the sample matrices $\Gamma_{-, p}\left(h_n\right)$, $\boldsymbol{\vartheta}_{-, p, q}\left(h_n\right)$, $\Psi_{-, p}\left(h_n\right)$ and $\Gamma_{+, p}\left(h_n\right)$, $\boldsymbol{\vartheta}_{+, p, q}\left(h_n\right)$, $\Psi_{+, p}\left(h_n\right)$ to their expectation counterparts, and characterizes those limits.

\begin{lemma} \label{lemma:s.a.1}
    Suppose Assumptions $1-2$ hold, and $n h_n \rightarrow \infty$.
        \begin{enumerate}[(a)]
            \item If $\kappa h_n<\kappa_0$, then:
                \begin{enumerate}[(a.1)]
                    \item[(a.1)] $\Gamma_{+, p}\left(h_n\right)=\tilde{\Gamma}_p\left(h_n\right)+o_p(1)$ with $\tilde{\Gamma}_{+, p}\left(h_n\right)=\int_0^{\infty} K(u)\allowbreak r_p(u) \allowbreak r_p(u)^{\prime} \allowbreak f\left(u h_n\right) \allowbreak  \mathrm{d} u \allowbreak \asymp \Gamma_p$
                    \item[(a.2)] $\Gamma_{-, p}\left(h_n\right)=H_p(-1) \tilde{\Gamma}_p\left(h_n\right) H_p(-1)+o_p(1)$ with $\tilde{\Gamma}_{-, p}\left(h_n\right)=\int_0^{\infty} \allowbreak K(u) \allowbreak r_p(u)\allowbreak  r_p(u)^{\prime} \allowbreak f\left(-u h_n\right)\allowbreak \mathrm{d} u \asymp \Gamma_p$,
                    \item[(a.3)] $\boldsymbol{\vartheta}_{+, p, q}\left(h_n\right)=\tilde{\boldsymbol{\vartheta}}_{+, p, q}\left(h_n\right)+o_p(1) \text { with } \tilde{\boldsymbol{\vartheta}}_{+, p, q}\left(h_n\right)=\int_0^{\infty} K(u) r_p(u) \allowbreak u^q \allowbreak f\left(u h_n\right)\allowbreak  \mathrm{d} u \allowbreak \asymp \boldsymbol{\vartheta}_{p, q}$,
                    \item[(a.4)] $\boldsymbol{\vartheta}_{-, p, q}\left(h_n\right)=(-1)^q H_p(-1) \tilde{\boldsymbol{\vartheta}}_{-, p, q}\left(h_n\right)+o_p(1) \text { with } \tilde{\boldsymbol{\vartheta}}_{-, p, q}\left(h_n\right)= \int_0^{\infty} \allowbreak K(u) \allowbreak r_p(u) \allowbreak u^q  \allowbreak f\left(-u h_n\right) \allowbreak \mathrm{d} u \allowbreak \asymp \boldsymbol{\vartheta}_{p, q}$,
                    \item[(a.5)] $h_n \Psi_{+, p}\left(h_n\right)=\tilde{\Psi}_{+, p}\left(h_n\right)+o_p(1) \text { with } \tilde{\Psi}_{+, p}\left(h_n\right)=\int_0^{\infty} \allowbreak  K(u)^2 \allowbreak r_p(u) \allowbreak r_p(u)^{\prime} \allowbreak \boldsymbol{\sigma}_{+}^2\left(u h_n\right) \allowbreak f\left(u h_n\right) \allowbreak \mathrm{d} u \asymp \Psi_p$,
                    \item[(a.6)] $h_n \Psi_{-, p}\left(h_n\right)=H_p(-1) \tilde{\Psi}_{-, p}\left(h_n\right) H_p(-1)+o_p(1) \text { with } \tilde{\Psi}_{-, p}\left(h_n\right)= \int_0^{\infty} \allowbreak K(u)^2 \allowbreak r_p(u)  \allowbreak r_p(u)^{\prime} \allowbreak \sigma_{-}^2\left(-u h_n\right) \allowbreak  f\left(-u h_n\right) \allowbreak \mathrm{d} u \asymp \Psi_p$.
                \end{enumerate}
            \item If $h_n \rightarrow 0$, then
                \begin{enumerate}[(b.1)]
                    \item $\tilde{\Gamma}_{+, p}\left(h_n\right)=f \Gamma_p+o(1)$ and $\tilde{\Gamma}_{-, p}\left(h_n\right)=f \Gamma_p+o(1)$,
                    \item $\tilde{\vartheta}_{+, p, q}\left(h_n\right)=f \vartheta_{p, q}+o(1)$ and $\tilde{\vartheta}_{-, p, q}\left(h_n\right)=f \boldsymbol{\vartheta}_{p, q}+o(1)$,
                    \item $\tilde{\Psi}_{+, p}^{+, p, q}\left(h_n\right)=\sigma_{+}^2 f \Psi_p+o(1)$ and $\tilde{\Psi}_{-, p}\left(h_n\right)=\sigma_{-}^2 f \Psi_p+o(1)$.
                \end{enumerate}
        \end{enumerate}
\end{lemma}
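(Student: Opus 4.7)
The plan is to recognize each of the six sample matrices in part (a) as a normalized sum of i.i.d.\ functions of $Z_i$, compute its exact expectation by the change of variables $u=z/h_n$ to identify the ``tilde'' integrals, and then bound the variance of the sum by $O(1/(nh_n))$ so that a Chebyshev-type law of large numbers delivers convergence in probability under the rate condition $nh_n\to\infty$. Part (b) will then follow from the dominated convergence theorem using the one-sided continuity of $f$ and $\sigma_\pm^{2}$ at the threshold supplied by Assumption A.1.

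First I would prove (a.1) in detail and then argue the remaining five cases by analogous calculation. Writing
\[
\Gamma_{+,p}(h_n) = \frac{1}{n}\sum_{i=1}^{n}\mathbf{1}\{Z_i\geq 0\}\,K_{h_n}(Z_i)\,r_p(Z_i/h_n)r_p(Z_i/h_n)',
\]
the entrywise expectation is $\int_0^{\mathcal{K}h_n} K(z/h_n)(z/h_n)^{j+k-2} f(z)\,h_n^{-1}\,dz$, and the substitution $u=z/h_n$ produces exactly $\tilde{\Gamma}_{+,p}(h_n)$. The order $\tilde{\Gamma}_{+,p}(h_n)\asymp\Gamma_p$ follows because $f$ is continuous and bounded away from $0$ near $z_0=0$ by Assumption A.1(a), so the integrand is sandwiched between positive multiples of $K(u)r_p(u)r_p(u)'$. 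For the stochastic deviation, each summand is bounded by $O(h_n^{-1})$ and its second moment is of order $h_n^{-1}$, so each entry of $\Gamma_{+,p}(h_n)-\tilde{\Gamma}_{+,p}(h_n)$ has variance $O(1/(nh_n))$ and Chebyshev yields $o_p(1)$ under $nh_n\to\infty$. Parts (a.2) and (a.4) follow by the identical argument using the sign identity $r_p(-u)=H_p(-1)r_p(u)$ and $(-u)^q=(-1)^q u^q$ to produce the factors $H_p(-1)$ and $(-1)^q$. Parts (a.3) is immediate once one $r_p$ factor is replaced by $u^q$. Parts (a.5)--(a.6) require one extra step: the relevant sums contain $\varepsilon_{\Delta Y,i}^{2}K_{h_n}(Z_i)^{2}$, and conditioning on $Z_i$ substitutes $\sigma_{\pm}^{2}(z)$ for $E[\varepsilon^{2}\mid Z=z]$ before the same change of variable; the prefactor $h_n$ on the left of $\Psi_{\pm,p}$ in the statement compensates for the squared kernel and restores the correct scale.

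For part (b), fix each tilde integrand. As $h_n\to 0$, the argument $\pm u h_n\to 0$ for every $u$ in the compact support of $K$, so $f(\pm u h_n)\to f$ by continuity (Assumption A.1(a)) and $\sigma_{\pm}^{2}(\pm u h_n)\to \sigma_{\pm}^{2}$ by continuity of the one-sided conditional variances (Assumption A.1(c)). Because $K(u)r_p(u)r_p(u)'$ and $K(u)^{2}r_p(u)r_p(u)'$ are bounded and have compact support on $[0,\mathcal{K}]$, the dominated convergence theorem yields (b.1)--(b.3) with the limits $f\Gamma_p$, $f\vartheta_{p,q}$, and $\sigma_{\pm}^{2}f\Psi_p$, respectively.

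The main obstacle is the variance calculation for $\Psi_{\pm,p}$, because the factor $K_{h_n}(Z_i)^{2}=K(Z_i/h_n)^{2}/h_n^{2}$ scales as $h_n^{-2}$, one power of $h_n$ worse than in the $\Gamma$ and $\vartheta$ cases; it is precisely for this reason that the statement normalizes by $h_n$ on the left-hand side in (a.5)--(a.6). The bounded fourth conditional moment of $\Delta Y_i$ from Assumption A.1(a) keeps the second moment of each summand at the order $h_n^{-1}$ after this rescaling, so the $O(1/(nh_n))$ Chebyshev bound still goes through. Apart from this bookkeeping, everything reduces to the change-of-variables and dominated-convergence arguments outlined above.
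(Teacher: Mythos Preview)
Your proposal is correct and follows essentially the same route as the paper's proof: compute the expectation of each sample matrix by the change of variables $u=z/h_n$ to identify the tilde quantities, bound the entrywise variance by $O((nh_n)^{-1})$ so Chebyshev gives the $o_p(1)$ remainder, and then use dominated convergence plus one-sided continuity of $f$ and $\sigma_{\pm}^2$ for part (b). The paper in fact writes out only (a.5), (a.6), and (b.3) explicitly and refers the remaining items to \cite{calonico2014robust}, so your sketch is if anything more complete. One small point: in the paper's definition $\Psi_{+,p}(h_n)$ already contains the conditional variance matrix $\Sigma$ rather than the squared residuals $\varepsilon_{\Delta Y,i}^2$, so the variance bound for (a.5)--(a.6) needs only the boundedness of $\sigma_{\pm}^2$ from Assumption A.1(c), not the fourth-moment condition; but your version with $\varepsilon^2$ and the fourth-moment appeal is also valid and standard, and the conclusion is unchanged.
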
 
\begin{proof} 
    For part (a.5), the change of variable implies
    \begin{align*}    
    \mathbb{E}\left[\Psi_{+, p}\left(h_n\right)\right] & = \mathbb{E}\left[h_n Z_p(h)' W_+(h) \Sigma_{\Delta W \Delta Y} W_+(b) Z_q(b) /n \right]\\    
    & =\int_0^{\infty} K\left(\frac{z}{h_n}\right)^2 r_p\left(\frac{z}{h_n}\right) r_p\left(\frac{z}{h_n}\right)^{\prime} \sigma_{+}^2 f(z) \mathrm{d} z \\
    & = \int_0^{\infty} K(u)^2 r_p(u) r_p(u)^{\prime} \sigma_{+}^2\left(u h_n\right) f\left(u h_n\right) \mathrm{d} u \\
    & =\tilde{\Psi}_{+, p}\left(h_n\right),    
    \end{align*}
    and $h_n^2 \mathbb{E}\left[\left|\Psi_{+, p}\left(h_n\right)-\mathbb{E}\left[\Psi_{+, p}\left(h_n\right)\right]\right|^2\right]=n^{-1} h_n^{-1} \int_0^{\infty} K(u)^4\left|r_p(u)\right|^4 f\left(u h_n\right) \mathrm{d} u=$ $O\left(n^{-1} h_n^{-1}\right)$, provided $\kappa h_n<\kappa_0$. For part (a.6),
    \begin{align*}    
    \mathbb{E}\left[h_n \Psi_{-, p}\left(h_n\right)\right] & =h_n^{-1} \int_{-\infty}^0 K\left(u / h_n\right)^2 r_p\left(u / h_n\right) r_p\left(u / h_n\right)^{\prime} \sigma_{-}^2(u) f(u) \mathrm{d} u \\
    & =H_p(-1) \tilde{\Psi}_{-, p}\left(h_n\right) H_p(-1),    
    \end{align*}
    and the rest is proven as above. Also, note that $\tilde{\Psi}_{+, p}^{+, p, q}\left(h_n\right)=\sigma_{+}^2 f \Psi_p+o(1)$ and $\tilde{\Psi}_{-, p}\left(h_n\right)=\sigma_{-}^2 f \Psi_p+o(1)$ if $h_n \rightarrow 0$, by continuity of $\sigma_{+}^2(u)$, $\sigma_{-}^2(u)$ and $f(u)$, which proves part (b.3).

    Proofs for the other items follow similarly to the one above and can be found in \cite{calonico2014robust}, as they are identical to those provided there.
\end{proof}

Let $s, \ell \in \mathbb{N}$ with $s \leq \ell$. The following lemma gives the asymptotic bias, variance, and distribution for the $\ell$ th-order local polynomial estimator of $\Delta \mu_{+}^{(s)}$ and $\Delta \mu_{-}^{(s)}:$
\begin{align*}
& \Delta \hat{\mu}_{+, \ell}^{(s)}\left(h_n\right)=s!e_s^{\prime} \delta_{+, \ell}\left(h_n\right), \\
& \Delta \hat{\delta}_{+, \ell}\left(h_n\right)=H_{\ell}\left(h_n\right) \Gamma_{+, \ell}^{-1}\left(h_n\right) Z_{\ell}\left(h_n\right)^{\prime} W_{+}\left(h_n\right) Y / n, \\
& \Delta \hat{\mu}_{-, \ell}^{(s)}\left(h_n\right)=s!e_s^{\prime} \hat{\delta}_{-, \ell}\left(h_n\right), \\
& \hat{\delta}_{-, \ell}\left(h_n\right)=H_{\ell}\left(h_n\right) \Gamma_{-, \ell}^{-1}\left(h_n\right) Z_{\ell}\left(h_n\right)^{\prime} W_{-}\left(h_n\right) Y / n .
\end{align*}

\begin{lemma}\label{lemma:s.a.3}
     Suppose Assumptions \ref{assump:continuity} and \ref{assump:disc_prob_treat} hold with $S \geq \ell+2$, and $n h_n \rightarrow \infty$.
     \begin{enumerate}
         \item[(B)] If $h_n \rightarrow 0$, then
            \begin{align*}            
            & \mathbb{E}\left[\Delta \hat{\mu}_{+, \ell}^{(s)}\left(h_n\right) \mid \mathcal{X}_n\right]= s!e_s^{\prime} \delta_{+, \ell}+h_n^{1+\ell-s} \frac{\Delta \mu_{+}^{(\ell+1)}}{(\ell+1)!} \mathcal{B}_{+, s, \ell, \ell+1}\left(h_n\right) \\
            &+h_n^{2+\ell-s} \frac{\Delta \mu_{+}^{(\ell+2)}}{(\ell+2)!} \mathcal{B}_{+, s, \ell, \ell+2}\left(h_n\right)+o_p\left(h_n^{2+\ell-s}\right), \\
            & \mathcal{B}_{+, s, \ell, r}\left(h_n\right)=s!e_s^{\prime} \Gamma_{+, \ell}^{-1}\left(h_n\right) \vartheta_{+, \ell, r}\left(h_n\right)=s!e_s^{\prime} \Gamma_{\ell}^{-1} \vartheta_{\ell, r}+o_p(1),            
            \end{align*}
            and
            \begin{align*}            
            & \mathbb{E}\left[\Delta \hat{\mu}_{-, \ell}^{(s)}\left(h_n\right) \mid \mathcal{X}_n\right]= s!e_s^{\prime} \delta_{-, \ell}+h_n^{1+\ell-s} \frac{\Delta \mu_{-}^{(\ell+1)}}{(\ell+1)!} \mathcal{B}_{-, s, \ell, \ell+1}\left(h_n\right) \\
            &+h_n^{2+\ell-s} \frac{\Delta \mu_{-}^{(\ell+2)}}{(\ell+2)!} \mathcal{B}_{-, s, \ell, \ell+2}\left(h_n\right)+o_p\left(h_n^{2+\ell-s}\right), \\
            & \mathcal{B}_{-, s, \ell, r}\left(h_n\right)=s!e_s^{\prime} \Gamma_{-, \ell}^{-1}\left(h_n\right) \vartheta_{-, \ell, r}\left(h_n\right)=(-1)^{s+r} s!e_s^{\prime} \Gamma_{\ell}^{-1} \boldsymbol{\vartheta}_{\ell, r}+o_p(1) .            
            \end{align*}
        \item[(V)] If $h_n \rightarrow 0$, then $\mathbb{V}\left[\Delta \hat{\mu}_{+, \ell}^{(s)}\left(h_n\right) \mid \mathcal{X}_n\right]=\mathcal{V}_{+, s, \ell}\left(h_n\right)$ with
            \begin{align*}            
            \mathcal{V}_{+, s, \ell}\left(h_n\right) & =\frac{1}{n h_n^{2 s}} s!^2 e_s^{\prime} \Gamma_{+, \ell}^{-1}\left(h_n\right) \Psi_{+, \ell}\left(h_n\right) \Gamma_{+, \ell}^{-1}\left(h_n\right) e_s \\
            & =\frac{1}{n h_n^{1+2 s}} \frac{\sigma_{+}^2}{f} s!^2 e_s^{\prime} \Gamma_{\ell}^{-1} \Psi_{\ell} \Gamma_{\ell}^{-1} e_s\left[1+o_p(1)\right],            
            \end{align*}
            and $\mathbb{V}\left[\Delta \hat{\mu}_{-, \ell}^{(s)}\left(h_n\right) \mid \mathcal{X}_n\right]=\mathcal{V}_{-, s, \ell}\left(h_n\right)$ with
            \begin{align*}            
            \mathcal{V}_{-, s, \ell}\left(h_n\right) & =\frac{1}{n h_n^{2 s}} s^2 e_s^{\prime} \Gamma_{-\ell}^{-1}\left(h_n\right) \Psi_{-, \ell}\left(h_n\right) \Gamma_{-, \ell}^{-1}\left(h_n\right) e_s \\
            & =\frac{1}{n h_n^{1+2 s}} \frac{\sigma_{-}^2}{f} s!^2 e_s^{\prime} \Gamma_{\ell}^{-1} \Psi_{\ell} \Gamma_{\ell}^{-1} e_s\left[1+o_p(1)\right] .            
            \end{align*}
        \item[(D)] If $n h_n^{2 \ell+5} \rightarrow 0$, then
            \begin{align*}
            \frac{\Delta \hat{\mu}_{+, \ell}^{(s)}\left(h_n\right)-\Delta \mu_{+}^{(s)}-h_n^{1+\ell-s} \frac{\Delta \mu_{+}^{(\ell+1)}}{(\ell+1)!} \mathcal{B}_{+, s, \ell, \ell+1}\left(h_n\right)}{\sqrt{\mathcal{V}_{+, s, \ell}\left(h_n\right)}} \rightarrow{ }_d \mathcal{N}(0,1)
            \end{align*}
            and
            \begin{align*}
            \frac{\Delta \hat{\mu}_{-, \ell}^{(s)}\left(h_n\right)-\Delta \mu_{-}^{(s)}-h_n^{1+\ell-s} \frac{\Delta \mu_{-}^{(\ell+1)}}{(\ell+1)!} \mathcal{B}_{-, s, \ell, \ell+1}\left(h_n\right)}{\sqrt{\mathcal{V}_{-, s, \ell}\left(h_n\right)}} \rightarrow_d \mathcal{N}(0,1) .
            \end{align*}
    \end{enumerate}
\end{lemma}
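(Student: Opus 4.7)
The plan is to exploit the closed-form expression $\hat{\delta}_{+,\ell}(h_n)=H_\ell(h_n)\Gamma_{+,\ell}^{-1}(h_n)Z_\ell(h_n)'W_+(h_n)\Delta Y/n$ (and its minus-side analogue) and decompose $\Delta Y_i=\Delta\mu(Z_i)+\varepsilon_{\Delta Y,i}$, separating a conditional-mean piece (which drives the bias) from an error piece (which drives the variance and the limiting distribution). For the bias in part (B), I would Taylor-expand $\Delta\mu(Z_i)$ around $0^+$ to order $\ell+2$ on the treated side, writing
\begin{equation*}
\Delta\mu(Z_i)=r_\ell(Z_i)'\delta_{+,\ell}+\tfrac{\Delta\mu_+^{(\ell+1)}}{(\ell+1)!}Z_i^{\ell+1}+\tfrac{\Delta\mu_+^{(\ell+2)}}{(\ell+2)!}Z_i^{\ell+2}+R_i,
\end{equation*}
where Assumption \ref{assump:sharp_model}(b) with $S\ge\ell+2$ controls the remainder $R_i$. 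Plugging into the estimator, the polynomial part reproduces $\delta_{+,\ell}$ exactly, the $Z_i^{\ell+1}$ and $Z_i^{\ell+2}$ pieces produce $\vartheta_{+,\ell,\ell+1}(h_n)$ and $\vartheta_{+,\ell,\ell+2}(h_n)$ after the change of variables $u=Z_i/h_n$, and Lemma \ref{lemma:s.a.1}(a.1)--(a.4) combined with (b.1)--(b.2) identifies the coefficients $\mathcal{B}_{+,s,\ell,r}(h_n)=s!e_s'\Gamma_{+,\ell}^{-1}(h_n)\vartheta_{+,\ell,r}(h_n)\to s!e_s'\Gamma_\ell^{-1}\vartheta_{\ell,r}$. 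The minus-side identity inherits the sign factor $(-1)^{s+r}$ from the $H_\ell(-1)$ matrices that appear when the integration region is reflected, yielding the stated formulas.

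For part (V), conditional on $\chi_n$ only the $\varepsilon_{\Delta Y,i}$ term contributes, so
\begin{equation*}
\mathbb{V}[\hat{\delta}_{+,\ell}(h_n)\mid\chi_n]=\tfrac{1}{n^2}H_\ell(h_n)\Gamma_{+,\ell}^{-1}(h_n)\Psi_{+,\ell}(h_n)\Gamma_{+,\ell}^{-1}(h_n)H_\ell(h_n),
\end{equation*}
pre- and post-multiplying by $s!e_s'$ and reading off the $h_n^{-2s}$ factor from $H_\ell(h_n)$ gives the first equality in $\mathcal{V}_{+,s,\ell}(h_n)$. The second equality follows by substituting the limits from Lemma \ref{lemma:s.a.1}(a.5)--(b.3), which supply the $(n h_n^{1+2s})^{-1}\sigma_+^2 f^{-1}s!^2 e_s'\Gamma_\ell^{-1}\Psi_\ell\Gamma_\ell^{-1}e_s$ leading term with a $1+o_p(1)$ remainder; the minus-side calculation is symmetric.

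For part (D), I would standardize $\Delta\hat{\mu}_{+,\ell}^{(s)}(h_n)-\mathbb{E}[\Delta\hat{\mu}_{+,\ell}^{(s)}(h_n)\mid\chi_n]$ and show it is asymptotically normal by applying a Lyapunov CLT to the triangular array of weighted errors $\omega_{n,i}\varepsilon_{\Delta Y,i}$ with weights read off the sandwich representation; Assumption \ref{assump:sharp_model}(a) (bounded fourth conditional moment, density bounded away from zero) together with Assumption \ref{assump:kernel} delivers the Lyapunov condition after the change of variables $u=Z_i/h_n$. The recentering by $h_n^{1+\ell-s}\tfrac{\Delta\mu_+^{(\ell+1)}}{(\ell+1)!}\mathcal{B}_{+,s,\ell,\ell+1}(h_n)$ (rather than the full conditional mean) is justified by the undersmoothing condition $nh_n^{2\ell+5}\to 0$, which ensures that the next-order bias term $h_n^{2+\ell-s}\mathcal{B}_{+,s,\ell,\ell+2}(h_n)$ is $o_p(\sqrt{\mathcal{V}_{+,s,\ell}(h_n)})=o_p((nh_n^{1+2s})^{-1/2})$. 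The minus-side limit follows identically.

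The main obstacle is a bookkeeping one rather than a conceptual one: one must carefully track the interaction between the matrix normalizations $H_\ell(h_n)$, the change of variables $u=Z_i/h_n$, and the conditioning argument used to separate deterministic Taylor pieces from the stochastic error contribution. In particular, demonstrating that the remainder term $R_i$ in the Taylor expansion contributes only $o_p(h_n^{2+\ell-s})$ after being hit by $\Gamma_{+,\ell}^{-1}(h_n)Z_\ell(h_n)'W_+(h_n)$ requires a uniform bound on $\Delta\mu^{(\ell+2)}$ on $(0,\kappa h_n)$ and the use of Lemma \ref{lemma:s.a.1}(a.3); once this is in place the rest of the proof proceeds by direct substitution.
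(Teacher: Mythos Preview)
Your proposal is correct and follows essentially the same approach as the paper: Taylor-expand $\Delta\mu$ to order $\ell+2$, plug into the closed-form $\hat{\delta}_{\pm,\ell}(h_n)$, invoke Lemma \ref{lemma:s.a.1} to identify $\mathcal{B}_{\pm,s,\ell,r}(h_n)$ and $\mathcal{V}_{\pm,s,\ell}(h_n)$ (including the $(-1)^{s+r}$ sign from the $H_\ell(-1)$ reflection), and then apply a Lindeberg/Lyapunov CLT with the undersmoothing condition $nh_n^{2\ell+5}\to 0$ to absorb the $h_n^{2+\ell-s}$ bias term. The paper's own proof is in fact more terse on part (D) than your outline, but the route is identical.
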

\begin{proof}
    For part (B), a Taylor series expansion yields
    \begin{align*}    
    \mathbb{E}[s! & \left.\hat{\delta}_{+, \ell}\left(h_n\right) \mid \mathcal{X}_n\right] \\
    = & s!\delta_{+, \ell}+h_n^{\ell+1} H_{\ell}\left(h_n\right) \Gamma_{+, \ell}^{-1}\left(h_n\right) Z_{\ell}\left(h_n\right) W_{+}\left(h_n\right) S_{\ell+1}\left(h_n\right) s!\frac{\Delta \mu_{+}^{(\ell+1)}}{(\ell+1)!} \\
    & +h_n^{\ell+2} H_{\ell}\left(h_n\right) \Gamma_{+, \ell}^{-1}\left(h_n\right) Z_{\ell}\left(h_n\right) W_{+}\left(h_n\right) S_{\ell+2}\left(h_n\right) s!\frac{\Delta \mu_{+}^{(\ell+2)}}{(\ell+2)!} + H_{\ell}\left(h_n\right) o_p\left(h_n^{\ell+2}\right) \\
    = & s!\delta_{+, \ell}+h_n^{\ell+1} H_{\ell}\left(h_n\right) s!\frac{\Delta \mu_{+}^{(\ell+1)}}{(\ell+1)!} \Gamma_{+, \ell}^{-1}\left(h_n\right) \vartheta_{+, \ell, \ell+1}\left(h_n\right) \\
    & +h_n^{\ell+2} H_{\ell}\left(h_n\right) s!\frac{\Delta \mu_{+}^{(\ell+2)}}{(\ell+2)!} \Gamma_{+, \ell}^{-1}\left(h_n\right) \vartheta_{+, \ell, \ell+2}\left(h_n\right) +H_{\ell}\left(h_n\right) o_p\left(h_n^{\ell+2}\right),    
    \end{align*}
    and the result for $\mathbb{E}\left[\Delta \hat{\mu}_{+, \ell}^{(s)}\left(h_n\right) \mid \mathcal{X}_n\right]$ follows by $e_s^{\prime} H_{\ell}\left(h_n\right)=h_n^{-s}$ and Lemma \ref{lemma:s.a.1}. Next, for $\mathbb{E}\left[\Delta \hat{\mu}_{-, \ell}^{(s)}\left(h_n\right) \mid \mathcal{X}_n\right]$ the same calculations apply, with only a modification for $\mathcal{B}_{-, s, \ell, r}\left(h_n\right)$ because, by Lemma \ref{lemma:s.a.1},
    \begin{align*}    
    \mathcal{B}_{-, s, \ell, r}\left(h_n\right)= & s!e_s^{\prime} \Gamma_{-, \ell}^{-1}\left(h_n\right) \vartheta_{-, \ell, r}\left(h_n\right) \\
    = & s!e_s^{\prime}\left[H_{\ell}(-1) \tilde{\Gamma}_{-, \ell}^{-1}\left(h_n\right) H_{\ell}(-1)\right]\left[(-1)^r H_{\ell}(-1) \vartheta_{-, \ell, r}\left(h_n\right)\right] \\
    & +o_p(1) \\
    = & (-1)^{s+r} s!e_s^{\prime} \tilde{\Gamma}_{-, \ell}^{-1}\left(h_n\right) \vartheta_{-, \ell, r}\left(h_n\right)+o_p(1),    
    \end{align*}
    because $e_s^{\prime} H_{\ell}(-1)=(-1)^s$ and $H_{\ell}(-1) H_{\ell}(-1)=I_{\ell+1}$.
    For part (V), simply note that
    \begin{align*}    
    \mathbb{V}\left[s!e_s^{\prime} \hat{\delta}_{+, \ell}\left(h_n\right) \mid \mathcal{X}_n\right]= & s!^2 e_s^{\prime} H_{\ell}\left(h_n\right) \Gamma_{+, \ell}^{-1}\left(h_n\right) Z_{\ell}\left(h_n\right) W_{+}\left(h_n\right) \Sigma \\
    & \times W_{+}\left(h_n\right) Z_{\ell}\left(h_n\right) \Gamma_{+, \ell}^{-1}\left(h_n\right) H_{\ell}\left(h_n\right) e_s / n \\
    = & h_n^{-2 s} s!^2 e_s^{\prime} \Gamma_{+, \ell}^{-1}\left(h_n\right) Z_{\ell}\left(h_n\right) W_{+}\left(h_n\right) \Sigma \\
    & \times W_{+}\left(h_n\right) Z_{\ell}\left(h_n\right) \Gamma_{+, \ell}^{-1}\left(h_n\right) e_s / n \\
    = & n^{-1} h_n^{-2 s} s!^2 e_s^{\prime} \Gamma_{+, \ell}^{-1}\left(h_n\right) \Psi_{+, \ell}\left(h_n\right) \Gamma_{+, \ell}^{-1}\left(h_n\right) e_s \\
    = & \mathcal{V}_{+, \ell, s}\left(h_n\right),    
    \end{align*}
    and the result follows by Lemma \ref{lemma:s.a.1}. The proof of $\mathbb{V}\left[s!e_s^{\prime} \hat{\delta}_{-, \ell}\left(h_n\right) \mid \mathcal{X}_n\right]$ is analogous.
    
    For part (D), using the previous results, we have
    \begin{align*}    
    & =\frac{s!e_s^{\prime} \hat{\boldsymbol{\delta}}_{+, \ell}\left(h_n\right)-s!e_s^{\prime} \delta_{+, \ell}-h_n^{1+\ell-s} \frac{\Delta \mu_{+}^{(\ell+1)}}{(\ell+1)!} \mathcal{B}_{+, s, \ell, \ell+1}\left(h_n\right)}{\sqrt{\mathcal{V}_{+, s, \ell}\left(h_n\right)}} %\\
    \end{align*}
    
    The result for $\Delta \hat{\mu}_{-, \ell}^{(s)}\left(h_n\right)$ can be established the same way. This concludes the proof.
    Q.E.D.
\end{proof}

Let $\nu$, $p$, $q$ $\in \mathbb{N}$ with $\nu \leq p < q$. The nex lemma dives the asymptotic bias, variance and distribution for the $p$th-order local polynomial estimator of $\Delta \mu_+^{(\nu)}$ and $\Delta \mu_-^{(\nu)}$ with bias correction constructed using a $q$th-order local polynomial.
\begin{lemma}\label{lemma:s.a.4}
    Suppose Assumptions \ref{assump:continuity} and \ref{assump:disc_prob_treat} hold with $S \geq q+1$, and $n \min \left\{h_n, b_n\right\} \rightarrow \infty$.
    (B) If $\max \left\{h_n, b_n\right\} \rightarrow 0$, then
    \begin{align*}    
        & \mathbb{E}\left[\Delta \hat{\mu}_{+, p, q}^{(\nu) \text { bc }}\left(h_n, b_n\right) \mid \mathcal{X}_n\right] \\
        &= \nu!e_\nu^{\prime} \delta_{+, p}+h_n^{2+p-\nu} \frac{\Delta \mu_{+}^{(p+2)}}{(p+2)!} \mathcal{B}_{+, \nu, p, p+2}\left(h_n\right)\left\{1+o_p(1)\right\} \\
        & \quad-h_n^{p+1-\nu} b_n^{q-p} \frac{\Delta \mu_{+}^{(q+1)}}{(q+1)!} \mathcal{B}_{+, p+1, q, q+1}\left(b_n\right) \frac{\mathcal{B}_{+, \nu, p, p+1}\left(h_n\right)}{(p+1)!}\left\{1+o_p(1)\right\}   
    \end{align*}
    and
    \begin{align*}    
    & \mathbb{E}\left[\Delta \hat{\mu}_{-, p, q}^{(\nu) \mathrm{bc}}\left(h_n, b_n\right) \mid \mathcal{X}_n\right] \\
    &= \nu!e_\nu^{\prime} \delta_{-, p}+h_n^{2+p-\nu} \frac{\Delta \mu_{-}^{(p+2)}}{(p+2)!} \mathcal{B}_{-, \nu, p, p+2}\left(h_n\right)\left\{1+o_p(1)\right\} \\
    & \quad-h_n^{p+1-\nu} b_n^{q-p} \frac{\Delta \mu_{-}^{(q+1)}}{(q+1)!} \mathcal{B}_{-, p+1, q, q+1}\left(b_n\right) \frac{\mathcal{B}_{-, \nu, p, p+1}\left(h_n\right)}{(p+1)!}\left\{1+o_p(1)\right\} .  
    \end{align*}
    (V) If $n \min \left\{h_n, b_n\right\} \rightarrow \infty$, then $\mathbb{V}\left[\Delta \hat{\mu}_{+, p, q}^{(\nu) \mathrm{bc}}\left(h_n, b_n\right) \mid \mathcal{X}_n\right]=\mathcal{V}_{+, v, p, q}^{\mathrm{bc}}\left(h_n, b_n\right)$, where
    \begin{align*}    
    \mathcal{V}_{+, \nu, p, q}^{\mathrm{bc}}(h, b)= & \mathcal{V}_{+, \nu, p}(h)+h_n^{2(p+1-\nu)} \mathcal{V}_{+, p+1, q}(b) \frac{\mathcal{B}_{+, \nu, p, p+1}(h)^2}{(p+1)!^2} \\
    & -2 h^{p+1-\nu} \mathcal{C}_{+, \nu, p, q}(h, b) \frac{\mathcal{B}_{+, v, p, p+1}(h)}{(p+1)!}, \\
    \mathcal{C}_{+, \nu, p, q}(h, b)= & \frac{1}{n h^\nu b^{p+1}} \nu!(p+1)!e_\nu^{\prime} \Gamma_{+, p}^{-1}(h) \Psi_{+, p, q}(h, b) \Gamma_{+, q}^{-1}(b) e_p,    
    \end{align*}
    and $\mathbb{V}\left[\Delta \hat{\mu}_{-, p, q}^{(\nu) \text { bc }}\left(h_n, b_n\right) \mid \mathcal{X}_n\right]=\mathcal{V}_{-, \nu, p, q}^{\mathrm{bc}}\left(h_n, b_n\right)$, where
    \begin{align*}    
    \mathcal{V}_{-, \nu, p, q}^{\mathrm{bc}}(h, b)= & \mathcal{V}_{-, \nu, p}(h)+h_n^{2(p+1-\nu)} \mathcal{V}_{-, p+1, q}(b) \frac{\mathcal{B}_{-, \nu, p, p+1}(h)^2}{(p+1)!^2} \\
    & -2 h^{p+1-\nu} \mathcal{C}_{-, \nu, p, q}(h, b) \frac{\mathcal{B}_{-, v, p, p+1}(h)}{(p+1)!}\\
    \mathcal{C}_{-, \nu, p, q}(h, b)=&\frac{1}{n h^{\nu} b^{p+1}} \nu!(p+1)!e_{\nu}^{\prime} \Gamma_{-, p}^{-1}(h) \Psi_{-, p, q}(h, b) \Gamma_{-, q}^{-1}(b) e_{p}    
    \end{align*}
    (D) If $n \min \left\{h_{n}^{2 p+3}, b_{n}^{2 p+3}\right\} \max \left\{h_{n}^{2}, b_{n}^{2(q-p)}\right\} \rightarrow 0$, and $\kappa \max \left\{h_{n}, b_{n}\right\}<\kappa_{0}$, then
    \begin{align*}
    \frac{\Delta \hat{\mu}_{+, p, q}^{(\nu) \mathrm{bc}}\left(h_{n}, b_{n}\right)-\nu!e_{\nu}^{\prime} \delta_{+, p}}{\sqrt{\mathcal{V}_{+, \nu, p, q}^{\mathrm{bc}}\left(h_{n}, b_{n}\right)}} \rightarrow_{d} \mathcal{N}(0,1)
    \end{align*}   
    and    
    \begin{align*}
    \frac{\Delta \hat{\mu}_{-, p, q}^{(\nu) \mathrm{bc}}\left(h_{n}, b_{n}\right)-\nu!e_{\nu}^{\prime} \delta_{-, p}}{\sqrt{\mathcal{V}_{-, \nu, p, q}^{\mathrm{bc}}\left(h_{n}, b_{n}\right)}} \rightarrow_{d} \mathcal{N}(0,1)
\end{align*}
\end{lemma}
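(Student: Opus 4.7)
The plan is to treat the bias-corrected estimator on each side of the cutoff as a linear combination of two local polynomial estimators evaluated at different bandwidths, and to reduce each of parts (B), (V), (D) to the corresponding statements in Lemma B.3 plus a covariance calculation. Writing
\[
\Delta \hat{\mu}^{(\nu)\text{bc}}_{+,p,q}(h_n,b_n) \;=\; \Delta \hat{\mu}^{(\nu)}_{+,p}(h_n) \;-\; h_n^{p+1-\nu}\, \frac{\Delta \hat{\mu}^{(p+1)}_{+,q}(b_n)}{(p+1)!}\, \mathcal{B}_{+,\nu,p,p+1}(h_n),
\]
(and the analogous expression below the cutoff, with the sign $(-1)^{\nu+p+s}$ absorbed into the corresponding $\mathcal{B}_{-,\nu,p,p+1}$ via part (B) of Lemma B.3) makes both the mean and variance calculations mechanical from the single-estimator results.

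For part (B), I would apply Lemma B.3(B) with $\ell=p$ and $s=\nu$ to $\Delta\hat\mu^{(\nu)}_{+,p}(h_n)$ and with $\ell=q$ and $s=p+1$ to $\Delta\hat\mu^{(p+1)}_{+,q}(b_n)$, then take conditional expectations. The leading $h_n^{p+1-\nu}$ term in $\mathbb{E}[\Delta\hat\mu^{(\nu)}_{+,p}(h_n)\mid\mathcal{X}_n]$ equals $h_n^{p+1-\nu}\,\Delta\mu_+^{(p+1)}\mathcal{B}_{+,\nu,p,p+1}(h_n)/(p+1)!$, which cancels against $h_n^{p+1-\nu}\mathcal{B}_{+,\nu,p,p+1}(h_n)/(p+1)!$ times the leading term $\Delta\mu_+^{(p+1)}$ of $\mathbb{E}[\Delta\hat\mu^{(p+1)}_{+,q}(b_n)\mid\mathcal{X}_n]$. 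What survives is the second-order bias of the point estimator, $h_n^{2+p-\nu}\Delta\mu_+^{(p+2)}\mathcal{B}_{+,\nu,p,p+2}(h_n)/(p+2)!$, and the leading bias of the derivative estimator, $b_n^{q-p}\Delta\mu_+^{(q+1)}\mathcal{B}_{+,p+1,q,q+1}(b_n)/(q+1)!$, rescaled by $-h_n^{p+1-\nu}\mathcal{B}_{+,\nu,p,p+1}(h_n)/(p+1)!$. The $o_p$ remainders come from Lemma B.3(B) together with Lemma B.1 continuity. The minus-side expression is identical after using $\mathcal{B}_{-,s,\ell,r}(h_n)=(-1)^{s+r}s!e_s'\Gamma_\ell^{-1}\vartheta_{\ell,r}+o_p(1)$ from Lemma B.3 to absorb the explicit $(-1)^{\nu+p+s}$.

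For part (V), I would compute
\[
\mathbb{V}\!\left[\Delta\hat\mu^{(\nu)\text{bc}}_{+,p,q}(h_n,b_n)\mid\mathcal{X}_n\right] = \mathcal{V}_{+,\nu,p}(h_n) + h_n^{2(p+1-\nu)}\frac{\mathcal{B}_{+,\nu,p,p+1}(h_n)^2}{(p+1)!^2}\mathcal{V}_{+,p+1,q}(b_n) - 2h_n^{p+1-\nu}\frac{\mathcal{B}_{+,\nu,p,p+1}(h_n)}{(p+1)!}\,\text{Cov}_n,
\]
where the two variance terms come directly from Lemma B.3(V) applied at $(p,\nu,h_n)$ and $(q,p+1,b_n)$. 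The covariance $\text{Cov}_n = \text{Cov}[\Delta\hat\mu^{(\nu)}_{+,p}(h_n),\Delta\hat\mu^{(p+1)}_{+,q}(b_n)\mid\mathcal{X}_n]$ requires a direct calculation: writing each estimator as $\nu! e'_\nu H_p(h_n)\Gamma^{-1}_{+,p}(h_n) Z_p(h_n)'W_+(h_n)\varepsilon/n$ and the analogous expression at $(q,b_n)$, and using $\mathbb{V}[\varepsilon\mid\mathcal{X}_n]=\Sigma$ (diagonal, conditional on $\mathcal{X}_n$), yields exactly $\mathcal{C}_{+,\nu,p,q}(h_n,b_n)$ in the statement by the definition of $\Psi_{+,p,q}(h,b)$ in Appendix A. The proof below the cutoff is identical.

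For part (D), I would apply a Lindeberg--Feller/Liapunov CLT to the standardized bc estimator, leveraging that it is a linear combination of two local polynomial kernel-weighted sums with bandwidths satisfying $n\min\{h_n,b_n\}\to\infty$ and $\kappa\max\{h_n,b_n\}<\kappa_0$. Asymptotic normality of the centered linear combination follows as in Lemma B.3(D), since the summands form a triangular array whose Liapunov condition is verified using boundedness of $\mathbb{E}[\Delta Y_i^4\mid Z_i=z]$ and compactness of the kernel support. The rate condition $n\min\{h_n^{2p+3},b_n^{2p+3}\}\max\{h_n^2,b_n^{2(q-p)}\}\to 0$ is precisely what makes the remaining biases in part (B), namely the $h_n^{2+p-\nu}$ and $h_n^{p+1-\nu}b_n^{q-p}$ terms, asymptotically negligible relative to $\sqrt{\mathcal{V}^{\text{bc}}_{+,\nu,p,q}(h_n,b_n)} \asymp \sqrt{1/(n\min\{h_n,b_n\}^{1+2\nu})}$, so the bias drops out of the standardized limit. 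The main obstacle is the covariance calculation: the two local polynomial estimators use overlapping observations weighted by different kernel scalings, so care is needed to derive the explicit form of $\mathcal{C}_{+,\nu,p,q}$ via $\Psi_{+,p,q}(h,b)$ and verify that it is well-behaved under $\max\{h_n,b_n\}\to 0$; once this is in hand, the rest is routine assembly of Lemma B.3 pieces.
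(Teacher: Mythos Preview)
Your proposal is correct and follows essentially the same route as the paper: write the bias-corrected estimator as $\Delta\hat\mu^{(\nu)}_{+,p}(h_n)$ minus $h_n^{p+1-\nu}\mathcal{B}_{+,\nu,p,p+1}(h_n)/(p+1)!$ times $\Delta\hat\mu^{(p+1)}_{+,q}(b_n)$, apply Lemma~\ref{lemma:s.a.3} twice (once at $(s,\ell)=(\nu,p)$, once at $(s,\ell)=(p+1,q)$) so the leading $h_n^{p+1-\nu}$ bias cancels, and for the variance use $V_1+V_2-2C_{12}$ with the covariance computed directly from the sandwich form to obtain $\mathcal{C}_{+,\nu,p,q}$. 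The paper's argument for part (D) is likewise the Lindeberg--Feller CLT applied to the linear combination, with the rate condition killing the residual bias terms from (B); your description matches this.
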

\begin{proof}
    For part (B), first note that $\mathbb{E}\left[\Delta \hat{\mu}_{+, p, q}^{(\nu) b c}\left(h_{n}, b_{n}\right) \mid \mathcal{X}_{n}\right]=B_{1}-B_{2}$ with $B_{1}=\mathbb{E}\left[\nu!e_{\nu}^{\prime} \hat{\delta}_{+, p}\left(h_{n}\right) \mid \mathcal{X}_{n}\right]$ and $B_{2}=h_{n}^{p+1-\nu} \mathbb{E}\left[e_{p+1}^{\prime} \hat{\delta}_{+, q}\left(b_{n}\right) \mid \mathcal{X}_{n}\right] \mathcal{B}_{+, \nu, p}\left(h_{n}\right)$. By Lemma \ref{lemma:s.a.3}, with $s=\nu$ and $\ell=p$, we have

    \begin{align*}    
    B_{1}= & \nu!e_{\nu}^{\prime} \delta_{+, p}+h_{n}^{1+p-\nu} \frac{\Delta \mu_{+}^{(p+1)}}{(p+1)!} \mathcal{B}_{+, v, p, p+1}\left(h_{n}\right) \\
    & +h_{n}^{2+p-\nu} \frac{\Delta \mu_{+}^{(p+2)}}{(p+2)!} \mathcal{B}_{+, v, p, p+2}\left(h_{n}\right)+o_{p}\left(h_{n}^{2+p-\nu}\right)    
    \end{align*}    
    Similarly, by Lemma \ref{lemma:s.a.3}, with $s=p+1$ and $\ell=q$, we have
    \begin{align*}    
    & \mathbb{E}\left[(p+1)!e_{p+1}^{\prime} \hat{\delta}_{+, q}\left(b_{n}\right) \mid \mathcal{X}_{n}\right] \\
    & \quad=(p+1)!e_{p+1}^{\prime} \delta_{+, q}+b_{n}^{q-p} \frac{\Delta \mu_{+}^{(q+1)}}{(q+1)!} \mathcal{B}_{+, p+1, q, q+1}\left(b_{n}\right)+o_{p}\left(b_{n}^{q-p}\right)    
    \end{align*}    
    and hence    
    \begin{align*}    
    B_{2}= & h_{n}^{p+1-\nu} \mathbb{E}\left[(p+1)!e_{p+1}^{\prime} \hat{\delta}_{+, q}\left(b_{n}\right) \mid \mathcal{X}_{n}\right] \frac{\mathcal{B}_{+, v, p, p+1}\left(h_{n}\right)}{(p+1)!} \\
    = & h_{n}^{p+1-\nu}\left(e_{p+1}^{\prime} \delta_{+, q)}\right) \mathcal{B}_{+, \nu, p, p+1}\left(h_{n}\right) \\
    & +h_{n}^{p+1-\nu} b_{n}^{q-p} \frac{\Delta \mu_{+}^{(q+1)}}{(q+1)!} \mathcal{B}_{+, p+1, q, q+1}\left(b_{n}\right) \frac{\mathcal{B}_{+, v, p, p+1}\left(h_{n}\right)}{(p+1)!} \\
    & +h_{n}^{p+1-\nu} o_{p}\left(b_{n}^{q-p}\right) \mathcal{B}_{+, \nu, p, p+1}\left(h_{n}\right)    
    \end{align*}    
    Collecting terms, the result in part (B) follows:    
    \begin{align*}    
    \mathbb{E} & {\left[\nu!e_{\nu}^{\prime} \hat{\delta}_{+, p, q}^{\mathrm{bc}}\left(h_{n}, b_{n}\right) \mid \mathcal{X}_{n}\right] } \\
    = & \nu!e_{\nu}^{\prime} \delta_{+, p}+h_{n}^{2+p-\nu} \frac{\Delta \mu_{+}^{(p+2)}}{(p+2)!} \mathcal{B}_{+, \nu, p, p+2}\left(h_{n}\right)\left\{1+o_{p}(1)\right\} \\
    & -h_{n}^{p+1-\nu} b_{n}^{q-p} \frac{\Delta \mu_{+}^{(q+1)}}{(q+1)!} \mathcal{B}_{+, p+1, q, q+1}\left(b_{n}\right) \frac{\mathcal{B}_{+, \nu, p}\left(h_{n}\right)}{(p+1)!}\left\{1+o_{p}(1)\right\}    
    \end{align*}    
    For part $(\mathrm{V})$, first note that $\mathbb{V}\left[\Delta \hat{\mu}_{+, p, q}^{(\nu) \text { bc }}\left(h_{n}, b_{n}\right) \mid \mathcal{X}_{n}\right]=V_{1}+V_{2}-2 C_{12}$ where, using Lemma \ref{lemma:s.a.3} with $s=\nu$ and $\ell=p$,    
    \begin{align*}
    V_{1}=\mathbb{V}\left[\nu!e_{\nu}^{\prime} \hat{\delta}_{+, p}\left(h_{n}\right) \mid \mathcal{X}_{n}\right]=\mathbb{V}\left[\Delta \hat{\mu}_{+, p}^{(\nu)}\left(h_{n}\right) \mid \mathcal{X}_{n}\right]=\mathcal{V}_{+, v, p}\left(h_{n}\right)
    \end{align*}    
    and, using Lemma\ref{lemma:s.a.3} with $s=p+1$ and $\ell=q$,    
    \begin{align*}    
    V_{2} & =\mathbb{V}\left[h_{n}^{p+1-\nu}\left(e_{p+1}^{\prime} \hat{\delta}_{+, q}\left(b_{n}\right)\right) \mathcal{B}_{+, \nu, p, p+1}\left(h_{n}\right) \mid \mathcal{X}_{n}\right] \\
    & =h_{n}^{2(p+1-\nu)} \mathbb{V}\left[(p+1)!e_{p+1}^{\prime} \hat{\boldsymbol{\delta}}_{+, q}\left(b_{n}\right) \mid \mathcal{X}_{n}\right] \frac{\mathcal{B}_{+, \nu, p, p+1}\left(h_{n}\right)^{2}}{(p+1)!^{2}} \\
    & =h_{n}^{2(p+1-\nu)} \mathcal{V}_{+, p+1, q}\left(b_{n}\right) \frac{\mathcal{B}_{+, v, p, p+1}\left(h_{n}\right)^{2}}{(p+1)!^{2}}    
    \end{align*}    
    and    
    \begin{align*}    
    C_{12} & =\mathbb{C}\left[\nu!e_{\nu}^{\prime} \hat{\delta}_{+, p}\left(h_{n}\right), h_{n}^{p+1-\nu}\left(e_{p+1}^{\prime} \hat{\delta}_{+, q}\left(b_{n}\right)\right) \mathcal{B}_{+, \nu, p, p+1}\left(h_{n}\right) \mid \mathcal{X}_{n}\right] \\
    & =h_{n}^{p+1-\nu} \mathbb{C}\left[\nu!e_{\nu}^{\prime} \hat{\delta}_{+, p}\left(h_{n}\right),(p+1)!e_{p+1}^{\prime} \hat{\delta}_{+, q}\left(b_{n}\right) \mid \mathcal{X}_{n}\right] \frac{\mathcal{B}_{+, \nu, p, p+1}\left(h_{n}\right)}{(p+1)!}    
    \end{align*}    
    with    
    \begin{align*}    
    \mathbb{C} & {\left[e_{\nu}^{\prime} \hat{\delta}_{+, p}\left(h_{n}\right), e_{p+1}^{\prime} \hat{\delta}_{+, q}\left(b_{n}\right) \mid \mathcal{X}_{n}\right] } \\
    = & h_{n}^{-\nu} e_{\nu}^{\prime} \Gamma_{+, p}^{-1}\left(h_{n}\right) X_{p}\left(h_{n}\right) W_{+}\left(h_{n}\right) \mathbb{C}\left[Y, Y \mid \mathcal{X}_{n}\right] \\
    & \times W_{+}\left(b_{n}\right) X_{q}\left(b_{n}\right) \Gamma_{+, q}^{-1}\left(b_{n}\right) e_{p+1} b_{n}^{-p-1} / n^{2} \\
    = & \frac{1}{n h_{n}^{v} b_{n}^{p+1}} \nu!(p+1)!e_{\nu}^{\prime} \Gamma_{+, \ell}^{-1}\left(h_{n}\right) \Psi_{+, p, q}\left(h_{n}, b_{n}\right) \Gamma_{+, q}^{-1}\left(b_{n}\right) e_{p+1} .    
    \end{align*}    
    Thus, collecting terms, we obtain the result in part (V).    
    The proof for the control group is analogous. 
\end{proof}

\begin{lemma}\label{lemma:a.1}
    Suppose Assumptions $1-2$ hold with $S \geq p+2$, and $n h_{n} \rightarrow \infty$. Let $r \in \mathbb{N}$.
    
    (B) If $h_{n} \rightarrow 0$, then    
    \begin{align*}    
        \mathbb{E}\left[\hat{\tau}_{\nu, p}\left(h_{n}\right) \mid \mathcal{X}_{n}\right]= & \tau_{\nu}+h_{n}^{p+1-\nu} \mathbf{B}_{\nu, p, p+1}\left(h_{n}\right) \\
        & +h_{n}^{p+2-\nu} \mathbf{B}_{\nu, p, p+2}\left(h_{n}\right)+o_{p}\left(h_{n}^{p+2-\nu}\right)    
    \end{align*}    
    where    
    \begin{align*}    
    & \mathbf{B}_{\nu, p, r}\left(h_{n}\right)=\frac{\Delta \mu_{+}^{(r)}}{r!} \mathcal{B}_{+, \nu, p, r}\left(h_{n}\right)-\frac{\Delta \mu_{-}^{(r)}}{r!} \mathcal{B}_{-, \nu, p, r}\left(h_{n}\right) \\
    & \mathcal{B}_{+, v, p, r}\left(h_{n}\right)=\nu!e_{\nu}^{\prime} \Gamma_{+, p}^{-1}\left(h_{n}\right) \vartheta_{+, p, r}\left(h_{n}\right)=\nu!e_{\nu}^{\prime} \Gamma_{p}^{-1} \boldsymbol{\vartheta}_{p, r}+o_{p}(1) \\
    & \mathcal{B}_{-, \nu, p, r}\left(h_{n}\right)=\nu!e_{\nu}^{\prime} \Gamma_{-, p}^{-1}\left(h_{n}\right) \vartheta_{-, p, r}\left(h_{n}\right)=(-1)^{\nu+r} \nu!e_{\nu}^{\prime} \Gamma_{p}^{-1} \boldsymbol{\vartheta}_{p, r}+o_{p}(1)    
    \end{align*}    
    (V) If $h_{n} \rightarrow 0$, then $\mathbf{V}_{\nu, p}\left(h_{n}\right)=\mathbb{V}\left[\hat{\tau}_{\nu, p}\left(h_{n}\right) \mid \mathcal{X}_{n}\right]=\mathcal{V}_{+, v, p}\left(h_{n}\right)+\mathcal{V}_{-, \nu, p}\left(h_{n}\right)$, where    
    \begin{align*}    
    \mathcal{V}_{+, \nu, p}\left(h_{n}\right) & =\frac{1}{n h_{n}^{2 \nu}} \nu!^{2} e_{\nu}^{\prime} \Gamma_{+, p}^{-1}\left(h_{n}\right) \Psi_{+, p}\left(h_{n}\right) \Gamma_{+, p}^{-1}\left(h_{n}\right) e_{\nu} \\
    & =\frac{1}{n h_{n}^{1+2 \nu}} \frac{\sigma_{+}^{2}}{f} \nu!^{2} e_{\nu}^{\prime} \Gamma_{p}^{-1} \Psi_{p} \Gamma_{p}^{-1} e_{\nu}\left\{1+o_{p}(1)\right\} \\
    \mathcal{V}_{-, v, p}\left(h_{n}\right) & =\frac{1}{n h_{n}^{2 \nu}} \nu!^{2} e_{\nu}^{\prime} \Gamma_{-, p}^{-1}\left(h_{n}\right) \Psi_{-, p}\left(h_{n}\right) \Gamma_{-, p}^{-1}\left(h_{n}\right) e_{\nu} \\
    & =\frac{1}{n h_{n}^{1+2 \nu}} \frac{\sigma_{-}^{2}}{f} \nu!^{2} e_{\nu}^{\prime} \Gamma_{p}^{-1} \Psi_{p} \Gamma_{p}^{-1} e_{\nu}\left\{1+o_{p}(1)\right\}    
    \end{align*}    
    (D) If $n h_{n}^{2 p+5} \rightarrow 0$, then    
    \begin{align*}
    \frac{\hat{\tau}_{\nu, p}\left(h_{n}\right)-\tau_{\nu}-h_{n}^{p+1-\nu} \mathbf{B}_{\nu, p, p+1}\left(h_{n}\right)}{\sqrt{\mathbf{V}_{\nu, p}\left(h_{n}\right)}} \rightarrow_{d} \mathcal{N}(0,1)
    \end{align*}
\end{lemma}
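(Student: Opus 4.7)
The plan is to reduce the statement to Lemma \ref{lemma:s.a.3} by decomposing the DiDC estimator into its two one-sided components and exploiting the fact that these components are functions of disjoint subsamples. Write
\begin{align*}
\hat{\tau}_{\nu,p}(h_n) = \Delta \hat{\mu}^{(\nu)}_{+,p}(h_n) - \Delta \hat{\mu}^{(\nu)}_{-,p}(h_n),
\end{align*}
and note that, by construction, $\Delta \hat{\mu}^{(\nu)}_{+,p}(h_n)$ depends on the sample only through $\{(\Delta Y_i, Z_i): Z_i \geq 0\}$, while $\Delta \hat{\mu}^{(\nu)}_{-,p}(h_n)$ depends only on $\{(\Delta Y_i, Z_i): Z_i < 0\}$. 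Since the $\varepsilon_{\Delta Y,i}$ are independent across $i$ and $\chi_n$ is conditioned upon, the two one-sided estimators are conditionally independent given $\chi_n$.

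For part (B), I would apply Lemma \ref{lemma:s.a.3}(B) with $s=\nu$ and $\ell=p$ separately to each side, and take differences, using linearity of conditional expectation. The terms of order $h_n^{p+1-\nu}$ combine into $\mathbf{B}_{\nu,p,p+1}(h_n)$ and those of order $h_n^{p+2-\nu}$ combine into $\mathbf{B}_{\nu,p,p+2}(h_n)$. The sign flip $(-1)^{\nu+r}$ in the formula for $\mathcal{B}_{-,\nu,p,r}(h_n)$ is inherited directly from Lemma \ref{lemma:s.a.3}(B) via the identity $H_p(-1)\Gamma_{p}^{-1}H_p(-1) = \Gamma_{p}^{-1}$ applied together with $e_\nu' H_p(-1) = (-1)^\nu$ and $H_p(-1)\vartheta_{p,r}$ producing a factor $(-1)^r$.

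For part (V), conditional independence yields $\mathbf{V}_{\nu,p}(h_n) = \mathcal{V}_{+,\nu,p}(h_n) + \mathcal{V}_{-,\nu,p}(h_n)$ with no covariance term, and each summand is identified via Lemma \ref{lemma:s.a.3}(V). The leading-order expression in terms of $\sigma_\pm^2/f$ then follows from Lemma \ref{lemma:s.a.1}(b) combined with the continuous mapping theorem applied to $\Gamma_{\pm,p}^{-1}\Psi_{\pm,p}\Gamma_{\pm,p}^{-1}$.

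For part (D), I would center and rescale the two one-sided estimators by their own standard deviations, invoke Lemma \ref{lemma:s.a.3}(D) under the rate condition $nh_n^{2p+5} \to 0$ (which, with $\ell=p$, is the required $nh_n^{2\ell+5}\to 0$), and upgrade the marginal convergences to joint convergence using conditional independence. The difference then converges to a Gaussian via the continuous mapping theorem, with variance equal to the sum of the individual variances after normalization. Slutsky's theorem absorbs the $h_n^{p+2-\nu}\mathbf{B}_{\nu,p,p+2}(h_n)$ remainder in the bias expansion, since the rate condition $nh_n^{2p+5}\to 0$ implies $\sqrt{nh_n^{1+2\nu}}\, h_n^{p+2-\nu} = o(1)$. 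The main delicate step will be verifying that this residual bias term and the $o_p(h_n^{p+2-\nu})$ remainder from Lemma \ref{lemma:s.a.3}(B) are genuinely negligible relative to $\sqrt{\mathbf{V}_{\nu,p}(h_n)}$ after recentering by only the leading $h_n^{p+1-\nu}\mathbf{B}_{\nu,p,p+1}(h_n)$ term; this is a direct calculation using the rate condition, but it is the one place where care is required.
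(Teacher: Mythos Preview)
Your proposal is correct and follows essentially the same route as the paper: parts (B) and (V) are obtained by applying Lemma \ref{lemma:s.a.3} with $s=\nu$, $\ell=p$ to each side and combining via linearity of conditional expectation and conditional independence across the threshold, exactly as the paper does. For part (D) the paper is slightly terser---it says to rerun the Lindeberg-type argument from the proof of Lemma \ref{lemma:s.a.3}(D) directly on the difference $\hat{\tau}_{\nu,p}(h_n)$ rather than upgrading two marginal CLTs to a joint one via conditional independence---but your approach is equivalent and arguably cleaner, and your explicit check that $\sqrt{nh_n^{1+2\nu}}\,h_n^{p+2-\nu}=o(1)$ under $nh_n^{2p+5}\to 0$ is a detail the paper leaves implicit.
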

\begin{proof}
    Part (B) follows immediately from Lemma \ref{lemma:s.a.3}(B), its analogue for the left-side estimator $\left(s!e_{s}^{\prime} \hat{\beta}_{-, \ell}\left(h_{n}\right)\right)$, and the linearity of conditional expectations. Part (V) also follows immediately from Lemma \ref{lemma:s.a.3}(V), its analogue for the left-side estimator $\left(s!e_{s}^{\prime} \hat{\beta}_{-, \ell}\left(h_{n}\right)\right)$, and the conditional independence of observations at either side of the threshold $(x=0)$. Finally, part (D) follows by the same argument given in the proof of Lemma \ref{lemma:s.a.3}(D), but now applied to the estimator $\hat{\tau}_{\nu, p}\left(h_{n}\right)=\Delta \hat{\mu}_{+, p}^{(\nu)}\left(h_{n}\right)-\Delta \hat{\mu}_{-, p}^{(\nu)}\left(h_{n}\right)=\nu!e_{\nu}^{\prime} \hat{\beta}_{+, p}\left(h_{n}\right)-\nu!e_{\nu}^{\prime} \hat{\beta}_{-, p}\left(h_{n}\right)$. This completes the proof.
\end{proof}

\newpage
\section{Proofs of main results}\label{app:proofs}
We provide the proofs of our results in this appendix.

\begin{proof}[Proof of Lemma 1]

    Given Assumptions \ref{assump:continuity}, \ref{assump:random} and \ref{assump:disc_prob_treat}, we can prove that $\tau^{DiDC} = \tau_1^{RD} - \tau_0^{RD} = \Delta Y^+ - \Delta Y^-$.
    \begin{enumerate}
        \item[A)]  \textbf{Difference of RDs.} From \ref{eq:potential_outcomes_Y1} and \ref{eq:potential_outcomes_Y0} we can derive two RDs estimands, one for $t=0$ and one for $t=1$.
            \begin{align}
                           \tau^{RD}_1    &= Y_1^{+} - Y_1^{-} \nonumber \\
                                &=\E \left[ Y_{i,1} \left( 1,1 \right) - Y_{i,1} \left( 0,0 \right) | Z_i=z_0 \right] \label{eq:tau_RDD1}
                                 \end{align}
                  \begin{align}              
                \tau^{RD}_0    &= Y_0^{+} - Y0^{-} \nonumber \\
                                &=\E\left[ Y_{i,0} \left( 1,0 \right) - Y_{i,0} \left( 0,0 \right) | Z_i = z_0 \right]  \label{eq:tau_RDD0}
            \end{align} 
            where $X^+ = \lim_{\epsilon \rightarrow 0}E(X_i|Z_i=z_0 + \epsilon)$ and $X^- = \lim_{\epsilon \rightarrow 0}E(X_i|Z_i=z_0 - \epsilon)$. By taking the difference between the equations \ref{eq:tau_RDD1} and \ref{eq:tau_RDD0} and adding  assumption \ref{assump:conf_time_invar}, we arrive at the differences-in-discontinuity estimand:
            \begin{align*}
                \tau^{DiDC}     &= \tau^{RDD}_1 - \tau^{RDD}_0 \nonumber \\
                                &=\E\left[ \left( Y_{i,1}(1,1) - Y_{i,1}(0,0) \right) | Z_i = z_0 \right] - \nonumber \\
                                & \quad\E\left[  Y_{i,0}(1,0) - Y_{i,0}(0,0) | Z_i = z_0 \right] \\
                                &=\E\left[ \left( Y_{i,1}(1,1) - Y_{i,1}(1,0) \right) | Z_i = z_0 \right] 
            \end{align*}     
        \item[B)] \textbf{RD of the differences.}   Taking the difference of the limits of the first-differenced outcomes above and below the threshold, and under assumptions \ref{assump:continuity}, \ref{assump:random} and \ref{assump:disc_prob_treat}:
            \begin{align*}
                \tau^{DiDC} &= \Delta Y^{+} - \Delta Y^{-} \nonumber \\
                            &=\E\left[ \left( Y_{i,1}(1,1) - \left( Y_{i,0}(1,0) | Z_i = z_0 \right]  - \nonumber \right. \right.\\                               
                            & \quad\E\left[ \left( Y_{i,1}(0,0) - Y_{i,0}(0,0) \right) | Z_i = z_0 \right] \\ 
                            &=\E\left[ \left( Y_{i,1}(1,1) - Y_{i,1}(0,0) \right) | Z_i = z_0 \right] - \nonumber \\
                            & \quad\E\left[  Y_{i,0}(1,0) - Y_{i,0}(0,0) | Z_i = z_0 \right] 
            \end{align*}
            where $\Delta X^+ = \lim_{\epsilon \rightarrow 0}E(\Delta X_i|Z_i=z_0 + \epsilon)$ and $\Delta X^- = \lim_{\epsilon \rightarrow 0}E(\Delta X_i|Z_i=z_0 - \epsilon)$.
    \end{enumerate}    
    Therefore $\tau^{DiDC}     = \tau^{RD}_1 - \tau^{RD}_0   = \Delta Y^{+} - \Delta Y^{-}$.
\end{proof}

\begin{proof}[Proof of Lemma \ref{lemma:mult}]
   Using the derivations in the proof of Lemma 1:
    \begin{align*}
                \tau^{DiDC} &=\E\left[ \left( Y_{i,1}(1,1) - Y_{i,1} (0,0) \right) | Z_i = z_0 \right] - \nonumber \\
                            & \quad\E\left[  Y_{i,0}(1,0) - Y_{i,0}(0,0) | Z_i = z_0 \right] 
            \end{align*}
        Under Assumption \ref{assump:mult}, 
        \begin{align*}
  \tau^{DiDC} &=\E\left[ \left( Y_{i,1}(0,1) - Y_{i,1} (0,0) \right) | Z_i = z_0 \right] \E\left[ \left( Y_{i,1}(1,0) - Y_{i,1} (0,0) \right) | Z_i = z_0 \right]- \nonumber \\
                            & \quad\E\left[  Y_{i,0}(1,0) - Y_{i,0}(0,0) | Z_i = z_0 \right]
\end{align*}
        
    By Assumption \ref{assump:conf_time_invar}, the second factor is time-invariant and can be replaced with its period-0 analogue, which is exactly $\tau^{RD}_0$. Substituting this expression gives
    \[
        \tau^{DiDC} = (\E\!\left[ Y_{i,1}(0,1) - Y_{i,1}(0,0)\mid Z_i=z_0 \right] +1) \tau^{RD}_0 
    \]
    Rearranging yields the desired result,
    \[
        \frac{\tau^{DiDC}}{\tau^{RD}_0} +1 = \E\!\left[ Y_{i,1}(0,1) - Y_{i,1}(0,0)\mid Z_i=z_0 \right].
    \]
\end{proof}

\begin{proof}[Proof of Lemma \ref{lemma:MSEoptimal_band}]

    Recall the definition $MSE_{\nu,p,s}(h_n) =\E\left[ \left( \hat{\tau}_{\nu, p, s}(h_n) - \tau_{\nu,p} \right)^2 | \chi_n \right]$. We can rewrite as 
    \begin{align*}
        MSE_{\nu,p,s}(h_n) =& V \left[ \hat{\tau}_{\nu, p, s}(h_n) | \chi_n \right] +  \left(\E \left[ \hat{\tau}_{\nu, p, s}(h_n) - \tau_{\nu,p} | \chi_n \right] \right)^2 
    \end{align*}
    Then, from Lemma \ref{lemma:a.1},
    \begin{align*}
        V \left[ \hat{\tau}_{\nu, p, s}(h_n) | \chi_n \right]  & = \frac{\nu! }{n h_n^{1+2\nu}} \frac{\sigma_+^2 - \sigma_-^2}{f}e^{\prime}_{\nu}\Gamma_p^{-1} \Psi_p \Gamma_p^{-1} \{1+o_p(1)\}
    \end{align*}
    and
    \begin{align}
        \E \left[ \hat{\tau}_{\nu, p, s}(h_n) - \tau_{\nu,p} | \chi_n \right] = h_n^{1+p-\nu} \frac{\Delta \mu^{(p+1)}_{+}-(-1)^{\nu+p+s}\Delta \mu^{(p+1)}_{-}}{(p+1)!} \nu! e^{\prime}_{\nu}\Gamma_p^{-1} \varphi_{p,p+1} \{1+o_p(1)\}
    \end{align}
    so we can rewrite 
    \begin{align*}
        MSE_{\nu,p,s}(h_n) = \frac{\mathbb{V_{\nu,p}}}{n h_n^{1+2\nu}}\{1+o_p(1)\} + \left( h_n^{1+p-\nu} \mathbb{B}_{\nu, p, p+1,s}\{1+o_p(1)\} \right)^2
    \end{align*}
    where $\mathbb{V_{\nu,p}} =\nu!\frac{\sigma_+^2 - \sigma_-^2}{f}e^{\prime}_{\nu}\Gamma_p^{-1} \Psi_p \Gamma_p^{-1}  $ and $\mathbb{B_{\nu, p, p+1, s}} = \frac{\Delta \mu^{(p+1)}_{+}-(-1)^{\nu+p+s}\Delta \mu^{(p+1)}_{-}}{(p+1)!} \nu! e^{\prime}_{\nu}\Gamma_p^{-1} \varphi_{p,p+1}$.
    The MSE-optimal bandwidth is
    \begin{align*}
         h_{n,\nu, p}^{MSE} & = arg \min_{h_{n}} MSE_{\nu,p,s}(h_n)\\
         & = \left( \frac{\left( 1+ 2 \nu \right) \mathbf{V}_{\nu ,p}}{2n \left( 1+p-\nu \right) \textbf{B}^{2}_{\nu, p, p+1, s}} \right)^{\frac{1}{2p+3}}
    \end{align*}
\end{proof}

\begin{proof}[Proof of Lemma 4]

    We define higher-order derivatives notation of the unknown regression functions as:
    \begin{align*}
        &\Delta \mu_+^{(\nu)}(z) = \frac{d^{\nu}\Delta\mu_+(z)}{dz^{\nu}}, &\Delta \mu_-^{(\nu)}(z) = \frac{d^{\nu}\Delta\mu_(z)}{dz^{\nu}}\\
        &\Delta \mu_+^{(\nu)} = \lim_{x \rightarrow 0^+}{\Delta \mu_+^{(\nu)}(z)}, & \Delta \mu_-^{(\nu)} = \lim_{x \rightarrow 0^-}{\Delta \mu_-^{(\nu)}(z)}\\
        &\Delta \mu_+ = \lim_{x \rightarrow 0^+}{\Delta \mu(z)}, & \Delta \mu_- = \lim_{x \rightarrow 0^=}{\Delta \mu(z)}
    \end{align*}
    where $\Delta \mu (z) =\E\left( \Delta Y_i | Z_i = z \right) =\E\left( Y_{i,1} - Y_{i,0} | Z_i = z \right)$.

    The notation for the RD at time $t$ where $ \mu_t (z) =\E\left(  Y_{i,t} | Z_i = z \right)$ is
    \begin{align*}
        &\mu_{t,+}^{(\nu)}(z) = \frac{d^{\nu}\mu_{t,+}(z)}{dz^{\nu}}, & \mu_{t,-}^{(\nu)}(z) = \frac{d^{\nu}\mu_(z)}{dz^{\nu}}\\
        & \mu_{t,+}^{(\nu)} = \lim_{x \rightarrow 0^+}{ \mu_{t,+}^{(\nu)}(z)}, &  \mu_{t,-}^{(\nu)} = \lim_{x \rightarrow 0^-}{ \mu_{t,-}^{(\nu)}(z)}\\
        & \mu_{t,+} = \lim_{x \rightarrow 0^+}{ \mu(z)}, &  \mu_{t,-} = \lim_{x \rightarrow 0^=}{ \mu(z)}
    \end{align*}    
    Then we derive the following:
    \begin{align}
        &\Delta \mu_+^{(\nu)} = \mu_{1,+}^{(\nu)} - \mu_{0,+}^{(\nu)}  & \Delta \mu_-^{(\nu)} = \mu_{1,}^{(\nu)} - \mu_{0,+}^{(\nu)}
    \end{align}
    It follows that $B \left[ \hat{\tau}^{DiDC}_{\nu,p} (h_n) \right] = B \left[ \hat{\tau}^{RD}_{1,\nu,p} (h_n) \right] - B \left[ \hat{\tau}^{RD}_{0,\nu,p} (h_n) \right]  $

    Note that

\begin{align*}
        \Delta \mu(z)   =&\E\left[ \Delta Y_i | Z_i = z \right] \\
                        =&\E\left[ Y_{i,1} - Y_{i,0} | Z_i = z \right] \\
                        =&\E\left[ Y_{i,1} | Z_i = z \right] -\E\left[ Y_{i,0} | Z_i = z \right]\\
                        =& \mu_1(z) -\mu_0(z)
    \end{align*}
    Therefore,
    \begin{align*}
        \Delta \mu_+^{(\nu)} =& \mu_{+,1}^{(\nu)} - \mu_{+,0}^{(\nu)}
    \end{align*}
    Then, if the shapes of the DGPs are time-invariant for each side of the threshold, then $\mu_{+,1}^{(\nu)} - \mu_{+,0}^{(\nu)}$ and $\mu_{-,1}^{(\nu)} - \mu_{-,0}^{(\nu)}$, leading to $\Delta \mu_+^{(\nu)} = \Delta \mu_-^{(\nu)} = 0$
\end{proof}

\begin{proof}[Proof of Theorem 1]

The result follows from Lemma B.3 by considering estimators from both sides of the threshold together and the conditional independence of observations on either side.
    
\end{proof}

\begin{proof}[Proof of Theorem 2]

Let $\theta_{0}=\left(\Delta\mu_{+},\Delta\mu_{-}\right)$ and $\widehat{\theta}=\left(\widehat{\mu}_{+,p,q}^{(0)}(h_{n},b_{n}),\widehat{\mu}_{-,p,q}^{(0)}(h_{n},b_{n})\right)$. For any fixed $y$, $c_{1}$ and $c_{2}$, define the mapping

\begin{equation*}
    \phi :l^{\infty}\left ( \mathbb{R}\times \mathcal{S}(Z) \right )\times l^{\infty}\left ( \mathcal{S}(Z) \right )\rightarrow l^{\infty}\left (\mathcal{S}(Z),\mathbb{R}^{2} \right )
\end{equation*}

by 

{ \begin{equation*}
    \left [ \phi(\theta) \right ](z)=\begin{pmatrix}
\max\left\{ \theta^{(1)}(y,z)-c_{1},(\theta^{(1)}(y,z)-\theta^{(2)}(y,z))-c_{2}\right\} \\
\min\left\{ \theta^{(1)}(y,z)+c_{1},(\theta^{(1)}(y,z)-\theta^{(2)}(y,z))+c_{2} \right\}\end{pmatrix}
\end{equation*}}

where $\theta^{(j)}$ is the j-th component of $\theta$. From Fang and Santos (2015), the maps $\left ( a_{1},a_{2} \right )\mapsto \min\left\{ a_{1},a_{2}\right\}$ and $\left ( a_{1},a_{2} \right )\mapsto \max\left\{ a_{1},a_{2}\right\}$ are Hadamard directionally
differentiable with Hadamard directional derivatives at $(a_{1},a_{2})$ equal to

\begin{equation*}
    h\mapsto \left\{\begin{matrix}
h^{(1)},if\  a_{1}<a_{2}, \\
\min\left\{h^{(1)},h^{(2)} \right\},if\ a_{1}=a_{2}, \\
h^{(2)},if\ a_{1}>a_{2}\end{matrix}\right.
\end{equation*}

and

\begin{equation*}
    h\mapsto \left\{\begin{matrix}
h^{(2)},if\  a_{1}<a_{2}, \\
\max\left\{h^{(1)},h^{(2)} \right\},if\ a_{1}=a_{2}, \\
h^{(1)},if\ a_{1}>a_{2}\end{matrix}\right.
\end{equation*}

for $h\in\mathbb{R}^{2}$.

The mapping comprises max and min operators, along with four Hadamard differentiable functions. We compute their Hadamard derivatives with respect to $\theta$ below:

$\left [ \delta_{1}(\theta) \right ](z)=\theta^{(1)}(y,z)-c_{1}$ has Hadamard derivative equal to

$\left [ \delta^{'}_{1,\theta}(h) \right ](z)=h^{(1)}(y,z)$.

$\left [ \delta_{2}(\theta) \right ](z)=(\theta^{(1)}(y,z)-\theta^{(2)}(y,z))-c_{2}$ has Hadamard derivative equal to

$\left [ \delta^{'}_{2,\theta}(h) \right ](z)=h^{(1)}(y,z)-h^{(2)}(y,z)$.

$\left [ \delta_{3}(\theta) \right ](z)=\theta^{(1)}(y,z)+c_{1}$ has Hadamard derivative equal to

$\left [ \delta^{'}_{3,\theta}(h) \right ](z)=h^{(1)}(y,z)$.

$\left [ \delta_{4}(\theta) \right ](z)=(\theta^{(1)}(y,z)-\theta^{(2)}(y,z))+c_{2}$ has Hadamard derivative equal to

$\left [ \delta^{'}_{4,\theta}(h) \right ](z)=h^{(1)}(y,z)-h^{(2)}(y,z)$.

Using this notation, we write the functional $\phi$ as 

\begin{equation*}
    \phi(\theta)=\begin{pmatrix}
\max\left\{ \delta_{1}(\theta),\delta_{2}(\theta)\right\} \\
\min\left\{ \delta_{3}(\theta),\delta_{4}(\theta) \right\}\end{pmatrix}
\end{equation*}

By the Chain rule for Hadamard differentiable functions (Lemma C2 from \citealp{masten2020inference}), $\phi$ has Hadamard directional derivative equal to

\begin{equation*}
    \phi_{\theta}^{'}(h)=\begin{pmatrix}
    \mathbf{1}\left\{ \delta_{1}(\theta)>\delta_{2}(\theta)\right\}\delta_{1,\theta}^{'}(h) \\
    +\mathbf{1}\left\{ \delta_{1}(\theta)=\delta_{2}(\theta) \right\}\max\left\{ \delta_{1,\theta}^{'}(h),\delta_{2,\theta}^{'}(h)\right\} \\+\mathbf{1}\left\{ \delta_{1}(\theta)<\delta_{2}(\theta)\right\}\delta_{2,\theta}^{'}(h)
     \\
     \\\mathbf{1}\left\{ \delta_{3}(\theta)<\delta_{4}(\theta)\right\}\delta_{4,\theta}^{'}(h)
     \\+\mathbf{1}\left\{ \delta_{3}(\theta)=\delta_{4}(\theta)\right\}\min\left\{ \delta_{3,\theta}^{'}(h),\delta_{4,\theta}^{'}(h)\right\}
     \\+\mathbf{1}\left\{ \delta_{3}(\theta)>\delta_{4}(\theta)\right\}\delta_{3,\theta}^{'}(h)
\end{pmatrix}
\end{equation*}

By theorem B.3, we have $\sqrt{n\min\left\{ h_{n},b_{n}\right\}}(\widehat{\theta}(h_{n},b_{n})-\theta_{0})\rightarrow \textbf{Z}(h_{n},b_{n})$, where $\textbf{Z}(h_{n},b_{n})$ is mean-zero gaussian process with variance given by theorem B.3 as well.

Hence, we can use the delta method for Hadamard directionally differentiable functions \cite{fang2018inference} to find that

\begin{equation*}
    \sqrt{n\min\left\{ h_{n},b_{n}\right\}}\left [ (\phi(\widehat{\theta}(h_{n},b_{n})-\phi(\theta_{0})) \right ](z)\rightarrow \left [\phi_{\theta_{0}}^{'}(\textbf{Z)}  \right ](z)
\end{equation*}

which concludes the proof.
    
\end{proof}

\begin{proof}[Proof of Lemma 6]

We begin by deriving the sharp identified set for $\tau_{c}$. From Assumption 7, we have

\begin{equation*}
    Y_{i,1}(1,0)+Y_{i,1}(0,1)\leq Y_{i,1}(1,1)+Y_{i,1}(0,0)
\end{equation*}

which can be rearranged into

\begin{equation*}
    Y_{i,1}(0,1)-Y_{i,1}(0,0)\leq Y_{i,1}(1,1)-Y_{i,1}(1,0)
\end{equation*}

Taking expectations conditional on $Z_{i}=z_{0}$ yields

\begin{equation*}
   \E \left[Y_{i,1}(0,1)-Y_{i,1}(0,0)|Z_{i}=z_{0}\right]\leq\E \left[Y_{i,1}(1,1)-Y_{i,1}(1,0)|Z_{i}=z_{0}\right]=\tau_{c}
\end{equation*}

Under Assumptions 1 and 2, these inequality equals

\begin{equation*}
   \E \left[Y_{i,1}(0,1)|Z_{i}=z_{0}\right]-Y_{1}^{-}\leq\tau_{c}
\end{equation*}

From assumption 6, it follows that

\begin{equation*}
    y^{min}-Y_{1}^{-}\leq\tau_{c}
\end{equation*} 

which yields the lower bound for $\tau_{c}$. Note, however, that Assumption 7 provides no improvement on the worst-case upper bound.

We now turn to $\tau_{uc}$. Rearranging Assumption 7 and taking expectations conditional on $Z_{i}=z_{0}$ yields

\begin{equation*}
    \tau_{uc}= \E \left[Y_{i,1}(0,1)-Y_{i,1}(0,0)|Z_{i}=z_{0}\right]\leq\E \left[Y_{i,1}(1,1)-Y_{i,1}(1,0)|Z_{i}=z_{0}\right]
\end{equation*}

Under Assumptions 1 and 2, the inequality is equal to 

\begin{equation*}
    \tau_{uc}\leq Y_{1}^{+}-E\left[Y_{i,1}(1,0)|Z_{i}=z_{0}\right]
\end{equation*}

from which it follows, under Assumption 6, that

\begin{equation*}
    \tau_{uc}\leq Y_{1}^{+}-y^{min}
\end{equation*}

which yields the upper bound for $\tau_{uc}$. Note, however, that Assumption 7 provides no improvement on the worst-case lower bound.
    
\end{proof}

\begin{proof}[Proof of Lemma 7]

We begin by deriving the sharp identified set for $\tau_{c}$. From Assumption 7, we have

\begin{equation*}
    Y_{i,1}(1,0)+Y_{i,1}(0,1)\geq Y_{i,1}(1,1)-Y_{i,1}(0,0)
\end{equation*}

which can be rearranged into

\begin{equation*}
    Y_{i,1}(0,1)-Y_{i,1}(0,0)\geq Y_{i,1}(1,1)-Y_{i,1}(1,0)
\end{equation*}

Taking expectations conditional on $Z_{i}=z_{0}$ yields

\begin{equation*}
   \E \left[Y_{i,1}(0,1)-Y_{i,1}(0,0)|Z_{i}=z_{0}\right]\geq\E \left[Y_{i,1}(1,1)-Y_{i,1}(1,0)|Z_{i}=z_{0}\right]=\tau_{c}
\end{equation*}

Under Assumptions 1 and 2, the inequality equals 

\begin{equation*}
   \E \left[Y_{i,1}(0,1)|Z_{i}=z_{0}\right]-Y_{1}^{-}\geq \tau_{c}
\end{equation*}

From Assumption 6, it follows that.

\begin{equation*}
    y^{max}-Y_{1}^{-}\geq \tau_{c}
\end{equation*}

which yields the upper bound for $\tau_{c}$. Note, however, that there is no improvement on the worst-case lower bound.

We now turn to $\tau_{uc}$. Rearranging Assumption 7 and taking expectations conditional on $Z_{i}=z_{0}$ yields

\begin{equation*}
    \tau_{uc}= \E \left[Y_{i,1}(0,1)-Y_{i,1}(0,0)|Z_{i}=z_{0}\right]\geq\E \left[Y_{i,1}(1,1)-Y_{i,1}(1,0)|Z_{i}=z_{0}\right]
\end{equation*}

Under Assumptions 1 and 2, the inequality is equal to

\begin{equation*}
    \tau_{uc}\geq Y_{1}^{+}-E\left[Y_{i,1}(1,0)|Z_{i}=z_{0}\right]
\end{equation*}

and from Assumption 6, it follows that

\begin{equation*}
    \tau_{uc}\geq Y_{1}^{+}-y^{max}
\end{equation*}

which yields the lower bound for $\tau_{uc}$. Note, however, that there is no improvement on the worst-case upper bound.

\end{proof}

\begin{proof}[Proof of Lemma 8]

We begin with $\tau_{c}$. The inequalities characterizing the bounds in Lemmas 5 and 6 remain valid, which yields the result.

For the case of $\tau_{uc}$, the bounds from Lemma 6 remain valid. Assumption 5 combined with 6 and 7, further implies that

\begin{equation*}
    Y_{i,1}(0,1)-Y_{i,1}(0,0)\leq Y_{i,1}(1,1) -(Y_{i,1}(1,0)-c_{1})
\end{equation*}

From the inequality above and Assumptions 1-2, it follows that.

\begin{equation*}
    \tau_{uc}\leq \Delta Y^{+}+c_{1}
\end{equation*}

which concludes the proof.
    
\end{proof}

\begin{proof}[Proof of Lemma 9]

The bounds for $\tau_{c}$ are obtained by combining the bounds from Lemmas 5 and 7.

For the case of $\tau_{uc}$, the bounds from Lemma 7 remain valid. Assumption 5, combined with 6 and 8, further implies that

\begin{equation*}
    Y_{i,1}(0,1)-Y_{i,1}(0,0)\geq Y_{i,1}(1,1)-(Y_{i,0}(1,0)+c_{1})
\end{equation*}

From the inequality above and Assumptions 1-2, it follows that.

\begin{equation*}
    \tau_{uc}\geq \Delta Y^{+}-c_{1}
\end{equation*}

which concludes the proof.
    
\end{proof}

%%%%%%%%%%%%%%%%%%%%%%%%%%%%%%%%%%%%%%%%%%
%%%%%%%%%%%%%% Simulations %%%%%%%%%%%%%%%
\newpage
\section{Simulations} \label{app:Simulations}

\subsection{Bandwidth for Testing if Confounding Effects are Constant} \label{app:Bandwidth_Test}
This appendix details the critical issue of bandwidth selection, the Wald test to test Assumption\ref{assump:conf_time_invar}. As mentioned before, this assumption is vital for the difference-in-discontinuity method, as the non-validity means all estimations would be invalid. The procedure for testing involves considering the following stacked RDs regression model:
\begin{align}
    Y_i = \alpha_i + \sum_{k=1}^K \left[ \beta_{-k} \left(Z_i - z_0 \right) + \theta_{-k}D \left(Z_i - z_0 \right) \right]T_{-k} + \varepsilon_i \tag{\ref{eq:reg_test_conf_const}}
\end{align}
where $T_{-k}$ is a dummy indicating which period the data belongs to, that is, the RD of which period it's running. The null hypothesis for the Wald test is:
\begin{align*}
    H_0: \quad \theta_0 = \theta_{-1} = ... = \theta_{-K}
\end{align*}

One central question arises: how should we determine the optimal bandwidth for Equation \ref{eq:reg_test_conf_const}. In this section, we outline the methodology for bandwidth selection and present simulation results for two scenarios: one with identical Data Generating Processes for the pre- and post-treatment periods and another with different DGPs, both while accounting for confounding factors. The first scenario aligns with the framework presented in Model 1, as detailed in Appendix \ref{app:identical_dgp_confounder}, while the second encompasses both models 1 and 3 (Appendix \ref{app:identical_dgp_confounder}).

To select bandwidths, we employ the \verb|rdrobust| package and estimate bandwidths separately for each RD. Next, we cross-validate the bandwidths and conduct RD regressions for each, evaluating the bias and root-mean-squared errors (RMSE) of the estimated parameters. The chosen bandwidth is the one that minimizes RMSE, along with the largest and smallest bandwidths, and the CCT \citep{calonico2014robust} bandwidth for all observations, regardless of time dummies.

The various bandwidths selected produced similar results. This finding suggests that the choice of bandwidth does not substantially affect the accuracy of the test. It is essential to note that while this result says this, the choice of bandwidth for the DiDC estimation of the treatment effect remains a crucial consideration of the method.

\subsection{Data Generating Processes}

We provide further details on each of the data-generating processes (DGPs) employed in our simulation studies. Both GDPs employ the same simulation setup based on model 3 from \cite{calonico2014robust}, with the key difference being the presence of time-fixed confounding factors at the threshold in one of the settings. We perform 1000 replications for each simulation, and for each replication, we use a sample size of $n=1000$, with
\begin{align*}
    Y_{i,t}&=\mu_{it}(Z_{i})+\varepsilon_{i,t}, \quad Z_{i}\sim(2\mathcal{B}(2,4)-1), \\
    & \quad \varepsilon_{it}\sim N(0,\sigma^{2}_{\varepsilon})
\end{align*}
where $\mathcal{B}(p_1,p_2)$ is a beta distribution with parameters $p_1$ and $p_2$, $\sigma_{\varepsilon}=0.1295$. 

%%%%%%%%%%% Identical Time Fixed Effects %%%%%%%%%%%%
\subsubsection{Model 1: Identical DGPs with time-invariant confounders at the threshold} \label{app:identical_dgp_confounder}

 This model includes a time-invariant confounder, $\textbf{c}$, to represent pre-existing discontinuities at the threshold that affect the outcome $Y_i$. Here, units with $Z_i\geq0$ receive a treatment $\textbf{c}$ and units with $Z_i<0$ do not. At the same time,  units with $Z_i\geq0$ have a different data-generating function than units with $Z_i<0$. The data generating functions $\mu_0(z)$ and $\mu_1(z)$, for periods $t=0$ and $t=1$ respectively are as follows
\begin{align*}
    \mu_{0}(z)&=\begin{cases} 
            0.48+1.27z-0.5\cdot7.18z^{2} \\+ 0.7\cdot20.21z^{3}+1.1\cdot21.54z^{4}+1.5\cdot7.33z^{5} & \text{if } z<0\\ 
            0.52+0.84z-0.1\cdot3z^{2} \\ - 0.3\cdot7.99z^{3}-0.1\cdot9.01z^{4}+3.56z^{5}+\textbf{c} & \text{if } z\geq 0
        \end{cases} \\
    \mu_{1}(z)&=\begin{cases} 
            0.48+1.27z-0.5\cdot7.18z^{2} \\ + 0.7\cdot20.21z^{3}+1.1\cdot21.54z^{4}+1.5\cdot7.33z^{5} & \text{if } z<0\\ 
            0.52+0.84z-0.1\cdot3z^{2} \\  - 0.3\cdot7.99z^{3}-0.1\cdot9.01z^{4}+3.56z^{5}+\textbf{c}+\tau & \text{if } z\geq 0
        \end{cases}
\end{align*}

In addition to Table 1, we present results from Monte Carlo simulations in which the variance changes over time.

Table D.1 shows the results for simulations in a DGP in which the standard deviation of the mean potential outcomes is greater in period 1 than in period 0. We set $\sigma_{\varepsilon}=0.1295$ in period 0 and $\sigma_{\varepsilon}=0.259$ in period 1.

\begin{table}[]
\centering
\caption{Identical Functional Forms with Confounding (More variance in $t=1$)}
\begin{tabular}{c|cccccc}
\hline
             & AV.Bias & RMSE  & Coverage & CIL   & \multicolumn{2}{c}{Bandwidths} \\
             &         &       &          &       & $h_{n}$             & $b_{n}$            \\ \hline
RDD (Robust) & 1.044   & 1.044 & 0        & 0.331 & 0.199          & 0.331         \\
Diff-in-Disc & -0.003  & 0.074 & 0.941    & 0.360 & 0.221          & 0.358         \\
NP-DiD       & 0.001   & 0.029 & 0.956    & 0.314 & -              & -             \\
TWFE         & 0.001   & 0.019 & 1        & 0.403 & -              & -             \\ \hline
\end{tabular}\\\scriptsize \noindent \textit{Note:} Simulations based on 10,000 Monte Carlo experiments with a sample size $n=1000$. RDD is the non-bias corrected RD estimator from CCT (2014), NP-DiD is the outcome regression DiD estimator from Sant'Anna and Zhao (2020), DiDC is the estimator proposed in this paper.  “Av. Bias”, “RMSE”, “Cover”, “CIL’ stand for the average simulated bias, simulated root mean-squared errors, 95\% coverage probability and 95\% confidence interval length, respectively. "Bandwidths" $h_n$ and $b_n$ report the plug-in bandwidths for point and bias estimation, respectively.
\end{table}

In Table D.2, we invert and set $\sigma_{\varepsilon}=0.259$ in period 0 and $\sigma_{\varepsilon}=0.1295$ in period 1. Overall, the differences in variance over time do not affect the qualitative takeaways from the discussion in Section 6.

\begin{table}[]
\centering
\caption{Identical Functional Forms with Confounding (More variance in $t=0$)}
\begin{tabular}{c|cccccc}
\hline
             & AV.Bias & RMSE  & Coverage & CIL   & \multicolumn{2}{c}{Bandwidths} \\
             &         &       &          &       & $h_{n}$             & $b_{n}$            \\ \hline
RDD (Robust) & 1.043   & 1.043 & 0        & 0.175 & 0.169          & 0.318         \\
Diff-in-Disc & -0.004  & 0.075 & 0.929    & 0.360 & 0.223          & 0.360         \\
NP-DiD       & 0.001   & 0.029 & 0.950    & 0.343 & -              & -             \\
TWFE         & 0.001   & 0.018 & 1        & 0.404 & -              & -             \\ \hline
\end{tabular}\\\scriptsize \noindent \textit{Note:} Simulations based on 10,000 Monte Carlo experiments with a sample size $n=1000$. RDD is the non-bias corrected RD estimator from CCT (2014), NP-DiD is the outcome regression DiD estimator from Sant'Anna and Zhao (2020), DiDC is the estimator proposed in this paper.  “Av. Bias”, “RMSE”, “Cover”, “CIL’ stand for the average simulated bias, simulated root mean-squared errors, 95\% coverage probability and 95\% confidence interval length, respectively. "Bandwidths" $h_n$ and $b_n$ report the plug-in bandwidths for point and bias estimation, respectively.
\end{table}

%%%%%%%%%%% Identical No Time Fixed Effects %%%%%%%%%%%%
\subsubsection{Model 2: Identical DGPs}

 This model follows model 3 from \cite{calonico2014robust} exactly for both periods, the only difference being that in period $t=1$ there is an effect of the treatment $\tau$ for units with $Z_i \geq 0$. No other discontinuities that could affect the outcome exist at the threshold $z_0=0$. The data generating functions $\mu_0(z)$ and $\mu_1(z)$ are as follows
\begin{align*}
    \mu_{0}(z)&=\begin{cases} 
            0.48+1.27z-0.5\cdot7.18z^{2} \\ + 0.7\cdot20.21z^{3}+1.1\cdot21.54z^{4}+1.5\cdot7.33z^{5} & \text{if } z<0\\ 
            0.52+0.84z-0.1\cdot3z^{2} \\  - 0.3\cdot7.99z^{3}-0.1\cdot9.01z^{4}+3.56z^{5} & \text{if } z\geq 0
        \end{cases} \\
    \mu_{1}(z)&=\begin{cases} 
            0.48+1.27z-0.5\cdot7.18z^{2} \\  + 0.7\cdot20.21z^{3}+1.1\cdot21.54z^{4}+1.5\cdot7.33z^{5} & \text{if } z<0\\ 
            0.52+0.84z-0.1\cdot3z^{2} \\ - 0.3\cdot7.99z^{3}-0.1\cdot9.01z^{4}+3.56z^{5}+\tau & \text{if } z\geq 0
        \end{cases}
\end{align*}

In addition to Table 2, we present the results for simulations under a DGP in which the functional forms are as displayed above, but the variance changes over time, repeating the exercise from Appendix D.2.1.

\begin{table}[h]
\centering
\caption{Identical Functional Forms without Confounding (More variance in $t=1$)}
\begin{tabular}{c|cccccc}
\hline
             & AV.Bias & RMSE  & Coverage & CIL   & \multicolumn{2}{c}{Bandwidths} \\
             &         &       &          &       & $h_{n}$             & $b_{n}$            \\ \hline
RDD (Robust) & 0.044   & 0.079 & 0.922    & 0.331 & 0.199          & 0.344         \\
Diff-in-Disc & -0.003  & 0.074 & 0.941    & 0.360 & 0.221          & 0.358         \\
NP-DiD       & 0.001   & 0.029 & 0.956    & 0.343 & -              & -             \\
TWFE         & 0.001   & 0.019 & 1        & 0.403 & -              & -             \\ \hline
\end{tabular}\\\scriptsize \noindent \textit{Note:} Simulations based on 10,000 Monte Carlo experiments with a sample size $n=1000$. RDD is the non-bias corrected RD estimator from CCT (2014), NP-DiD is the outcome regression DiD estimator from Sant'Anna and Zhao (2020), DiDC is the estimator proposed in this paper.  “Av. Bias”, “RMSE”, “Cover”, “CIL’ stand for the average simulated bias, simulated root mean-squared errors, 95\% coverage probability and 95\% confidence interval length, respectively. "Bandwidths" $h_n$ and $b_n$ report the plug-in bandwidths for point and bias estimation, respectively.
\end{table}

The results are displayed in Tables D.3 and D.4. Once again, the changes in standard deviations of potential outcomes over time do not affect the qualitative takeaways from Section 6.

%%%%%%%%%%% Different Time Fixed Effects %%%%%%%%%%%%
\subsubsection{Model 3: Different DGPs with time-invariant confounders at the threshold} \label{app:different_dgp_confounder}

This model follows model 3 from \cite{calonico2014robust} and includes a time-invariant confounder, $\textbf{c}$, to represent pre-existing discontinuities at the threshold that affect the outcome $Y_i$ at $t=0$, and a linear model for time $t=1$. The data generating functions $\mu_0(z)$ and $\mu_1(z)$, for periods $t=0$ and $t=1$ respectively are as follows
\begin{align*}
    \mu_{0}(z)&=\begin{cases} 
            0.48+1.27z-0.5\cdot7.18z^{2} \\ + 0.7\cdot20.21z^{3}+1.1\cdot21.54z^{4}+1.5\cdot7.33z^{5} & \text{if } z<0\\ 
            0.52+0.84z-0.1\cdot3z^{2} \\ - 0.3\cdot7.99z^{3}-0.1\cdot9.01z^{4}+3.56z^{5}+\textbf{c} & \text{if } z\geq 0
        \end{cases} \\
    \mu_{1}(z)&=\begin{cases} 
            0.48+1.4z & \text{if } z<0\\ 
            0.52+0.1z +\textbf{c}+\tau & \text{if } z\geq 0
        \end{cases}
\end{align*}

\begin{table}[h!]
\centering
\caption{Identical Functional Forms without Confounding (More variance in $t=0$)}
\begin{tabular}{c|cccccc}
\hline
             & AV.Bias & RMSE  & Coverage & CIL   & \multicolumn{2}{c}{Bandwidths} \\
             &         &       &          &       & $h_{n}$             & $b_{n}$            \\ \hline
RDD (Robust) & 0.043   & 0.051 & 0.861    & 0.175 & 0.169          & 0.318         \\
Diff-in-Disc & -0.004  & 0.075 & 0.929    & 0.360 & 0.223          & 0.360         \\
NP-DiD       & 0.001   & 0.029 & 0.950    & 0.343 & -              & -             \\
TWFE         & 0.001   & 0.018 & 1        & 0.403 & -              & -             \\ \hline
\end{tabular}\\\scriptsize \noindent \textit{Note:} Simulations based on 10,000 Monte Carlo experiments with a sample size $n=1000$. RDD is the non-bias corrected RD estimator from CCT (2014), NP-DiD is the outcome regression DiD estimator from Sant'Anna and Zhao (2020), DiDC is the estimator proposed in this paper.  “Av. Bias”, “RMSE”, “Cover”, “CIL’ stand for the average simulated bias, simulated root mean-squared errors, 95\% coverage probability and 95\% confidence interval length, respectively. "Bandwidths" $h_n$ and $b_n$ report the plug-in bandwidths for point and bias estimation, respectively.
\end{table}

Additional simulations results from DGPs with time-varying standard deviations are displayd in Tables D.5 and D.6.

\begin{table}[h]
\centering
\caption{Time-Varying Functional Forms with Confounding (more variance in $t=1$)}
\begin{tabular}{c|cccccc}
\hline
             & AV.Bias & RMSE  & Coverage & CIL   & \multicolumn{2}{c}{Bandwidths} \\
             &         &       &          &       & $h_{n}$             & $b_{n}$            \\ \hline
RDD (Robust) & 1.038   & 1.038 & 0        & 0.321 & 0.221          & 0.360         \\
Diff-in-Disc & -0.010  & 0.077 & 0.935    & 0.370 & 0.202          & 0.346         \\
NP-DiD       & 1.767   & 1.767 & 0        & 0.317 & -              & -             \\
TWFE         & -1.568  & 1.568 & 0        & 0.281 & -              & -             \\ \hline
\end{tabular}\\\scriptsize \noindent \textit{Note:} Simulations based on 10,000 Monte Carlo experiments with a sample size $n=1000$. RDD is the non-bias corrected RD estimator from CCT (2014), NP-DiD is the outcome regression DiD estimator from Sant'Anna and Zhao (2020), DiDC is the estimator proposed in this paper.  “Av. Bias”, “RMSE”, “Cover”, “CIL’ stand for the average simulated bias, simulated root mean-squared errors, 95\% coverage probability and 95\% confidence interval length, respectively. "Bandwidths" $h_n$ and $b_n$ report the plug-in bandwidths for point and bias estimation, respectively.
\end{table}

\begin{table}[h]
\centering
\caption{Time-Varying Functional Forms with Confounding (more variance in $t=0$)}
\begin{tabular}{c|cccccc}
\hline
             & AV.Bias & RMSE  & Coverage & CIL   & \multicolumn{2}{c}{Bandwidths} \\
             &         &       &          &       & $h_{n}$             & $b_{n}$            \\ \hline
RDD (Robust) & 1.039   & 1.039 & 0        & 0.161 & 0.221          & 0.360         \\
Diff-in-Disc & -0.011  & 0.077 & 0.935    & 0.371 & 0.202          & 0.346         \\
NP-DiD       & 1.767   & 1.767 & 0        & 0.317 & -              & -             \\
TWFE         & -1.567  & 1.567 & 0        & 0.282 & -              & -             \\ \hline
\end{tabular}\\\scriptsize \noindent \textit{Note:} Simulations based on 10,000 Monte Carlo experiments with a sample size $n=1000$. RDD is the non-bias corrected RD estimator from CCT (2014), NP-DiD is the outcome regression DiD estimator from Sant'Anna and Zhao (2020), DiDC is the estimator proposed in this paper.  “Av. Bias”, “RMSE”, “Cover”, “CIL’ stand for the average simulated bias, simulated root mean-squared errors, 95\% coverage probability and 95\% confidence interval length, respectively. "Bandwidths" $h_n$ and $b_n$ report the plug-in bandwidths for point and bias estimation, respectively.
\end{table}

%%%%%%%%%%% Different No Time Fixed Effects %%%%%%%%%%%%
\subsubsection{Model 4: Different DGPs}

Here the model for $t=0$ follows precisely model 3 from \cite{calonico2014robust} and the model for $t=1$ is linear. No discontinuities that could affect the outcome exist at the threshold $z_0=0$ other than the treatment effect $\tau$ introduced for $t=1$. The data generating functions $\mu_0(z)$ and $\mu_1(z)$ are as follows
\begin{align*}
    \mu_{0}(z)&=\begin{cases} 
            0.48+1.27z-0.5\cdot7.18z^{2} \\ + 0.7\cdot20.21z^{3}+1.1\cdot21.54z^{4}+1.5\cdot7.33z^{5} & \text{if } z<0\\ 
            0.52+0.84z-0.1\cdot3z^{2} \\  - 0.3\cdot7.99z^{3}-0.1\cdot9.01z^{4}+3.56z^{5}+\textbf{c} & \text{if } z\geq 0
        \end{cases} \\
    \mu_{1}(z)&=\begin{cases} 
            0.48+1.4z & \text{if } z<0\\ 
            0.52+0.1z +\tau & \text{if } z\geq 0
        \end{cases}
\end{align*}

Additional simulation results from DGPs with time-varying standard deviations are displayed in Tables D.7 and D.8.

\begin{table}[]
\centering
\caption{Time-Varying Functional Forms without Confounding (more variance in $t=0)$}
\begin{tabular}{c|cccccc}
\hline
             & AV.Bias & RMSE  & Coverage & CIL   & \multicolumn{2}{c}{Bandwidths} \\
             &         &       &          &       & $h_{n}$             & $b_{n}$            \\ \hline
RDD (Robust) & 0.039   & 0.049 & 0.922    & 0.161 & 0.221          & 0.360         \\
Diff-in-Disc & -0.011  & 0.077 & 0.935    & 0.371 & 0.202          & 0.346         \\
NP-DiD       & 1.767   & 1.767 & 0        & 0.317 & -              & -             \\
TWFE         & -1.567  & 1.567 & 0        & 0.282 & -              & -             \\ \hline
\end{tabular}\\\scriptsize \noindent \textit{Note:} Simulations based on 10,000 Monte Carlo experiments with a sample size $n=1000$. RDD is the non-bias corrected RD estimator from CCT (2014), NP-DiD is the outcome regression DiD estimator from Sant'Anna and Zhao (2020), DiDC is the estimator proposed in this paper.  “Av. Bias”, “RMSE”, “Cover”, “CIL’ stand for the average simulated bias, simulated root mean-squared errors, 95\% coverage probability and 95\% confidence interval length, respectively. "Bandwidths" $h_n$ and $b_n$ report the plug-in bandwidths for point and bias estimation, respectively.
\end{table}

\begin{table}[h]
\centering
\caption{Time-Varying Functional Forms without Confounding (more variance in $t=1)$}
\begin{tabular}{c|cccccc}
\hline
             & AV.Bias & RMSE  & Coverage & CIL   & \multicolumn{2}{c}{Bandwidths} \\
             &         &       &          &       & $h_{n}$             & $b_{n}$            \\ \hline
RDD (Robust) & 0.038   & 0.078 & 0.917    & 0.321 & 0.221          & 0.360         \\
Diff-in-Disc & -0.010  & 0.077 & 0.935    & 0.370 & 0.202          & 0.346         \\
NP-DiD       & 1.767   & 1.767 & 0        & 0.317 & -              & -             \\
TWFE         & -1.568  & 1.568 & 0        & 0.281 & -              & -             \\ \hline
\end{tabular}\\\scriptsize \noindent \textit{Note:} Simulations based on 10,000 Monte Carlo experiments with a sample size $n=1000$. RDD is the non-bias corrected RD estimator from CCT (2014), NP-DiD is the outcome regression DiD estimator from Sant'Anna and Zhao (2020), DiDC is the estimator proposed in this paper.  “Av. Bias”, “RMSE”, “Cover”, “CIL’ stand for the average simulated bias, simulated root mean-squared errors, 95\% coverage probability and 95\% confidence interval length, respectively. "Bandwidths" $h_n$ and $b_n$ report the plug-in bandwidths for point and bias estimation, respectively.
\end{table}

%%%%%%%%%%%%%%%%%%%%%%%%%%%%%%%%%%%%%%%%%%
%%%%%%%%%%%%%% Empirical %%%%%%%%%%%%%%%
\newpage
\section{Empirical illustration} \label{app:Empirical}

We replicate the analysis conducted by \cite{grembi2016fiscal} once more, this time using the bandwidth selection method proposed by \cite{ludwig2007does}  as she did. The use of these alternative bandwidths does not significantly alter the estimated treatment effects; they remain similar in magnitude.

\begin{table}[h]\label{tab_Grembi_LMbw}
    %\justify
    \centering
    \caption{Effects of relaxing a fiscal rule - Ludwig \& Miller (2007) Bandwidths}
    \begin{tabular}{cccc}
        \hline
        Estimators   & Deficit & Fiscal Gap & Taxes \\ \hline
        Diif-in-Disc (Grembi et al, 2016) &
          \begin{tabular}[c]{@{}c@{}}9.454\\ (4.343)\end{tabular} &
          \begin{tabular}[c]{@{}c@{}}48.469\\ (23.315)\end{tabular} &
          \begin{tabular}[c]{@{}c@{}}-34.748\\ (20.166)\end{tabular} \\
        Bandwidth    & 1498     & 833        & 684   \\
        Observations & 5858    & 3438       & 1536  \\ \hline
        Diff-in-Disc (RDD of $\Delta$) &
          \begin{tabular}[c]{@{}c@{}}-18.501\\ (22.318)\end{tabular} &
          \begin{tabular}[c]{@{}c@{}}-10.247\\ (21.724)\end{tabular} &
          \begin{tabular}[c]{@{}c@{}}-5.288\\ (4.803)\end{tabular} \\
        Bandwidth    & 520     & 591        & 591   \\
        Observations & 392    & 433       & 434  \\ \hline
        Diff-in-Disc ($\Delta$ of RDDs) &
          \begin{tabular}[c]{@{}c@{}}-22.142\\ (25.136)\end{tabular} &
          \begin{tabular}[c]{@{}c@{}}-0/921\\ (46.657)\end{tabular} &
          \begin{tabular}[c]{@{}c@{}}-9.009\\ (23.824)\end{tabular} \\
        Bandwidth    & 560     & 475       & 334   \\
        Observations & 415    & 362       & 264  \\ \hline
    \end{tabular}
\end{table}

\subsection{Partial identification}
\subsubsection{Results for Bounded Variation}

\begin{table}[h]
\centering
\caption{Bounds on the Treatment Effect under Bounded Variation (\textit{Outcome: Fiscal Gap})}
\label{tab:partial_id_fiscal}
\resizebox{\textwidth}{!}{%
\begin{tabular}{c|ccccccccccc}
\toprule
$c_{2} \backslash c_{1}$ & 0.0 & 0.5 & 1.0 & 1.5 & 2.0 & 2.5 & 3.0 & 3.5 & 4.0 & 4.5 & 5.0  \\
\midrule
0.0 & 19.15 , -10.24 & 18.65 , -10.24 & 18.15 , -10.24 & 17.65 , -10.24 & 17.15 , -10.24 & 16.65 , -10.24 & 16.15 , -10.24 & 15.65 , -10.24 & 15.15 , -10.24 & 14.65 , -10.24 & 14.15 , -10.24 \\
0.5 & 19.15 , -9.74  & 18.65 , -9.74  & 18.15 , -9.74  & 17.65 , -9.74  & 17.15 , -9.74  & 16.65 , -9.74  & 16.15 , -9.74  & 15.65 , -9.74  & 15.15 , -9.74  & 14.65 , -9.74  & 14.15 , -9.74  \\
1.0 & 19.15 , -9.24  & 18.65 , -9.24  & 18.15 , -9.24  & 17.65 , -9.24  & 17.15 , -9.24  & 16.65 , -9.24  & 16.15 , -9.24  & 15.65 , -9.24  & 15.15 , -9.24  & 14.65 , -9.24  & 14.15 , -9.24  \\
1.5 & 19.15 , -8.74  & 18.65 , -8.74  & 18.15 , -8.74  & 17.65 , -8.74  & 17.15 , -8.74  & 16.65 , -8.74  & 16.15 , -8.74  & 15.65 , -8.74  & 15.15 , -8.74  & 14.65 , -8.74  & 14.15 , -8.74  \\
2.0 & 19.15 , -8.24  & 18.65 , -8.24  & 18.15 , -8.24  & 17.65 , -8.24  & 17.15 , -8.24  & 16.65 , -8.24  & 16.15 , -8.24  & 15.65 , -8.24  & 15.15 , -8.24  & 14.65 , -8.24  & 14.15 , -8.24  \\
2.5 & 19.15 , -7.74  & 18.65 , -7.74  & 18.15 , -7.74  & 17.65 , -7.74  & 17.15 , -7.74  & 16.65 , -7.74  & 16.15 , -7.74  & 15.65 , -7.74  & 15.15 , -7.74  & 14.65 , -7.74  & 14.15 , -7.74  \\
3.0 & 19.15 , -7.24  & 18.65 , -7.24  & 18.15 , -7.24  & 17.65 , -7.24  & 17.15 , -7.24  & 16.65 , -7.24  & 16.15 , -7.24  & 15.65 , -7.24  & 15.15 , -7.24  & 14.65 , -7.24  & 14.15 , -7.24  \\
3.5 & 19.15 , -6.74  & 18.65 , -6.74  & 18.15 , -6.74  & 17.65 , -6.74  & 17.15 , -6.74  & 16.65 , -6.74  & 16.15 , -6.74  & 15.65 , -6.74  & 15.15 , -6.74  & 14.65 , -6.74  & 14.15 , -6.74  \\
4.0 & 19.15 , -6.24  & 18.65 , -6.24  & 18.15 , -6.24  & 17.65 , -6.24  & 17.15 , -6.24  & 16.65 , -6.24  & 16.15 , -6.24  & 15.65 , -6.24  & 15.15 , -6.24  & 14.65 , -6.24  & 14.15 , -6.24  \\
4.5 & 19.15 , -5.74  & 18.65 , -5.74  & 18.15 , -5.74  & 17.65 , -5.74  & 17.15 , -5.74  & 16.65 , -5.74  & 16.15 , -5.74  & 15.65 , -5.74  & 15.15 , -5.74  & 14.65 , -5.74  & 14.15 , -5.74  \\
5.0 & 19.15 , -5.24  & 18.65 , -5.24  & 18.15 , -5.24  & 17.65 , -5.24  & 17.15 , -5.24  & 16.65 , -5.24  & 16.15 , -5.24  & 15.65 , -5.24  & 15.15 , -5.24  & 14.65 , -5.24  & 14.15 , -5.24  \\
\bottomrule
\end{tabular}%
}
\end{table}

\begin{table}[h]
\centering
\caption{Bounds on the Treatment Effect under Bounded Variation (\textit{Outcome: Deficit})}
\label{tab:partial_id_deficit}
\resizebox{\textwidth}{!}{%
\begin{tabular}{c|ccccccccccc}
\toprule
$c_{2} \backslash c_{1}$ & 0.0 & 0.5 & 1.0 & 1.5 & 2.0 & 2.5 & 3.0 & 3.5 & 4.0 & 4.5 & 5.0  \\
\midrule
0.0 & -19.60 , -27.86 & -19.60 , -27.36 & -19.60 , -26.86 & -19.60 , -26.36 & -19.60 , -25.86 & -19.60 , -25.36 & -19.60 , -24.86 & -19.60 , -24.36 & -19.60 , -23.86 & -19.60 , -23.36 & -19.60 , -22.86 \\
0.5 & -20.10 , -27.86 & -20.10 , -27.36 & -20.10 , -26.86 & -20.10 , -26.36 & -20.10 , -25.86 & -20.10 , -25.36 & -20.10 , -24.86 & -20.10 , -24.36 & -20.10 , -23.86 & -20.10 , -23.36 & -20.10 , -22.86 \\
1.0 & -20.60 , -27.86 & -20.60 , -27.36 & -20.60 , -26.86 & -20.60 , -26.36 & -20.60 , -25.86 & -20.60 , -25.36 & -20.60 , -24.86 & -20.60 , -24.36 & -20.60 , -23.86 & -20.60 , -23.36 & -20.60 , -22.86 \\
1.5 & -21.10 , -27.86 & -21.10 , -27.36 & -21.10 , -26.86 & -21.10 , -26.36 & -21.10 , -25.86 & -21.10 , -25.36 & -21.10 , -24.86 & -21.10 , -24.36 & -21.10 , -23.86 & -21.10 , -23.36 & -21.10 , -22.86 \\
2.0 & -21.60 , -27.86 & -21.60 , -27.36 & -21.60 , -26.86 & -21.60 , -26.36 & -21.60 , -25.86 & -21.60 , -25.36 & -21.60 , -24.86 & -21.60 , -24.36 & -21.60 , -23.86 & -21.60 , -23.36 & -21.60 , -22.86 \\
2.5 & -22.10 , -27.86 & -22.10 , -27.36 & -22.10 , -26.86 & -22.10 , -26.36 & -22.10 , -25.86 & -22.10 , -25.36 & -22.10 , -24.86 & -22.10 , -24.36 & -22.10 , -23.86 & -22.10 , -23.36 & -22.10 , -22.86 \\
3.0 & -22.60 , -27.86 & -22.60 , -27.36 & -22.60 , -26.86 & -22.60 , -26.36 & -22.60 , -25.86 & -22.60 , -25.36 & -22.60 , -24.86 & -22.60 , -24.36 & -22.60 , -23.86 & -22.60 , -23.36 & -22.60 , -22.86 \\
3.5 & -23.10 , -27.86 & -23.10 , -27.36 & -23.10 , -26.86 & -23.10 , -26.36 & -23.10 , -25.86 & -23.10 , -25.36 & -23.10 , -24.86 & -23.10 , -24.36 & -23.10 , -23.86 & -23.10 , -23.36 & -23.10 , -22.86 \\
4.0 & -23.60 , -27.86 & -23.60 , -27.36 & -23.60 , -26.86 & -23.60 , -26.36 & -23.60 , -25.86 & -23.60 , -25.36 & -23.60 , -24.86 & -23.60 , -24.36 & -23.60 , -23.86 & -23.60 , -23.36 & -23.60 , -22.86 \\
4.5 & -24.10 , -27.86 & -24.10 , -27.36 & -24.10 , -26.86 & -24.10 , -26.36 & -24.10 , -25.86 & -24.10 , -25.36 & -24.10 , -24.86 & -24.10 , -24.36 & -24.10 , -23.86 & -24.10 , -23.36 & -24.10 , -22.86 \\
5.0 & -24.60 , -27.86 & -24.60 , -27.36 & -24.60 , -26.86 & -24.60 , -26.36 & -24.60 , -25.86 & -24.60 , -25.36 & -24.60 , -24.86 & -24.60 , -24.36 & -24.60 , -23.86 & -24.60 , -23.36 & -24.60 , -22.86 \\
\bottomrule
\end{tabular}%
}
\end{table}

\begin{table}[h]
\centering
\caption{Bounds for $\tau_c$ — Outcome: Taxes. $y_{min} = 0$, $y_{max} = 1203.185$}
\label{tab:partial_comb_taxes}
\resizebox{\textwidth}{!}{%
\begin{tabular}{c|ccccccccccc}
\toprule
$c_{2}\backslash c_{1}$ & 0.0 & 0.5 & 1.0 & 1.5 & 2.0 & 2.5 & 3.0 & 3.5 & 4.0 & 4.5 & 5.0 \\
\midrule
0.0 & 6.33 , -6.57 & 5.83 , -6.57 & 5.33 , -6.57 & 4.83 , -6.57 & 4.33 , -6.57 & 3.83 , -6.57 & 3.33 , -6.57 & 2.83 , -6.57 & 2.33 , -6.57 & 1.83 , -6.57 & 1.33 , -6.57 \\
0.5 & 6.33 , -6.07 & 5.83 , -6.07 & 5.33 , -6.07 & 4.83 , -6.07 & 4.33 , -6.07 & 3.83 , -6.07 & 3.33 , -6.07 & 2.83 , -6.07 & 2.33 , -6.07 & 1.83 , -6.07 & 1.33 , -6.07 \\
1.0 & 6.33 , -5.57 & 5.83 , -5.57 & 5.33 , -5.57 & 4.83 , -5.57 & 4.33 , -5.57 & 3.83 , -5.57 & 3.33 , -5.57 & 2.83 , -5.57 & 2.33 , -5.57 & 1.83 , -5.57 & 1.33 , -5.57 \\
1.5 & 6.33 , -5.07 & 5.83 , -5.07 & 5.33 , -5.07 & 4.83 , -5.07 & 4.33 , -5.07 & 3.83 , -5.07 & 3.33 , -5.07 & 2.83 , -5.07 & 2.33 , -5.07 & 1.83 , -5.07 & 1.33 , -5.07 \\
2.0 & 6.33 , -4.57 & 5.83 , -4.57 & 5.33 , -4.57 & 4.83 , -4.57 & 4.33 , -4.57 & 3.83 , -4.57 & 3.33 , -4.57 & 2.83 , -4.57 & 2.33 , -4.57 & 1.83 , -4.57 & 1.33 , -4.57 \\
2.5 & 6.33 , -4.07 & 5.83 , -4.07 & 5.33 , -4.07 & 4.83 , -4.07 & 4.33 , -4.07 & 3.83 , -4.07 & 3.33 , -4.07 & 2.83 , -4.07 & 2.33 , -4.07 & 1.83 , -4.07 & 1.33 , -4.07 \\
3.0 & 6.33 , -3.57 & 5.83 , -3.57 & 5.33 , -3.57 & 4.83 , -3.57 & 4.33 , -3.57 & 3.83 , -3.57 & 3.33 , -3.57 & 2.83 , -3.57 & 2.33 , -3.57 & 1.83 , -3.57 & 1.33 , -3.57 \\
3.5 & 6.33 , -3.07 & 5.83 , -3.07 & 5.33 , -3.07 & 4.83 , -3.07 & 4.33 , -3.07 & 3.83 , -3.07 & 3.33 , -3.07 & 2.83 , -3.07 & 2.33 , -3.07 & 1.83 , -3.07 & 1.33 , -3.07 \\
4.0 & 6.33 , -2.57 & 5.83 , -2.57 & 5.33 , -2.57 & 4.83 , -2.57 & 4.33 , -2.57 & 3.83 , -2.57 & 3.33 , -2.57 & 2.83 , -2.57 & 2.33 , -2.57 & 1.83 , -2.57 & 1.33 , -2.57 \\
4.5 & 6.33 , -2.07 & 5.83 , -2.07 & 5.33 , -2.07 & 4.83 , -2.07 & 4.33 , -2.07 & 3.83 , -2.07 & 3.33 , -2.07 & 2.83 , -2.07 & 2.33 , -2.07 & 1.83 , -2.07 & 1.33 , -2.07 \\
5.0 & 6.33 , -1.57 & 5.83 , -1.57 & 5.33 , -1.57 & 4.83 , -1.57 & 4.33 , -1.57 & 3.83 , -1.57 & 3.33 , -1.57 & 2.83 , -1.57 & 2.33 , -1.57 & 1.83 , -1.57 & 1.33 , -1.57 \\
\bottomrule
\end{tabular}%
}
\end{table}

\begin{table}[h]
\centering
\caption{Bounds for $\tau_{uc}$ — Outcome: Taxes. $y_{min} = 0$, $y_{max} = 1203.185$}
\label{tab:partial_comb_tauuc_taxes}
\resizebox{\textwidth}{!}{%
\begin{tabular}{c|ccccccccccc}
\toprule
$c_{2}\backslash c_{1}$ & 0.0 & 0.5 & 1.0 & 1.5 & 2.0 & 2.5 & 3.0 & 3.5 & 4.0 & 4.5 & 5.0 \\
\midrule
0.0 & -1203.19 , 6.33 & -1203.19 , 6.83 & -1203.19 , 7.33 & -1203.19 , 7.83 & -1203.19 , 8.33 & -1203.19 , 8.83 & -1203.19 , 9.33 & -1203.19 , 9.83 & -1203.19 , 10.33 & -1203.19 , 10.83 & -1203.19 , 11.33 \\
0.5 & -1203.19 , 6.33 & -1203.19 , 6.83 & -1203.19 , 7.33 & -1203.19 , 7.83 & -1203.19 , 8.33 & -1203.19 , 8.83 & -1203.19 , 9.33 & -1203.19 , 9.83 & -1203.19 , 10.33 & -1203.19 , 10.83 & -1203.19 , 11.33 \\
1.0 & -1203.19 , 6.33 & -1203.19 , 6.83 & -1203.19 , 7.33 & -1203.19 , 7.83 & -1203.19 , 8.33 & -1203.19 , 8.83 & -1203.19 , 9.33 & -1203.19 , 9.83 & -1203.19 , 10.33 & -1203.19 , 10.83 & -1203.19 , 11.33 \\
1.5 & -1203.19 , 6.33 & -1203.19 , 6.83 & -1203.19 , 7.33 & -1203.19 , 7.83 & -1203.19 , 8.33 & -1203.19 , 8.83 & -1203.19 , 9.33 & -1203.19 , 9.83 & -1203.19 , 10.33 & -1203.19 , 10.83 & -1203.19 , 11.33 \\
2.0 & -1203.19 , 6.33 & -1203.19 , 6.83 & -1203.19 , 7.33 & -1203.19 , 7.83 & -1203.19 , 8.33 & -1203.19 , 8.83 & -1203.19 , 9.33 & -1203.19 , 9.83 & -1203.19 , 10.33 & -1203.19 , 10.83 & -1203.19 , 11.33 \\
2.5 & -1203.19 , 6.33 & -1203.19 , 6.83 & -1203.19 , 7.33 & -1203.19 , 7.83 & -1203.19 , 8.33 & -1203.19 , 8.83 & -1203.19 , 9.33 & -1203.19 , 9.83 & -1203.19 , 10.33 & -1203.19 , 10.83 & -1203.19 , 11.33 \\
3.0 & -1203.19 , 6.33 & -1203.19 , 6.83 & -1203.19 , 7.33 & -1203.19 , 7.83 & -1203.19 , 8.33 & -1203.19 , 8.83 & -1203.19 , 9.33 & -1203.19 , 9.83 & -1203.19 , 10.33 & -1203.19 , 10.83 & -1203.19 , 11.33 \\
3.5 & -1203.19 , 6.33 & -1203.19 , 6.83 & -1203.19 , 7.33 & -1203.19 , 7.83 & -1203.19 , 8.33 & -1203.19 , 8.83 & -1203.19 , 9.33 & -1203.19 , 9.83 & -1203.19 , 10.33 & -1203.19 , 10.83 & -1203.19 , 11.33 \\
4.0 & -1203.19 , 6.33 & -1203.19 , 6.83 & -1203.19 , 7.33 & -1203.19 , 7.83 & -1203.19 , 8.33 & -1203.19 , 8.83 & -1203.19 , 9.33 & -1203.19 , 9.83 & -1203.19 , 10.33 & -1203.19 , 10.83 & -1203.19 , 11.33 \\
4.5 & -1203.19 , 6.33 & -1203.19 , 6.83 & -1203.19 , 7.33 & -1203.19 , 7.83 & -1203.19 , 8.33 & -1203.19 , 8.83 & -1203.19 , 9.33 & -1203.19 , 9.83 & -1203.19 , 10.33 & -1203.19 , 10.83 & -1203.19 , 11.33 \\
5.0 & -1203.19 , 6.33 & -1203.19 , 6.83 & -1203.19 , 7.33 & -1203.19 , 7.83 & -1203.19 , 8.33 & -1203.19 , 8.83 & -1203.19 , 9.33 & -1203.19 , 9.83 & -1203.19 , 10.33 & -1203.19 , 10.83 & -1203.19 , 11.33 \\
\bottomrule
\end{tabular}%
}
\end{table}

\begin{table}[h]
\centering
\caption{Bounds for $\tau_c$ — Outcome: Fiscal balance. $y_{min} = -591.196$, $y_{max} = 1766.979$}
\label{tab:partial_comb_fiscal}
\resizebox{\textwidth}{!}{%
\begin{tabular}{c|ccccccccccc}
\toprule
$c_{2}\backslash c_{1}$ & 0.0 & 0.5 & 1.0 & 1.5 & 2.0 & 2.5 & 3.0 & 3.5 & 4.0 & 4.5 & 5.0 \\
\midrule
0.0 & 19.15 , -10.24 & 18.65 , -10.24 & 18.15 , -10.24 & 17.65 , -10.24 & 17.15 , -10.24 & 16.65 , -10.24 & 16.15 , -10.24 & 15.65 , -10.24 & 15.15 , -10.24 & 14.65 , -10.24 & 14.15 , -10.24 \\
0.5 & 19.15 , -9.74  & 18.65 , -9.74  & 18.15 , -9.74  & 17.65 , -9.74  & 17.15 , -9.74  & 16.65 , -9.74  & 16.15 , -9.74  & 15.65 , -9.74  & 15.15 , -9.74  & 14.65 , -9.74  & 14.15 , -9.74  \\
1.0 & 19.15 , -9.24  & 18.65 , -9.24  & 18.15 , -9.24  & 17.65 , -9.24  & 17.15 , -9.24  & 16.65 , -9.24  & 16.15 , -9.24  & 15.65 , -9.24  & 15.15 , -9.24  & 14.65 , -9.24  & 14.15 , -9.24  \\
1.5 & 19.15 , -8.74  & 18.65 , -8.74  & 18.15 , -8.74  & 17.65 , -8.74  & 17.15 , -8.74  & 16.65 , -8.74  & 16.15 , -8.74  & 15.65 , -8.74  & 15.15 , -8.74  & 14.65 , -8.74  & 14.15 , -8.74  \\
2.0 & 19.15 , -8.24  & 18.65 , -8.24  & 18.15 , -8.24  & 17.65 , -8.24  & 17.15 , -8.24  & 16.65 , -8.24  & 16.15 , -8.24  & 15.65 , -8.24  & 15.15 , -8.24  & 14.65 , -8.24  & 14.15 , -8.24  \\
2.5 & 19.15 , -7.74  & 18.65 , -7.74  & 18.15 , -7.74  & 17.65 , -7.74  & 17.15 , -7.74  & 16.65 , -7.74  & 16.15 , -7.74  & 15.65 , -7.74  & 15.15 , -7.74  & 14.65 , -7.74  & 14.15 , -7.74  \\
3.0 & 19.15 , -7.24  & 18.65 , -7.24  & 18.15 , -7.24  & 17.65 , -7.24  & 17.15 , -7.24  & 16.65 , -7.24  & 16.15 , -7.24  & 15.65 , -7.24  & 15.15 , -7.24  & 14.65 , -7.24  & 14.15 , -7.24  \\
3.5 & 19.15 , -6.74  & 18.65 , -6.74  & 18.15 , -6.74  & 17.65 , -6.74  & 17.15 , -6.74  & 16.65 , -6.74  & 16.15 , -6.74  & 15.65 , -6.74  & 15.15 , -6.74  & 14.65 , -6.74  & 14.15 , -6.74  \\
4.0 & 19.15 , -6.24  & 18.65 , -6.24  & 18.15 , -6.24  & 17.65 , -6.24  & 17.15 , -6.24  & 16.65 , -6.24  & 16.15 , -6.24  & 15.65 , -6.24  & 15.15 , -6.24  & 14.65 , -6.24  & 14.15 , -6.24  \\
4.5 & 19.15 , -5.74  & 18.65 , -5.74  & 18.15 , -5.74  & 17.65 , -5.74  & 17.15 , -5.74  & 16.65 , -5.74  & 16.15 , -5.74  & 15.65 , -5.74  & 15.15 , -5.74  & 14.65 , -5.74  & 14.15 , -5.74  \\
5.0 & 19.15 , -5.24  & 18.65 , -5.24  & 18.15 , -5.24  & 17.65 , -5.24  & 17.15 , -5.24  & 16.65 , -5.24  & 16.15 , -5.24  & 15.65 , -5.24  & 15.15 , -5.24  & 14.65 , -5.24  & 14.15 , -5.24  \\
\bottomrule
\end{tabular}%
}
\end{table}

\begin{table}[h]
\centering
\caption{Bounds for $\tau_{uc}$ — Outcome: Fiscal balance. $y_{min} = -591.196$, $y_{max} = 1766.979$.}
\label{tab:partial_comb_tauuc_fiscal}
\resizebox{\textwidth}{!}{%
\begin{tabular}{c|ccccccccccc}
\toprule
$c_{2}\backslash c_{1}$ & 0.0 & 0.5 & 1.0 & 1.5 & 2.0 & 2.5 & 3.0 & 3.5 & 4.0 & 4.5 & 5.0 \\
\midrule
0.0 & -2358.18 , 19.15 & -2358.18 , 19.15 & -2358.18 , 19.15 & -2358.18 , 19.15 & -2358.18 , 19.15 & -2358.18 , 19.15 & -2358.18 , 19.15 & -2358.18 , 19.15 & -2358.18 , 19.15 & -2358.18 , 19.15 & -2358.18 , 19.15 \\
0.5 & -2358.18 , 19.65 & -2358.18 , 19.65 & -2358.18 , 19.65 & -2358.18 , 19.65 & -2358.18 , 19.65 & -2358.18 , 19.65 & -2358.18 , 19.65 & -2358.18 , 19.65 & -2358.18 , 19.65 & -2358.18 , 19.65 & -2358.18 , 19.65 \\
1.0 & -2358.18 , 20.15 & -2358.18 , 20.15 & -2358.18 , 20.15 & -2358.18 , 20.15 & -2358.18 , 20.15 & -2358.18 , 20.15 & -2358.18 , 20.15 & -2358.18 , 20.15 & -2358.18 , 20.15 & -2358.18 , 20.15 & -2358.18 , 20.15 \\
1.5 & -2358.18 , 20.65 & -2358.18 , 20.65 & -2358.18 , 20.65 & -2358.18 , 20.65 & -2358.18 , 20.65 & -2358.18 , 20.65 & -2358.18 , 20.65 & -2358.18 , 20.65 & -2358.18 , 20.65 & -2358.18 , 20.65 & -2358.18 , 20.65 \\
2.0 & -2358.18 , 21.15 & -2358.18 , 21.15 & -2358.18 , 21.15 & -2358.18 , 21.15 & -2358.18 , 21.15 & -2358.18 , 21.15 & -2358.18 , 21.15 & -2358.18 , 21.15 & -2358.18 , 21.15 & -2358.18 , 21.15 & -2358.18 , 21.15 \\
2.5 & -2358.18 , 21.65 & -2358.18 , 21.65 & -2358.18 , 21.65 & -2358.18 , 21.65 & -2358.18 , 21.65 & -2358.18 , 21.65 & -2358.18 , 21.65 & -2358.18 , 21.65 & -2358.18 , 21.65 & -2358.18 , 21.65 & -2358.18 , 21.65 \\
3.0 & -2358.18 , 22.15 & -2358.18 , 22.15 & -2358.18 , 22.15 & -2358.18 , 22.15 & -2358.18 , 22.15 & -2358.18 , 22.15 & -2358.18 , 22.15 & -2358.18 , 22.15 & -2358.18 , 22.15 & -2358.18 , 22.15 & -2358.18 , 22.15 \\
3.5 & -2358.18 , 22.65 & -2358.18 , 22.65 & -2358.18 , 22.65 & -2358.18 , 22.65 & -2358.18 , 22.65 & -2358.18 , 22.65 & -2358.18 , 22.65 & -2358.18 , 22.65 & -2358.18 , 22.65 & -2358.18 , 22.65 & -2358.18 , 22.65 \\
4.0 & -2358.18 , 23.15 & -2358.18 , 23.15 & -2358.18 , 23.15 & -2358.18 , 23.15 & -2358.18 , 23.15 & -2358.18 , 23.15 & -2358.18 , 23.15 & -2358.18 , 23.15 & -2358.18 , 23.15 & -2358.18 , 23.15 & -2358.18 , 23.15 \\
4.5 & -2358.18 , 23.65 & -2358.18 , 23.65 & -2358.18 , 23.65 & -2358.18 , 23.65 & -2358.18 , 23.65 & -2358.18 , 23.65 & -2358.18 , 23.65 & -2358.18 , 23.65 & -2358.18 , 23.65 & -2358.18 , 23.65 & -2358.18 , 23.65 \\
5.0 & -2358.18 , 24.15 & -2358.18 , 24.15 & -2358.18 , 24.15 & -2358.18 , 24.15 & -2358.18 , 24.15 & -2358.18 , 24.15 & -2358.18 , 24.15 & -2358.18 , 24.15 & -2358.18 , 24.15 & -2358.18 , 24.15 & -2358.18 , 24.15 \\
\bottomrule
\end{tabular}%
}
\end{table}

\clearpage
\section{Assumptions for the Bootstrap of the KS type Test Statistics}

In this section, we invoke the assumptions under which the tests presented in Section 4.2 are consistent. For the sake of brevity, we present the assumptions required for the consistency of the bootstrap distribution of $KS_{+}$, the assumptions for $KS_{-}$ are analogous. Under these assumptions, the set of Assumptions in Whang (2001) is satisfied, which allows us to extend its bootstrap procedure for the case of series estimators:

\begin{enumerate}
    \item $\left\{ Y_{i,-1},Y_{i,0},Z_{i}\right\}_{i:Z_{i}\geq z_{0}}$ is i.i.d as $\left(Y_{-1},Y_{0},Z\right)$ and $\mathbb{V}\left ( Y_{-1}|Z \right ),\mathbb{V}\left ( Y_{0}|Z \right )$ are bounded on $\mathcal{S}^{+}(Z)$.
    \item The smallest eigenvalue of $\mathbb{E}\left [ p^{K}(Z_{i})p^{K}(Z_{i})^{'} \right ]$ is bounded away from zero uniformly in $K$.
    \item There exists a sequence of constants $\zeta_{0}(K)$ that satisfies $\sup_{z\in\mathcal{S}^{+}(Z)}\left\| p^{K}(z)\right\|\leq \zeta_{0}(K)$, where $\zeta_{0}(K)^{2}/n^{+}\rightarrow 0$ as $n^{+}\rightarrow\infty$.
    \item There is $\alpha>0$ such that $\sup_{z\in\mathcal{S}^{+}(Z)}\left |\mu_{t}(z)-p^{K}(z)^{'}\gamma^{+}_{t} \right |=O(K^{-\alpha})$, for $t\in\left\{-1,0\right\}$.
    \item As $n^{+}\rightarrow\infty$, $K\rightarrow\infty$ and $K/n^{+}\rightarrow 0$.
    \item $\mathbb{E}\left [ \left ( Y_{-1}-\mu_{-1}(z) \right )^{4}|z \right ]$, $\mathbb{E}\left [ \left ( Y_{0}-\mu_{0}(z) \right )^{4}|z \right ]$ are bounded and $\mathbb{V}\left ( Y_{-1}|z \right )$, $\mathbb{V}\left ( Y_{0}|Z \right )$ are bounded away from zero.
    \item $K^{-\alpha}=o((K/n^{+})^{1/2}).$
\end{enumerate}

\end{appendices}

\end{document}